\documentclass[11pt]{article}
\usepackage{array}
\usepackage{mathtools}
\usepackage{setspace}
\usepackage[dvips]{graphicx}
\usepackage{enumerate}
\usepackage[margin=1in]{geometry}
\usepackage{amsthm,amsmath,amssymb}
\usepackage[longnamesfirst,round]{natbib}
\usepackage{natbib}
\usepackage{authblk}
\usepackage{braket}
\usepackage{xargs}
\usepackage{dsfont}
\usepackage[toc,page]{appendix}
\usepackage{color}
\usepackage{float}
\usepackage{tabularx}
\usepackage{tikz}
\usepackage{multirow}
\usepackage{float}

\allowdisplaybreaks[1]

\numberwithin{equation}{section}

\pagestyle{plain}
\newtheorem{theorem}{Theorem}[section]
\newtheorem{lemma}[theorem]{Lemma}

\newtheorem{remark}[theorem]{Remark}

\newtheorem{example}[theorem]{Example}

\newtheorem{conjecture}[theorem]{Conjecture}

\newcommand{\var}{\mbox{Var}}
\newcommand{\D}{{\rm d}}

\newcommand{\PP}{\mathbb{P}}

\newcommand{\EE}{\mathbb{E}}

\newcommand{\Go}{\Rightarrow}

\newcommand{\RR}{\mathbb{R}}

\newcommand{\clb}{\mathcal{B}}
\newcommand{\bars}{\bar{S}}
\newcommand{\bari}{\bar{I}}

\newcommand{\hats}{\hat{S}}
\newcommand{\hati}{\hat{I}}

\newcommand{\pp}{\mathcal{P}}
\newcommand{\cla}{\mathcal{A}}

\newcommand{\clr}{\mathcal{R}}
\newcommand{\diag}{\mbox{diag}}
\newcommand{\hatn}{\hat{N}}
\newcommand{\barn}{\bar{N}}
\newcommand{\X}{{\bf \Phi}}
\newcommand{\B}{{\bf B}}

\title{\bf Approximation Methods for Analyzing Multiscale Stochastic Vector-borne Epidemic Models}
\author[1]{Xin Liu\thanks{Email: xliu9@clemson.edu}}
\author[2]{Anuj Mubayi\thanks{Email: amubayi@asu.edu}}
\author[3]{Dominik Reinhold\thanks{Email: dominik.reinhold@ucdenver.edu}}
\author[1]{Liu Zhu\thanks{Email: liuz@g.clemson.edu}}
\affil[1]{Department of Mathematical Sciences, Clemson University}
\affil[2]{School of Human Evolution and Social Change; Simon A. Levin Mathematical Computational and Modeling Science Center, Arizona State University-Tempe}
\affil[3]{Department of Biostatistics and Informatics, University of Colorado Denver}

\begin{document}

\maketitle

\begin{abstract}

Stochastic epidemic models, generally more realistic than deterministic counterparts, have often been seen too complex for rigorous mathematical analysis because of level of details it requires to comprehensively capture the dynamics of diseases. This problem further becomes intense when complexity of diseasees increases as in the case of vector-borne diseases (VBD). The VBDs are human illnesses caused by pathogens transmitted among humans by intermediate species, which are primarily arthropods.  
In this study, a stochastic VBD model, capturing demographic stochasticity and different host and vector dynamic scales, is developed and novel mathematical methods are described and evaluated to systematically analyze the model and understand its complex dynamics. The VBD model incorporates some relevant features of the VBD transmission process including demographical, ecological and social mechanisms. The analysis is based on dimensional reductions and model simplications via scaling limit theorems. The results suggest that the dynamics of the stochastic VBD depends on a threshold quantity $\clr_0$, the initial size of infectives, and the type of scaling in terms of host population size. The quantity $\clr_0$ for deterministic counterpart of the model, interpreted as threshold condition for infection persistence as is mentioned in the literature for many infectious disease models, can be computed. Different scalings yield different approximations of the model, and in particular, if vectors have much faster dynamics, the effect of the vector dynamics on the host population averages out, which largely reduces the dimension of the model. 
Specific scenarios are also studied using simulations for some fixed sets of parameters to draw conclusions on dynamics. Further stochastic analysis will result in closed formulation of important metrics for disease surveillance such as likelihood of an outbreak and prevalence of a vector borne infectious disease as function of demographical and ecological parameters.\\

{\it {\bf Keywords:} Vector-borne disease model; SIS compartment model; Functional law of large numbers; Functional Central limit theorem; Fast and slow dynamics; multiscale analysis; quasi-stationary distributions; time to extinction. }
\end{abstract}

\section{Introduction}

{\em Vector-Borne Diseases (VBDs)} are infections transmitted by the bite of infected arthropod species, such as mosquitoes, ticks, triatomine bugs, sandflies, and blackflies. It has been shown a century ago that hematophagous (blood-sucking) arthropods transmit particular types of viruses, bacteria, protozoa, and helminths to humans and between animals and humans. Since then, there has been a large number of reports of outbreaks of vector-borne diseases, such as Malaria, Dengue, Chagas diseases, and Leishmaniasis, and the diseases were responsible for more human deaths in the 20th centuries than all other causes combined (cf. \cite{gubler1991insects, philip1973medico}). 
Newly emerging and reemerging vector borne diseases, such as Zika, have recently drawn public attention because of nature of health consequences to new born babies.

Nowadays, changes in land-use, globalization of trade and travel, social upheaval, and intensive new interventions (for example, excessive use of insecticide spraying may change vector behavior and they may become insecticide resistant) have increased the challenges in controlling vector-borne diseases in many regions. VBDs pose serious public health threats throughout the world. According to the World Health Organization (WHO), vector-borne diseases account for more than $17\%$ of all infectious diseases cases, causing more than one million deaths annually (\cite{who2014}). In the USA, west Nile virus, zika, malaria, dengue, chikungunya, eastern equine encephalitis, and St. Louis encephalitis are common diseases that are transmitted by vectors.
The transmission and spread of vector-borne diseases are determined by complex interactions between the hosts (either humans or nonhuman hosts), vectors species (e.g. specific mosquito species), and various pathogens. Important biological properties underlying the transmission of VBDs include survival, development, reproduction of vectors and of pathogens in vectors, vectors' biting rate, and hosts' (humans and nonhumans) behaviors, all of which are associated with environmental conditions and variations (cf. \cite{mubayi2010transmission, pandey2013comparing, towers2016estimate, sheets2010impact, kribs2012role, brauer2016some, gorahava2015optimizing, yong2015agent, malik2012west}). Many of the biological and ecological characteristics of VBDs remain either uncertain or lack enough data to clearly understand its role. 

  {\em Mathematical models} can be used as a tool to systematically help understand the complex behavior of VBD systems in spite of limitation in data related to a VBD. 
The increasing availability of alternate data, from a variety of sources including surveys and entomological field studies, provide the ability to model complex ecosystems enabling human decision-making. Models have the potential to facilitate more accurate assessment of such systems and to provide a basis for more efficient and targeted approaches to treatment and scheduling, through an improved understanding of the disease and transmission dynamics.
Stochastic models 
can be used to incorporate random inherent characteristics of epidemic and provide estimates of variability in model outputs. 
However, the complexity in the models presents many mathematical challenges. The focus of the current study is to provide a unified approach to simplify complex stochastic epidemic models for VBDs, using techniques from probability theory such as the functional law of large numbers (FLLN) and the functional central limit theorem (FCLT). 

There is an extensive literature on stochastic modeling of epidemics. Multivariate Markov jump processes are commonly used in stochastic epidemic models (cf. \cite{bartlett1956deterministic, kurtz1981approximation, ball1983threshold, mubayi2008role}).
Researchers have studied numerous stochastic phenomena such as the distribution of final size of an epidemic (cf. \cite{greenwood2009stochastic}), stochastically sustained oscillations to explain the semi-regular recurrence of outbreaks (cf. \cite{kuske2007sustained}), stochastic amplification of an epidemic (cf. \cite{alonso2007stochastic}), quasi-stationary distributions, which capture variances in endemic states (cf. \cite{allen2008introduction, isham1991assessing, naasell2002stochastic, bani2013influence, VANDOORN20131, Champagnat2016, BRITTON201789}), time to extinction of the disease (cf. \cite{andersson2012stochastic, britton2010stochastic, schwartz2009predicting, mubayi2013contextual}), and critical community sizes needed to have epidemics (cf. \cite{naasell2005new, bartlett1960critical, keeling1997disease, lloyd2005should}). In particular, in both \cite{VANDOORN20131} and \cite{Champagnat2016}, sufficient conditions on the existence and uniqueness of quasi-stationary distributions are studied.

Fluid and diffusion approximations (also known at FLLN and FCLT) are classical approaches in probability to simplify complex stochastic systems. \cite{kurtz1978strong} is a pioneer in developing such approximations for density dependent Markov chains (see also \cite{kurtz1981approximation, Ethier:Kurtz:1986}). Recently, \cite{kang2013} gave a systematic approach for developing FLLNs for multiscale chemical reaction networks. Later, in \cite{kang2014}, the authors provide sufficient conditions for FCLTs of multiscale Markov chains. However, verifying these sufficient conditions for specific complex systems is nontrivial. We show that these sufficient conditions hold in our analysis (for Case II defined below) to establish the FCLT.  In multiscale systems, fluid approximations can achieve dimension reduction via appropriate scaling limit theorems, which can significantly simplify the original complex structure (cf. \cite{kang2013}). It is worth mentioning that dimension reduction methods for VBD models have been studied for deterministic models in literature, although different from the scaling limit theorm approach. For example, \cite{pandey2013comparing} and \cite{Souza2014} used model similar to the VBD model considered here and derived a simple host-only model from a complex vector-host model by assuming that infection dynamics in the vectors are fast compared to those of the hosts.
On the other hand, epidemiological time scales are often used to reduce the dimensionality by identifying components of the model that are evolving naturally at slow, moderate and fast times. These methods are used to corroborate the results of time-scale approximations (see \cite{song2002tuberculosis}).

{\em Procedure in this study:}
To thoroughly explain the methodology, we start with a basic model, the vector-borne SIS model, which has been used to study several vector-borne diseases (cf. \cite{anderson1992infectious}). In the vector-borne SIS model, both host and vector individuals are classified as either {\em susceptible} or {\em infectious}. We assume that the host population size is fixed and is denoted by a positive integer-valued parameter $n$, and the vector population size is a random variable whose mean is $C_0n$, where $C_0$ is a positive integer-valued constant. For each $n$, we have a model and a collection of stochastic processes. We study these models as $n\to\infty$ in two scaling cases. In {\bf Case I}, {hosts and vectors evolve at the same rate} as $n\to\infty$, while in {\bf Case II}, vectors have much faster dynamics than hosts as $n\to\infty.$ For both cases, analogous to FLLN, we develop deterministic processes, which are referred to as {\em fluid limits}, to capture the mean behaviors of the stochastic systems and to study the stability of equilibrium points.
We next, analogous to FCLT, establish {\em diffusion limits}, which are solutions of stochastic differential equations (SDEs), to characterize the statistical fluctuations of the original stochastic systems around their fluid limits. These approximations reduce the dimension of the original system from $3$ to $2$ under Case I, and to $1$ under Case II, which largely simplify the analyses. Especially in Case II, the vector-host system is reduced to a host-only system, where the new transmission parameter is the composite human-to-human transmission rate taking into account both vector-to-host and host-to-vector rates. Comparing the equilibria of the vector-host model and the host-only model can provide understanding of this new transmission parameter in terms of the parameters of the vector-host model.  At last, we apply the fluid and diffusion limits to study the quasi-stationary distribution of the original system. \cite{BRITTON201789} have recently studied a vector-borne SIS model similar to our Case I, but with fixed host and vector population sizes, and they apply similar fluid and diffusion limits to study the quasi-stationary distribution and the extinction times. Model approximations and various sub models are sumarized in Figure \ref{flow_chart}.

\begin{figure}[htbp] 
   \centering
   \includegraphics[width=4in]{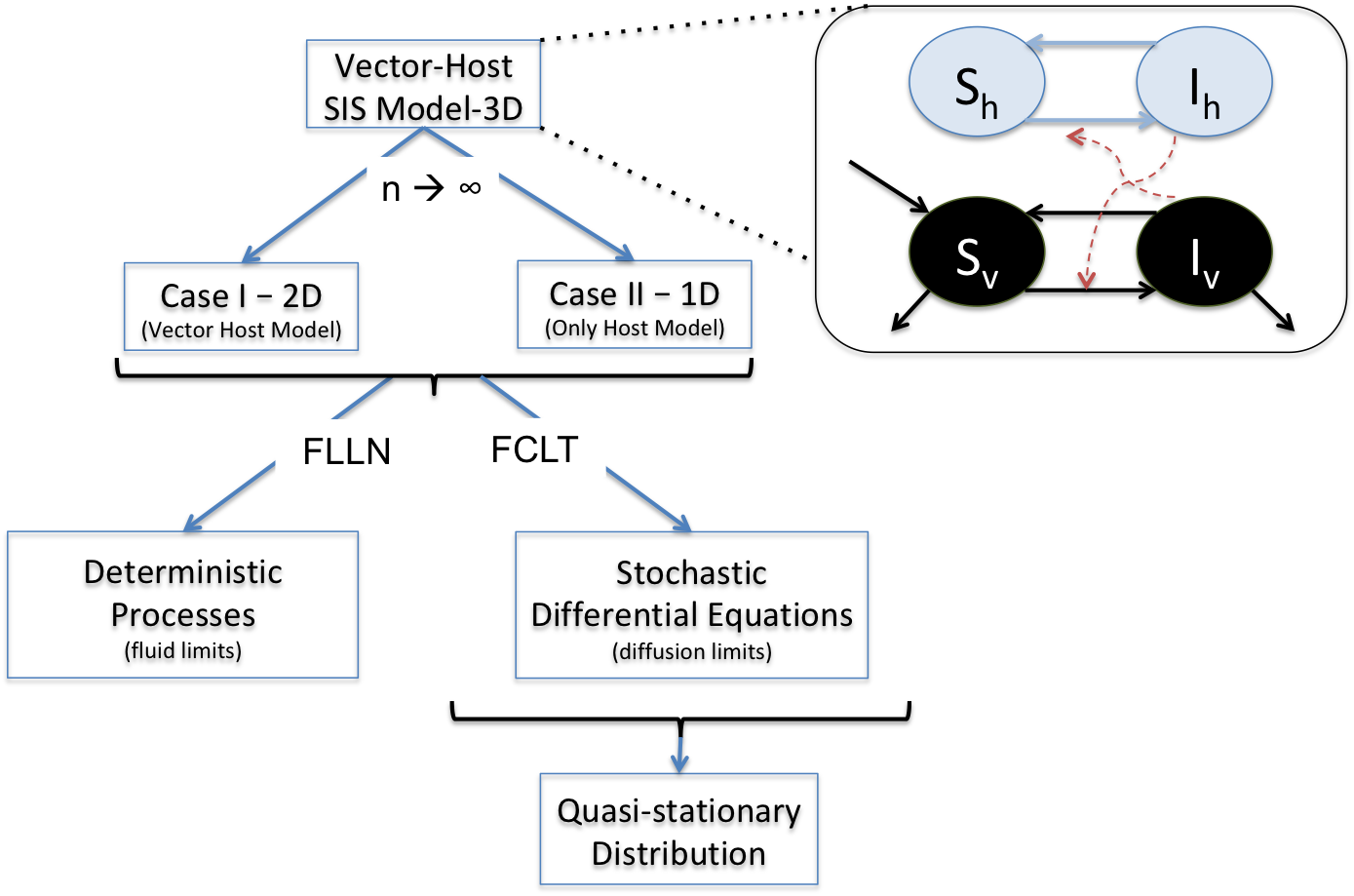} 
   \caption{The original model and its various approximations analyzed in this study}
   \label{flow_chart}
\end{figure}

 {\em The main contribution} of the present paper is as follows. (i) Under the scaling Case II, the stochastic model has fast vector and slow host dynamics. Although there is literature on multiscale deterministic vector-borne epidemic models (see \cite{song2002tuberculosis, pandey2013comparing, Souza2014}), our work is, to our knowledge, the first to study multiscale stochastic vector-borne epidemic models. We rigorously justify the fast and slow scalings, under which the fast vector dynamic is averaged out, and the fluid and diffusion limits for hosts are established. (ii) We provide a convenient and efficient way to study the long time behavior (e.g., quasi-stationary distributions) of the original stochastic system via the fluid and diffusion limits. We draw the following conclusions: When the basic reproduction number $\clr_0$ is greater than $1$: in Case I, since the vector population size has a variance that is linearly growing in time, there is no quasi-stationary distribution, and in Case II, the quasi-stationary distribution is approximately normally distributed. However, we also observe that if the vector population size is fixed, under Case I, the quasi-stationary distribution exists and is approximately normally distributed (also see \cite{BRITTON201789} for similar results).

The rest of the paper is organized as follows. In Section \ref{sec:model}, we introduce the stochastic vector-borne SIS model, explain the two scaling cases, and define the basic reproduction number.  Section \ref{approx} collects all the main results on fluid and diffusion approximations. In Section \ref{sec:quasi}, we study the long time behavior of the stochastic vector-borne SIS model, including quasi-stationary distributions, using the fluid and diffusion limits. We present simulation results in  Section \ref{sec:numerical}. In Section \ref{sec:disc}, a brief discussion is provided. Some fundamental results on equilibrium points of differential equations and matrix exponentials are given in Appendix \ref{app:DE} and \ref{appendix:me}, the stochastic vector-borne SIS model with fixed host and vector population sizes is briefly studied in Appendix \ref{vec-size-fixed}, and all proofs can be found in Appendix \ref{proof}. At last, two notation tables are provided in Appendix \ref{table}.

\section{Methods}\label{sec:model}

\subsection{Model description}
Mathematical models have a long history of providing important insight into disease dynamics and control.
We study the vector-borne SIS model, in which both host and vector individuals are classified as either {\em susceptible} or {\em infectious}. We assume that the infected hosts can recover and become susceptible immediately again, while vectors stay infectious till they die. The host population size is assumed to be fixed and is denoted by the positive integer-valued parameter $n$, and the initial vector population size is equal to $C_0n$, where $C_0$ is a positive constant. Denote by $S_h^n(t)$ and $I_h^n(t)$ the numbers of susceptible and infected hosts at time $t$, and similarly, $S_v^n(t)$ and $I_v^n(t)$ the numbers of susceptible and infected vectors at time $t$.

We assume that the vector biting rate, birth rate, and death rate are all constant, and there is no feeding preference or host competence. Let $\beta^n_h$ denote the disease transmission rate to hosts from a typical infecious vector. Noting that $I^n_v(t)/n$ gives the average number of vectors per host,  we see that $\beta^n_hI^n_v(t)/n$ measures the infection rate for susceptible hosts. Next denote by $\gamma^n_h$ the recovery rate for infected hosts. For vectors, let $\gamma^n_v$ represent the equal birth and death rate per vector, and $\beta^n_v$ be the disease transmission rate to vectors from a typical infectious host. The ratio $I^n_h(t)/n$ can be interpreted as the probability that a vector contacts an infectious host, and so $\beta^n_vI^n_h(t)/n$ measures the infection rate for susceptible vectors.

We model the infection, recovery, birth, and death processes using independent unit-rate Poisson processes (see Theorem 4.1 of Chapter 6 in \cite{Ethier:Kurtz:1986}). More precisely, we have the following system of equations: For $t\ge 0,$
\begin{align}
S_h^n(t) & = S_h^n(0)  - N_1^n\left(\beta_h^n\int_0^t \frac{I_v^n(u) S_h^n(u)}{n} \D u \right)  + N_2^n\left(\gamma_h^n \int_0^t  I_h^n(u) \D u \right), \label{sys-1}\\
I_h^n(t) & = I_h^n(0) + N_1^n\left(\beta_h^n \int_0^t \frac{I_v^n(u) S_h^n(u)}{n} \D u \right)  - N_2^n\left(\gamma_h^n \int_0^t  I_h^n(u) \D u \right), \label{sys-2}
\end{align}
and
\begin{align}
\begin{split}
S_v^n(t) & = S_v^n(0) + N_3^n\left(\gamma_v^n \int_0^t [S^n_v(u) + I^n_v(u)] \D u \right) - N_4^n\left(\beta_v^n  \int_0^{ t} \frac{I_h^n(u) S_v^n(u)}{n} \D u \right) \\
& \quad - N_5^n \left(\gamma_v^n \int_0^t  S_v^n(u) \D u \right),
\end{split} \label{sys-3}\\
I_v^n(t) & = I_v^n(0) + N_4^n\left(\beta_v^n  \int_0^{ t} \frac{I_h^n(u) S_v^n(u)}{n} \D u \right)- N_6^n \left(\gamma_v^n \int_0^t  I_v^n(u) \D u \right), \label{sys-4}
\end{align}
where $N^n_i, i=1, 2, \ldots, 6$, are independent unit-rate Poisson processes, which are independent of the initials $S^n_h(0), I^n_h(0), S^n_v(0)$, and $I^n_v(0)$.

We note that the host population size $S^n_h(t)+I^n_h(t)$ is equal to $n$ for all $t\ge 0$, and the vector population size $S^n_v(t)+I^n_v(t)$ is a random variable. However, the expected vector population size $\EE(S^n_v(t)+I^n_v(t))$ is equal to its initial value $C_0n$ for all $t\ge 0$. The epidemic system can be completely described by a $3$-dimensional process $(I^n_h, S^n_v, I^n_v).$ From the formulation in \eqref{sys-1} -- \eqref{sys-4}, it can be seen that $(I^n_h, S^n_v, I^n_v)$ is a continuous-time Markov chain (CTMC) with infinite state space $\{0, 1, \ldots, n\}\times \mathbb{N}_0 \times \mathbb{N}_0$, where $ \mathbb{N}_0$ is the set of non-negative integers. We also observe that once the infection process $(I^n_h, I_v^n)$ reaches the state $(0, 0)$, it will stay at $(0,0)$ forever, and the process $S^n_v$ will then become a linear birth and death process with equal birth and death rate $\gamma^n_v$ per individual. Some long time properties of $(I^n_h, S^n_v, I^n_v)$ will be studied in Section \ref{sec:quasi}.

\subsection{Asymptotic scales}\label{sec:scaling}

We consider two asymptotic scales as the host population size becomes large, i.e., $n\to\infty$. In Case I, vectors and hosts evolve on the same scale, while in Case II, vectors evolve much faster than hosts. More precisely, let $\beta_h, \gamma_h, \beta_v, \gamma_v$ be positive constants. We have the following two cases.
\begin{itemize}
\item[] Case I: {\em Both hosts and vectors evolve at rate $\mathcal{O}(1)$ as $n\to\infty$.}
\begin{align*}
& \mbox{Host:} \ \beta_h^n \to \beta_h, \ \gamma_h^n \to \gamma_h,\\
& \mbox{Vector:} \ \beta_v^n \to \beta_v, \ \gamma_v^n \to \gamma_v.
\end{align*}

\item[] Case II: {\em  Vectors evolve much faster at rate $\mathcal{O}(\alpha(n))$ and hosts evolve at rate $\mathcal{O}(1)$ as $n\to\infty$.}
\begin{align*}
& \mbox{Host:} \ \beta_h^n \to \beta_h, \ \gamma_h^n \to \gamma_h,\\
& \mbox{Vector:} \ \frac{\beta_v^n}{\alpha(n)} \to \beta_v, \ \frac{\gamma_v^n}{\alpha(n)} \to \gamma_v,
\end{align*}
 where $0 < \alpha(n)/\sqrt{n} \to \infty$ and $\alpha(n)/n \to 0$ as $n\to\infty$.
\end{itemize}

To understand the above scaling cases, let's assume the time unit is one day. During one day, we measure the transmission rate $\beta^n_h$ and recovery rate $\gamma^n_h$ for hosts, and transmission rate $\beta^n_v$ and equal birth and death rate $\gamma^n_v$ for vectors. Due to the interaction between hosts and vectors, as the host population size $n$ varies, these parameters also vary, and so we let the parameters depend on $n$.
The scaling Case I simply says as the host population size grows, these parameters approach some steady values. Under Case II, we observe that the transmission rate $\beta^n_v$ and the birth and death rate $\gamma^n_v$ are much larger than the parameters for hosts, as $n\to\infty$. To understand the scaling parameter $\alpha(n)$, one could sample a sequence of parameter estimates $\{(\beta^n_{h}, \gamma^n_{h}, \beta^n_{v}, \gamma^n_{v})\}_{n=n_0}^N$ for $N-n_0$ different population sizes, and plot the ratio $\gamma^n_{v}/\gamma^n_{h}$ as a function of $n$ to observe the order of $\alpha(n)$, because $\gamma^n_{v}/\gamma^n_{h} \approx \alpha(n) \gamma_h/\gamma_v$. {Mathematically, we require $\alpha(n)/\sqrt{n} \to \infty$ and $\alpha(n)/n \to 0$ as $n\to\infty$, e.g., $\alpha(n)=n^{2/3}$, to develop the scaling limith theorems.}

\subsection{Basic reproduction number}\label{sec:reprod}

The basic reproduction number, $\clr_0$, is defined as the expected number of secondary cases caused by one infectious individual introduced into a susceptible population during his/her infectious period. It is a measure of the success of an invasion into a population; if $\clr_0>1$, a larger outbreak and endemic is possible, whereas if $\clr_0<1$, the infection will certainly die out in the long run. The reproduction number of the VBD is defined in a similar way, and could depend on vector mortality rate, pathogen development rate, and host competence and recovery rate (cf. \cite{lord1996vector}). 

Using next generation matrix approach (\cite{van2002reproduction}), it is straight forward to compute the basic reproduction number, $R_0$ for the deterministic counterpart of the Model \eqref{sys-1} -- \eqref{sys-4}. The $\clr_0$ is given as
$$\mathcal{R}_0 = \sqrt{ \left ( \frac{\beta_h}{\gamma_v} \right ) \cdot \left ( \frac{C_0n}{n} \frac{\beta_v}{\gamma_h} \right )} .$$
We note that $\beta_h/\gamma_v$ represents the average number of newly infected host individuals produced by a typical infectious vector during its mean survival time period, and $C_0\beta_v/\gamma_h$ represents the average number of newly infected vector individuals produced by a typical infectious host during its mean infection time period. Thus, the basic reproduction number $\clr_0$ (geometric mean) is the average number of newly infected host individuals generated by a typical infectious host individual via vector-host and host-vector transmissions.

\section{Model Simplifications: Fluid and Diffusion Approximations}\label{approx}

The transient behavior of $(I^n_h, S^n_v, I^n_v)$ is rather complex and cannot be analyzed easily. In this section, we simplify the orginal epidemic model by establishing the fluid and diffusion approximations through scaling limit theorems for the system equations \eqref{sys-1} -- \eqref{sys-4}. We first consider the fluid scaling, under which the processes are divided by the host population size $n$. These rescaled processes are referred to as {\em fluid scaled processes}. Using FLLN methods, we establish the deterministic limits of the fluid scaled processes in Theorems \ref{lln-1} and \ref{lln-2}. These limits are called {\em fluid limits}, and they capture the average behavior of the fluid scaled processes as the host population size grows to infinity. To characterize the statistical fluctuations of the fluid scaled processes around their fluid limits, we next study the difference between the fluid scaled processes and their fluid limits, which we refer to as the deviation process. We show that the suitably scaled deviation processes converge weakly to SDEs (see Theorems \ref{fclt-1} and \ref{fclt-2}), which will be referred to as {\em diffusion limits}.

\subsection{Fluid approximations}\label{fluid-approx}

We define the following fluid scaled processes. For $t\ge 0,$
\begin{align}\label{fluid-scaling}
 \bar S_h^n(t) = \frac{S_h^n(t)}{n},  \ \bar I_h^n(t) = \frac{I_h^n(t)}{n}, \ \bar S_v^n(t) = \frac{S_v^n( t)}{n}, \ \bar I_v^n(t) = \frac{I_v^n( t)}{n}.
\end{align}
In particular, for $t\ge 0$, the quantities $\bar S_h^n(t)$ and $\bar I_h^n(t)$ represent the densities of susceptible and infected host individuals at time $t$, respectively, and $\bars^n_v(t)$ and $\bar I_v^n(t)$ represent the numbers of susceptible and infected vectors per host at time $t$, respectively.
We also observe that
\begin{align}\label{relation-1}
\bar I_h^n(t) + \bar S_h^n(t) =1, \ \ t\ge 0.
\end{align}
Thus, under the fluid scaling, the system can be completely described by $(\bar I_h^n, \bar S^n_v, \bar I^n_v)$.

In the following, we present the fluid limits in both scaling cases.  The stability properties of the fluid limits are also studied. (The definitions of different stability concepts are provided in Appendix \ref{app:DE}.) In Case I, as the host population size $n\to\infty$, the fluid scaled vector population size $\bar S^n_v(t) + \bar I^n_v(t)$ approaches the constant $C_0$, and {\em the system dimension is reduced to two}.

\begin{theorem}\label{lln-1} In Case I, assume that $\EE[( \bar I_h^n(0), \bar I_v^n(0)) - (i_h(0), i_v(0))]\to 0$, as $n\to\infty$, for some constant vector $ (i_h(0), i_v(0)) \in \{(x,y): 0\le x \le 1, 0\le y\le C_0\}$. Then for any $t\ge 0$, \[\EE\left[\sup_{0\le s \le t}|(\bar I_h^n(s), \bar S_v^n(s), \bar I_v^n(s)) - (i_h(s), s_v(s), i_v(s))|\right]\to 0, \ \mbox{as $n\to\infty$,}\]
where $s_v(t) = C_0 - i_v(t)$,  and $(i_h, i_v)$ is the unique solution of the following ODEs with initial value $(i_h(0), i_v(0))$. For $t\ge 0$,
\begin{align}
\frac{\D i_h(t)}{\D t} & = \beta_h i_v(t) (1- i_h(t)) - \gamma_h i_h(t), \label{ode1-1}\\
\frac{\D i_v(t)}{\D t} & = \beta_v i_h(t) (C_0 - i_v(t)) - \gamma_v i_v(t).\label{ode2-1}
\end{align}

\end{theorem}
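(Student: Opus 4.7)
The plan is to follow the Kurtz-type strong-law approach for density-dependent Markov chains, with an extra truncation step to handle the fact that the vector population $N_v^n$ is not a priori bounded by $n$. First I would write each unit-rate Poisson integral in \eqref{sys-1}--\eqref{sys-4} as its compensator plus a martingale $M_i^n$, then divide by $n$. This yields
\begin{align*}
\bar I_h^n(t) &= \bar I_h^n(0) + \beta_h^n\!\int_0^t \bar I_v^n(u)(1-\bar I_h^n(u))\,du - \gamma_h^n\!\int_0^t \bar I_h^n(u)\,du + \bar{\mathcal M}_h^n(t),\\
\bar I_v^n(t) &= \bar I_v^n(0) + \beta_v^n\!\int_0^t \bar I_h^n(u)\bar S_v^n(u)\,du - \gamma_v^n\!\int_0^t \bar I_v^n(u)\,du + \bar{\mathcal M}_v^n(t),
\end{align*}
and an analogous decomposition $\bar N_v^n(t) = \bar N_v^n(0) + \bar{\mathcal M}_{\mathrm{tot}}^n(t)$ for the total vector density $\bar N_v^n = \bar S_v^n + \bar I_v^n$, each $\bar{\mathcal M}$ being $1/n$ times a compensated Poisson integral.

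The second step controls the vector population. Since vector births and deaths occur at equal per-capita rate $\gamma_v^n$, $\bar N_v^n - C_0$ is a mean-zero martingale with predictable quadratic variation of order $n^{-1}$ on $[0,t]$ in expectation, so Doob's $L^2$ inequality gives $\EE[\sup_{s\le t}(\bar N_v^n(s)-C_0)^2] = O(1/n)$ and hence, for any $K>C_0$, $\PP(\tau_K^n \le t) = O(1/(K^2n))$ where $\tau_K^n := \inf\{s:\bar N_v^n(s)\ge K\}$. On the event $\{\tau_K^n>t\}$ all rates in the representations above are bounded by constants depending only on $K$ and the ambient parameters, so the drift maps are Lipschitz in $(\bar I_h^n,\bar I_v^n)$ with a deterministic Lipschitz constant $L_K$, and a further application of Doob's inequality gives $\EE[\sup_{s\le t\wedge\tau_K^n}|\bar{\mathcal M}_\bullet^n(s)|] = O(n^{-1/2})$ for every relevant martingale component.

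The third step is a standard Gronwall argument on the stopped processes. Subtracting \eqref{ode1-1}--\eqref{ode2-1} from the stopped stochastic equations, and using $\beta_h^n\to\beta_h$, $\gamma_h^n\to\gamma_h$, $\beta_v^n\to\beta_v$, $\gamma_v^n\to\gamma_v$, the initial-condition hypothesis, and the above martingale bound, gives
\[
\EE\!\left[\sup_{s\le t\wedge\tau_K^n}\!\bigl(|\bar I_h^n(s)-i_h(s)|+|\bar I_v^n(s)-i_v(s)|\bigr)\right] \le \varepsilon_n + L_K\!\int_0^t \EE\!\left[\sup_{u\le s\wedge\tau_K^n}\!\bigl(|\bar I_h^n(u)-i_h(u)|+|\bar I_v^n(u)-i_v(u)|\bigr)\right] ds,
\]
with $\varepsilon_n\to 0$. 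Gronwall's inequality, combined with Cauchy--Schwarz on the bad event $\{\tau_K^n\le t\}$ (on which $\bar I_v^n\le\bar N_v^n$ is $L^2$-bounded uniformly in $n$ and the probability is $O(1/(K^2n))$), finishes the convergence for $\bar I_h^n$ and $\bar I_v^n$ after first sending $n\to\infty$ and then $K\to\infty$. Convergence of $\bar S_v^n$ to $s_v = C_0-i_v$ then follows from $\bar S_v^n = \bar N_v^n - \bar I_v^n$ together with the already-established $\bar N_v^n\to C_0$. Existence and uniqueness of a global solution $(i_h,i_v)\in[0,1]\times[0,C_0]$ of the smooth ODE \eqref{ode1-1}--\eqref{ode2-1} follow from Picard--Lindel\"of plus forward invariance of the rectangle (the vector field points inward along the four edges). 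The main difficulty I anticipate is exactly the unboundedness of $N_v^n$: its variance grows linearly in time and all rates in the coupled equations depend on it, so the truncation/martingale bookkeeping at stopping times $\tau_K^n$ in steps two and three is where genuine care is required.
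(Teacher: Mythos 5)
Your proposal is correct, but it follows a genuinely different route from the paper's. The paper proves Theorem \ref{lln-1} by a compactness argument: it verifies the two conditions of Billingsley's $C$-tightness criterion (Theorem \ref{tightness}) for $(\bar I_h^n,\bar S_v^n,\bar I_v^n)$ --- using the first-moment bound $\EE[\sup_{t\le T}(\bar S_v^n(t)+\bar I_v^n(t))]\le C_0+2\gamma_v^n C_0T$ for condition (i) and a truncation at level $K$ inside the modulus-of-continuity estimates (I)--(VI) for condition (ii) --- then identifies every weak limit as the unique solution of \eqref{ode1-1}--\eqref{ode2-1} via the FLLN for the driving Poisson processes and the random time change, and finally upgrades weak convergence to convergence of $\EE[\sup_{s\le t}|\cdot|]$ through a separate second-moment/uniform-integrability bound. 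You instead run the Kurtz-style quantitative scheme: compensator-plus-martingale decomposition, Doob's $L^2$ inequality on the compensated Poisson integrals, a stopping time $\tau_K^n$ to make the rates bounded and the drift Lipschitz, Gronwall, and Cauchy--Schwarz on the event $\{\tau_K^n\le t\}$. Both proofs must confront the same obstacle --- the vector population is not bounded by a multiple of $n$ --- and your treatment of it is arguably sharper: since the total vector density $\bar N_v^n=\bar S_v^n+\bar I_v^n$ is a martingale started at $C_0$ with predictable quadratic variation of order $n^{-1}$, Doob gives $\EE[\sup_{s\le t}(\bar N_v^n(s)-C_0)^2]=O(1/n)$, which is stronger than the paper's first-moment control and makes $\PP(\tau_K^n\le t)\to 0$ for any fixed $K>C_0$ (so the final limit $K\to\infty$ you mention is not even needed). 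What your approach buys is an explicit $O(n^{-1/2})$ rate (up to the negligible bad event) and a direct proof of the $L^1$ sup-norm statement without a separate uniform-integrability step; what the paper's approach buys is machinery (tightness estimates, limit identification via generators/time changes) that it reuses almost verbatim for the multiscale Case II and for the diffusion limits, where clean global Lipschitz--Gronwall bounds are harder to come by.
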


\begin{theorem}\label{lln-stability-I}
The ODEs in \eqref{ode1-1} and \eqref{ode2-1} have two equilibrium points
\[
E^f =(0, 0) \ \ \mbox{and} \ \ E^e = \left(\frac{C_0\beta_h\beta_v - \gamma_h\gamma_v}{C_0\beta_h\beta_v + \beta_v\gamma_h}, \ \ \frac{C_0\beta_h\beta_v - \gamma_h\gamma_v}{\beta_h\beta_v + \beta_h\gamma_v}\right),
\]
and
\begin{itemize}
\item[\rm (i)] the disease free equilibrium $E^f$ is globally asymptotically stable when $\clr_0 \le 1$, it is globally exponenitally stable when $\clr_0 <1$,  and it is unstable when $\clr_0 >1$;
\item[\rm (ii)] when $\mathcal{R}_0 >1$, the endemic equilibrium $E^e$ is locally asymptotically stable, and it is globally asymptotically stable when $(i_h(0), i_v(0)) \in (0, 1] \times (0, C_0]$.
\end{itemize}
\end{theorem}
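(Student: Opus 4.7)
The plan is to first find both equilibria by solving the zero-derivative equations, then handle local stability via the Jacobian, and finally obtain the global statements by constructing a linear Lyapunov function (for $E^f$) and by using the Bendixson--Dulac criterion together with Poincar\'e--Bendixson (for $E^e$). A preliminary observation I will use throughout is that the rectangle $[0,1]\times[0,C_0]$ is forward invariant: on each edge the vector field points inward (for instance $\dot{i_h}=\beta_h i_v\ge 0$ on $\{i_h=0\}$ and $\dot{i_h}=-\gamma_h<0$ on $\{i_h=1\}$, with analogous inequalities for $i_v$), so all subsequent arguments live in this compact region, and any trajectory starting in $(0,1]\times(0,C_0]$ enters and remains in the open interior $(0,1)\times(0,C_0)$.

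To find the equilibria I would eliminate $i_v$ from \eqref{ode1-1} via $i_v=\gamma_h i_h/[\beta_h(1-i_h)]$ and substitute into \eqref{ode2-1}; this produces a factor of $i_h$, yielding $E^f=(0,0)$ together with a linear equation in $i_h$ whose solution gives the stated coordinates of $E^e$. Local stability follows from the Jacobian
$$J(i_h,i_v)=\begin{pmatrix}-\beta_h i_v-\gamma_h & \beta_h(1-i_h)\\ \beta_v(C_0-i_v) & -\beta_v i_h-\gamma_v\end{pmatrix}.$$
At $E^f$ the trace is $-\gamma_h-\gamma_v<0$ and the determinant equals $\gamma_h\gamma_v-C_0\beta_h\beta_v=\gamma_h\gamma_v(1-\mathcal{R}_0^2)$, giving local exponential stability when $\mathcal{R}_0<1$ and a saddle when $\mathcal{R}_0>1$. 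At $E^e$ I would use the equilibrium identities $\beta_h i_v^*(1-i_h^*)=\gamma_h i_h^*$ and $\beta_v i_h^*(C_0-i_v^*)=\gamma_v i_v^*$ (which together yield $\beta_h\beta_v(1-i_h^*)(C_0-i_v^*)=\gamma_h\gamma_v$) to show that the trace is negative and the determinant positive whenever $(i_h^*,i_v^*)\in(0,1)\times(0,C_0)$, giving local asymptotic stability.

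For the global statement at $E^f$ when $\mathcal{R}_0\le 1$, I would try the linear Lyapunov function $V=a i_h+b i_v$ with $a,b>0$; direct differentiation along the flow gives
$$\dot V=(a\beta_h-b\gamma_v)\,i_v+(b\beta_v C_0-a\gamma_h)\,i_h-(a\beta_h+b\beta_v)\,i_h i_v,$$
where the cross term is non-positive on the invariant rectangle. When $\mathcal{R}_0<1$ the strict inequality $\beta_v C_0/\gamma_h<\gamma_v/\beta_h$ lets me pick $r=a/b$ in the non-empty interval $(\beta_v C_0/\gamma_h,\,\gamma_v/\beta_h)$ and a constant $\mu>0$ with $\gamma_v-r\beta_h\ge\mu$ and $\gamma_h-\beta_v C_0/r\ge\mu$; this yields $\dot V\le -\mu V$ and hence global exponential decay, proving global exponential stability. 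When $\mathcal{R}_0=1$ the specific choice $a=\gamma_v$, $b=\beta_h$ makes both linear coefficients vanish, so $\dot V=-\beta_h(\gamma_v+\beta_v)\,i_h i_v\le 0$, and LaSalle's invariance principle together with the fact that the only invariant subset of $\{i_h i_v=0\}$ is $\{E^f\}$ (on $\{i_h=0\}$ one has $\dot{i_h}=\beta_h i_v$, which forces $i_v=0$; symmetrically on $\{i_v=0\}$) gives global asymptotic stability.

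For the global statement at $E^e$ when $\mathcal{R}_0>1$, I would apply the Bendixson--Dulac criterion on the open positive quadrant with multiplier $B(i_h,i_v)=1/(i_h i_v)$: a direct computation yields
$$\frac{\partial(B f_1)}{\partial i_h}+\frac{\partial(B f_2)}{\partial i_v}=-\frac{\beta_h}{i_h^2}-\frac{C_0\beta_v}{i_v^2}<0,$$
ruling out periodic orbits. Combined with forward invariance of the compact rectangle, Poincar\'e--Bendixson forces every $\omega$-limit set of a trajectory starting in $(0,1]\times(0,C_0]$ to be an equilibrium in the closed rectangle; since such trajectories remain in the open interior, the only candidate is $E^e$. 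The main obstacle I anticipate is ruling out that an interior trajectory is captured by the stable manifold of the saddle $E^f$. I plan to handle this by computing the stable eigenvector of the Jacobian at $E^f$: since $\lambda_-<-\gamma_h$ forces the components to have opposite signs, the stable manifold is locally tangent to a direction in the second/fourth quadrant, and a short sign analysis of $\dot i_h$, $\dot i_v$ in the fourth (resp.\ second) quadrant shows that the global stable manifold stays in $\{i_h i_v\le 0\}$ and hence cannot attract any point in the open first quadrant.
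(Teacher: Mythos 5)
Your argument is correct, and for most of the theorem it coincides with the paper's proof: the invariance of $[0,1]\times[0,C_0]$, the computation of the equilibria, the Jacobian analysis for local stability, and the linear Lyapunov function $V=a\,i_h+b\,i_v$ with $a/b$ chosen in $[C_0\beta_v/\gamma_h,\ \gamma_v/\beta_h]$ (strictly inside for the exponential rate, LaSalle at the endpoint case $\clr_0=1$) are exactly what the paper does for part (i). Where you genuinely diverge is the global stability of $E^e$ in part (ii): the paper invokes Theorem 2.1 of \cite{BERETTA1986677}, verifying the symmetrizability condition for a generalized Lotka--Volterra structure with suitable weights $W_1,W_2$, which amounts to a Volterra-type Lyapunov function; you instead use the Dulac multiplier $1/(i_h i_v)$ (whose divergence computation $-\beta_h/i_h^2-C_0\beta_v/i_v^2<0$ is correct) plus Poincar\'e--Bendixson and an eigenvector argument showing the stable manifold of the saddle $E^f$ misses the open first quadrant. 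Your route is more elementary and self-contained but is intrinsically two-dimensional, whereas the Beretta--Capasso machinery would survive in higher-dimensional extensions of the model. One small point of precision: with periodic orbits excluded, Poincar\'e--Bendixson still permits $\omega$-limit sets that are graphics (cycles of equilibria and connecting orbits), not only single equilibria; in your setting these are ruled out because $E^e$ is a sink (so it has no unstable manifold to participate in a graphic) and a homoclinic loop at $E^f$ would require its stable manifold to re-enter the open quadrant, which your eigenvector computation forbids --- worth stating explicitly, but not a gap in substance.
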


In Case II, the vectors evolve much faster than the hosts as $n\to\infty.$ In fact, as $n\to\infty$, at each time point $t$, the vectors stay in their equilibrium state, which depends on the state of the hosts at $t$. Accordingly, the system state can be determined by the state of the hosts, and {\em the system dimension is reduced to one}. To characterize the equlibrium state of $(\bars^n_v, \bari^n_v)$, we define for $t\ge 0$ and a Borel set $E\subset \RR_+^2$, a measure $\Gamma^n$ on $[0,t]\times E$ as follows:
\[
\Gamma^n([0,t]\times E) = \int_0^t 1_{\{(\bars^n_v(s),\bar I_v^n(s)) \in E\}} \D s.
\]

\begin{theorem}\label{lln-2} Under Case II, assume that $\EE[|(\bar I_h^n(0), \bari^n_v(0)) - (i_h(0), i_v(0))|] \to 0$, as $n\to\infty$, for some constant vector $(i_h(0), i_v(0))\in \{(x,y,z): 0\le x \le 1, 0\le y\le C_0\}$. Then for any $t\ge 0$,
\[ \EE\left[\sup_{0\le s \le t}|\bar I_h^n(s) - i_h(s)|\right] \to 0, \ \mbox{as $n\to\infty$,}\] where $i_h$ is the unique solution to the following ODE with initial value $i_h(0)$. For $t\ge 0,$
\begin{align}
\frac{\D i_h(t)}{\D t} & = \frac{C_0\beta_h\beta_v i_h(t)  (1- i_h(t))}{\beta_v i_h(t) + \gamma_v} - \gamma_h i_h(t). \label{ode-2}
\end{align}
Furthermore, for $E\in \mathcal{B}(\RR_+^2)$,
\begin{align}\label{fast-vectors}
\Gamma^n([0,t]\times E) \to \int_0^t 1_{\left\{\left(\frac{C_0\gamma_v}{\beta_vi_h(t)+\gamma_v}, \ \frac{C_0\beta_vi_h(t)}{\beta_vi_h(t)+\gamma_v}\right)\in E\right\}} \D s.
\end{align}
\end{theorem}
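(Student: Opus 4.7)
The plan is to invoke the stochastic averaging framework for multiscale Markov chains (cf.\ \cite{kang2013}): under Case II the vector chain jumps at rate $\mathcal{O}(\alpha(n)n)$ while the host chain jumps at rate $\mathcal{O}(n)$, so on the natural host timescale the vectors relax quasi-statically to an equilibrium determined by the current host density. I will (i) decompose every process as drift plus martingale and show the martingale remainders vanish, (ii) establish joint tightness of $(\bar I_h^n,\Gamma^n)$ and identify any subsequential limit of $\Gamma^n$ via a test-function argument on the fast coordinates, and (iii) pass to the limit in the host equation and conclude by uniqueness of the limiting ODE. Both assertions in the theorem will then fall out of the same identification of $\mu_s$.

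Writing $N_i^n(x)=x+M_i^n(x)$ in \eqref{sys-1}--\eqref{sys-4} and dividing by $n$ yields, in particular,
\begin{align*}
\bar I_h^n(t) &= \bar I_h^n(0) + \beta_h^n\int_0^t \bar I_v^n(s)(1-\bar I_h^n(s))\,\D s - \gamma_h^n\int_0^t \bar I_h^n(s)\,\D s + \tilde M_h^n(t),\\
\bar S_v^n(t)+\bar I_v^n(t) &= C_0 + \tilde M_\Sigma^n(t),
\end{align*}
together with analogous decompositions for $\bar S_v^n$ and $\bar I_v^n$ individually. Aggregate jump rates are $\mathcal{O}(n)$ for the host and $\mathcal{O}(\alpha(n)n)$ for the vectors, so $\langle\tilde M_h^n\rangle_t=\mathcal{O}(1/n)$ and $\langle\tilde M_\Sigma^n\rangle_t=\mathcal{O}(\alpha(n)/n)$; both vanish by $\alpha(n)/n\to 0$. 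Consequently $\bar S_v^n+\bar I_v^n\to C_0$ uniformly on compact time sets in probability, which confines $\bar V^n:=(\bar S_v^n,\bar I_v^n)$ to a fixed compact $K\subset\RR_+^2$ with high probability. Hence $\bar I_h^n$ is $C$-tight in $D_\RR[0,\infty)$ (bounded values and bounded drift), and the occupation measures $\Gamma^n$ are tight on $[0,T]\times K$ by Prohorov's theorem. Extract a joint subsequential limit $(i_h,\Gamma)$ and disintegrate $\Gamma(\D s,\D y\,\D z)=\D s\otimes\mu_s(\D y\,\D z)$.

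For any $f\in C_c^2(\RR^2)$ depending only on the vector coordinates, the martingale problem reads $f(\bar V^n(t))=f(\bar V^n(0))+\int_0^t \mathcal{A}_v^n f(\bar V^n(s);\bar I_h^n(s))\,\D s+M_f^n(t)$, and a short Taylor expansion in the jump size $1/n$ gives $\mathcal{A}_v^n f=\alpha(n)\mathcal{A}^\infty f+\mathcal{O}(\alpha(n)/n)$ with $\mathcal{A}^\infty f(y,z;h)=(\gamma_v z-\beta_v h y)(\partial_y f-\partial_z f)$, the generator of the deterministic fast flow. Dividing by $\alpha(n)$ the LHS and initial term are $\mathcal{O}(1/\alpha(n))$ and $\langle M_f^n/\alpha(n)\rangle_t=\mathcal{O}(1/(n\alpha(n)))\to 0$, so in the limit
\[
\int_0^t\!\!\int \mathcal{A}^\infty f(y,z;i_h(s))\,\mu_s(\D y\,\D z)\,\D s = 0 \qquad \forall\, t\ge 0,\; f\in C_c^2.
\]
Thus for a.e.\ $s$, $\mu_s$ is invariant for the deterministic flow $\dot y=\gamma_v z-\beta_v i_h(s)y$, $\dot z=\beta_v i_h(s)y-\gamma_v z$. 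Combined with $\operatorname{supp}(\mu_s)\subset\{y+z=C_0\}$, the flow reduces on this segment to the scalar monotone ODE $\dot z=\beta_v i_h(s)(C_0-z)-\gamma_v z$ whose unique globally stable fixed point is $z^\ast(s)=C_0\beta_v i_h(s)/(\beta_v i_h(s)+\gamma_v)$. The only invariant probability measure on $\{y+z=C_0\}$ is therefore the Dirac mass at $(C_0\gamma_v/(\beta_v i_h(s)+\gamma_v),\,z^\ast(s))$, which is precisely \eqref{fast-vectors}.

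Rewriting the host equation via $\Gamma^n$ as
\[
\bar I_h^n(t)=\bar I_h^n(0)+\beta_h^n\!\int\!\!\int_0^t z(1-\bar I_h^n(s))\,\Gamma^n(\D s,\D y\,\D z)-\gamma_h^n\!\int_0^t \bar I_h^n(s)\,\D s+\tilde M_h^n(t),
\]
and passing to the limit using joint weak convergence with the bounded continuous integrand $z(1-\bar I_h^n(s))$ produces \eqref{ode-2} in integrated form. The right-hand side of \eqref{ode-2} is Lipschitz in $i_h\in[0,1]$ (denominator bounded below by $\gamma_v>0$), so uniqueness of the ODE forces every subsequence to share the limit $i_h$; continuity of $i_h$ plus the $[0,1]$ bound upgrades convergence in probability in $D_\RR$ to the $\EE[\sup]$ convergence stated. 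The chief obstacle is the averaging step: one must rule out invariant measures of the fast flow other than the Dirac. Generic deterministic flows admit many invariant probability measures (supported on periodic orbits, heteroclinic cycles, etc.), but here the a priori conservation $\bar S_v^n+\bar I_v^n\to C_0$ reduces the dynamics to a monotone scalar ODE on which the Dirac is the unique invariant measure; establishing this conservation uniformly in $n$, rather than merely in the limit, is exactly what forces the upper bound $\alpha(n)/n\to 0$ in the hypothesis.
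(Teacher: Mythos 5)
Your proposal is correct and follows essentially the same route as the paper's proof: the Kurtz stochastic-averaging scheme with occupation measures $\Gamma^n$, vanishing martingale brackets (using $\alpha(n)/n\to 0$), joint tightness of $(\bar I_h^n,\Gamma^n)$, identification of the disintegrated limit $\mu_s$ as an invariant measure of the fast vector flow supported on $\{y+z=C_0\}$, and closure via uniqueness of the limiting ODE. The only difference is cosmetic and lies in how $\mu_s$ is pinned down: the paper plugs an explicit quadratic test function into the invariance identity to obtain $\int(\beta_v i_h(s)x_2-\gamma_v x_3)^2\,\mu_s(\D x)=0$ directly, whereas you invoke uniqueness of the invariant probability measure for the monotone scalar flow on the segment $\{y+z=C_0\}$ --- a fact whose standard proof is precisely that quadratic test-function computation.
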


\begin{theorem}\label{lln-stability-II}
The ODE in \eqref{ode-2} has two equilibrium points
\[
E^f = 0 \ \ \mbox{and} \ \ E^e = \frac{C_0\beta_h\beta_v -\gamma_h\gamma_v}{C_0\beta_h\beta_v + \beta_v\gamma_h},
\]
and
\begin{itemize}
\item[\rm (i)] the disease free equilibrium $E^f$ is globally asymptotically stable when $\clr_0\le 1$, it is globally exponentially stable when $\clr_0 < 1$, and it is unstable when $\clr_0 >1$;
\item[\rm (ii)] when $\mathcal{R}_0 >1$, the endemic equilibrium $E^e$ is locally asymptotically stable, and it is globally asymptotically stable when $i_h(0)\in (0,1]$.
\end{itemize}

\end{theorem}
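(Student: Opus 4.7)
The plan is to exploit the one-dimensional nature of the ODE \eqref{ode-2} by factoring its right-hand side as $f(x) := x\,g(x)$, where
\[
g(x) \;=\; \frac{C_0\beta_h\beta_v(1-x)}{\beta_v x + \gamma_v} \;-\; \gamma_h.
\]
The equilibria are the roots of $xg(x)=0$: one is $x=0$, and $g(x)=0$ is linear in $x$, rearranging immediately to the stated formula for $E^e$. Since $\mathcal{R}_0^2 = C_0\beta_h\beta_v/(\gamma_h\gamma_v)$, we have $E^e\in(0,1]$ precisely when $\mathcal{R}_0>1$, matching the claim.

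The engine for the stability analysis is the observation that $g$ is strictly decreasing on $[0,1]$: the derivative of $(1-x)/(\beta_v x+\gamma_v)$ equals $-(\beta_v+\gamma_v)/(\beta_v x+\gamma_v)^2<0$. Note also that $f(0)=0$ and $f(1)=-\gamma_h<0$, so $[0,1]$ is positively invariant. When $\mathcal{R}_0<1$ one has $g(0)=\gamma_h(\mathcal{R}_0^2-1)<0$; monotonicity gives $g(x)\le g(0)$ on $[0,1]$, hence $\dot i_h\le g(0)\, i_h$, and Gronwall's inequality yields $i_h(t)\le i_h(0)e^{g(0)t}$, proving global exponential stability of $E^f$. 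When $\mathcal{R}_0=1$, $g(0)=0$ but strict monotonicity still forces $g(x)<0$ for $x\in(0,1]$, so $\dot i_h<0$ whenever $i_h>0$; a one-dimensional monotone-convergence argument then gives $i_h(t)\to 0$, so $E^f$ is globally asymptotically stable (though the linearization is degenerate and no exponential rate is available).

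When $\mathcal{R}_0>1$ we have $g(0)>0$ and $g(1)=-\gamma_h<0$, so by strict monotonicity $g$ has a unique zero in $(0,1)$, which must be $E^e$. Consequently $f>0$ on $(0,E^e)$ and $f<0$ on $(E^e,1]$, so from any $i_h(0)\in(0,1]$ the trajectory is monotone and bounded, hence converges, and the only possible limit is $E^e$; this gives global asymptotic stability on $(0,1]$. Local asymptotic stability of $E^e$ also follows from $f'(E^e)=E^e g'(E^e)<0$, and the instability of $E^f$ is immediate from $f'(0)=g(0)>0$.

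The main obstacle is the borderline case $\mathcal{R}_0=1$: the Jacobian at $E^f$ vanishes, so standard hyperbolic linearization arguments fail, and the strict-monotonicity plus monotone-convergence reasoning sketched above is essential. Everything else reduces to careful sign analysis of the factored right-hand side and a single Gronwall estimate.
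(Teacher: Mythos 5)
Your proof is correct, and it takes a somewhat different route from the paper's. The paper disposes of Theorem \ref{lln-stability-II} in two sentences at the end of the combined proof of Theorems \ref{lln-2} and \ref{lln-stability-II}: local stability by linearization, and global/exponential stability via the Lyapunov functions $L(i_h)=i_h$ and $L(i_h)=i_h-i_h^*-i_h^*\log(i_h/i_h^*)$ (note the paper's text appears to swap which function is assigned to $E^f$ and which to $E^e$; the linear one is the natural choice for the disease-free equilibrium and the Volterra-type logarithmic one for the endemic equilibrium). You instead run a scalar phase-line argument: factor the right-hand side as $i_h\,g(i_h)$, show $g$ is strictly decreasing, and read off everything from sign analysis plus monotone convergence, with a Gronwall estimate supplying the exponential rate when $\mathcal{R}_0<1$. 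The two arguments are close in substance --- computing $\dot L$ for either of the paper's Lyapunov functions reduces to exactly the sign facts about $g$ that you establish --- but yours is self-contained and, in particular, handles the degenerate case $\mathcal{R}_0=1$ explicitly (where the linearization at $E^f$ vanishes and LaSalle-type reasoning is needed), which the paper leaves implicit. One trivial quibble: $E^e$ lies in the open interval $(0,1)$ when $\mathcal{R}_0>1$, not merely in $(0,1]$; this does not affect anything downstream.
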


\begin{remark}\hfill
\begin{itemize}
\item[\rm (i)] In scaling Case I, the hosts and vectors evolve at the same rate as $n\to\infty$ towards their fluid limits defined in \eqref{ode1-1} and \eqref{ode2-1}. The equilibrium points of these ODEs are derived by letting the derivative on the LHS of \eqref{ode1-1} and \eqref{ode2-1} be $0$, and solve $\beta_h i_v^* (1- i_h^*) - \gamma_h i_h^* =0$ and $\beta_v i_h^*(C_0 - i_v^*) - \gamma_v i_v^* =0$ for $(i^*_h, i^*_v).$
\item[\rm (ii)]  {In scaling Case II, the vectors have much shorter life cycles. From \eqref{fast-vectors}, in the fluid limit at time $t>0$, the vectors are in the quasi-equilibrium state $(\frac{C_0\gamma_v}{\beta_vi_h(t)+\gamma_v}, \ \frac{C_0\beta_vi_h(t)}{\beta_vi_h(t)+\gamma_v})$, which depends on the state of the hosts $i_h(t).$}
\end{itemize}
\end{remark}

\subsection{Diffusion approximations} \label{diff-approx}

In this section, we characterize the statistical fluctuations of the fluid scaled processes around their fluid limits that are developed in Section \ref{fluid-approx}. To achieve this, we define the diffusion scaled processes: For $t\ge 0$,
\begin{equation}\label{diff-scaling}
\begin{aligned}
\hat S^n_h(t) & = \sqrt{n} (\bar S^n_h(t) - s_h(t)), \\
\hat I^n_h(t) & = \sqrt{n} (\bar I^n_h(t) - i_h(t)), \\
\hat S^n_v(t) & = \sqrt{n} (\bar S^n_v(t) - s_v(t)), \\
\hat I^n_v(t) & = \sqrt{n} (\bar I^n_v(t) - i_v(t)).
\end{aligned}
\end{equation}
From \eqref{relation-1}, and the fact that $i_h(t)+s_h(t)=1$, we note that $\hat I^n_h(t) + \hat S^n_h(t) = 0$, and so it suffices to study $(\hat I^n_h, \hat S^n_v, \hat I^n_v).$ Further noting that $S^n_v(0)+ I^n_v(0) = C_0 n$ and $s_v(t) + i_v(t) = C_0$ for $t\ge 0$, so we have $\hats^n_v(0) + \hati^n_v(0) =0.$

\begin{theorem}\label{fclt-1} Consider Case I, and assume that $(\hat I^n_h(0), \hat S^n_v(0), \hat I^n_v(0))$ {converges in distribution to some random variable} $(\hat I_h(0), \hat S_v(0), \hat I_v(0))$ with $\hats_v(0) + \hati_v(0)=0$, and that
\begin{align}\label{fclt-cond-1}
\sqrt{n}(\beta^n_h - \beta_h) \to \hat\beta_h, \ \ \sqrt{n}(\gamma^n_h - \gamma_h) \to \hat\gamma_h, \ \ \sqrt{n}(\beta^n_v - \beta_v) \to \hat\beta_v, \ \ \sqrt{n}(\gamma^n_v - \gamma_v) \to \hat\gamma_v.
\end{align}
 Then $(\hat I_h^n, \hat S^n_v, \hat I_v^n)$ converges in distribution to $(\hat I_h, \hat S_v, \hat I_v)$, where $(\hati_h, \hat S_v, \hati_v)$ is the unique solution to the following SDEs: For $t\ge 0,$
\begin{align}
\hat I_h(t) & = \hat I_h(0) + \int_0^t [\beta_hs_h(u)\hat I_v(u) - (\beta_h i_v(u)+\gamma_h)\hat I_h(u)]\D u+ \int_0^t [\hat\beta_h i_v(u)s_h(u)-\hat\gamma_hi_h(u)]\D u, \label{host-limit-1-1}\\
& \quad  + \int_0^t \sqrt{\beta_h i_v(u)s_h(u) + \gamma_h i_h(u)} \D B_1(u), \label{host-limit-1-2}\\
\hat S_v(t) & = \hat S_v(0) + \int_0^t [-\beta_v s_v(u)\hat I_h(u) - \beta_v i_h(u)\hat S_v(u) + \gamma_v \hat I_v(u)]\D u+\int_0^t [-\hat\beta_v i_h(u)s_v(u)+\hat\gamma_vi_v(u)]\D u, \nonumber \\
& \quad  + \int_0^t \sqrt{\gamma_v(C_0 + s_v(u))} \D B_2(u)- \int_0^t  \sqrt{\beta_vi_h(u)s_v(u)} \D B_3(u),\nonumber \\
\hat I_v(t) & = \hat I_v(0) + \int_0^t [\beta_v s_v(u)\hat I_h(u) + \beta_v i_h(u)\hat S_v(u) - \gamma_v \hat I_v(u)]\D u +  \int_0^t [\hat\beta_v i_h(u)s_v(u)-\hat\gamma_vi_v(u)]\D u\nonumber \\
& \quad  +   \int_0^t \sqrt{\beta_v i_h(u)s_v(u)} \D B_3(u) +\int_0^t  \sqrt{\gamma_v i_v(u)} \D B_4(u),\nonumber
 \end{align}
 with $B_i, i=1,2,3,4,$ being four independent standard Brownian motions, which are independent of $(\hati_h(0), \hats_v(0), \hati_v(0))$.
\end{theorem}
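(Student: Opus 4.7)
The plan is to work directly from the Poisson representation in (2.1)--(2.4), subtract the fluid-limit ODEs of Theorem \ref{lln-1}, multiply by $\sqrt{n}$, and decompose the result into a linear drift in the diffusion-scaled variables, a deterministic forcing driven by the parameter perturbations \eqref{fclt-cond-1}, and a sum of centered Poisson martingales. The limiting SDE will then emerge from the martingale FCLT combined with Theorem \ref{lln-1} and a Gronwall/continuous-mapping argument applied to the prelimit equations.

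Concretely, write each unit-rate Poisson process as $N_i^n(\lambda) = \lambda + \tilde N_i^n(\lambda)$, with $\tilde N_i^n$ a mean-zero martingale. Dividing \eqref{sys-2} by $n$, subtracting \eqref{ode1-1}, and multiplying by $\sqrt{n}$, the compensator contribution splits as
\[
\sqrt{n}\bigl[\beta_h^n \bar I_v^n \bar S_h^n - \beta_h i_v s_h\bigr] = \sqrt{n}(\beta_h^n - \beta_h)\,\bar I_v^n \bar S_h^n + \beta_h\bigl[\hat I_v^n \bar S_h^n + i_v \hat S_h^n\bigr].
\]
By Theorem \ref{lln-1} and \eqref{fclt-cond-1}, the first summand converges to $\hat\beta_h i_v s_h$, while $\bar S_h^n \to s_h$ in probability turns the second into the linear drift $\beta_h s_h \hat I_v - \beta_h i_v \hat I_h$ (using $\hat S_h^n = -\hat I_h^n$). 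An analogous decomposition of the recovery, biting, birth, and death compensators in \eqref{sys-1}--\eqref{sys-4} supplies the remaining drift coefficients in the $\hat S_v$ and $\hat I_v$ equations; in particular, the birth term $\gamma_v^n(S_v^n + I_v^n)$ and the death term $\gamma_v^n S_v^n$ combine to produce the drift $\gamma_v \hat I_v$ in $\D\hat S_v$.

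For the noise, the rescaled centered Poissons
\[
\hat M_i^n(t) = \frac{1}{\sqrt{n}}\tilde N_i^n\!\left(n \int_0^t \lambda_i^n(u)\,\D u\right), \quad i=1,\dots,6,
\]
have predictable quadratic variations $\langle \hat M_i^n\rangle(t) = \int_0^t \lambda_i^n(u)\,\D u$, which by Theorem \ref{lln-1} converge to $\int_0^t \lambda_i(u)\,\D u$, while independence of $N_1^n,\dots,N_6^n$ forces all cross-variations to vanish. The martingale FCLT (Theorem 7.1.4 of \cite{Ethier:Kurtz:1986}) then yields joint convergence of $(\hat M_1^n,\dots,\hat M_6^n)$ to a six-dimensional Gaussian martingale $(I_1,\dots,I_6)$ with $I_i(t) = \int_0^t \sqrt{\lambda_i(u)}\,\D W_i(u)$ for independent standard Brownian motions $W_i$. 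Repackaging these six integrals as four independent Brownian motions via Levy's characterization, e.g.
\[
\int_0^t \sqrt{\beta_h i_v s_h + \gamma_h i_h}\,\D B_1(u) = \int_0^t \sqrt{\beta_h i_v s_h}\,\D W_1(u) - \int_0^t \sqrt{\gamma_h i_h}\,\D W_2(u),
\]
recovers exactly the noise structure in the statement. Since the limiting SDE has continuous, time-dependent, linear coefficients, existence, pathwise uniqueness, and continuous dependence on the driving data are standard, which together with tightness of $(\hat I_h^n, \hat S_v^n, \hat I_v^n)$ identifies the weak limit.

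The main obstacle is the bookkeeping needed to verify the joint martingale FCLT with state-dependent random time changes and then to repackage the resulting six-dimensional Gaussian martingale in the four-Brownian-motion form dictated by the statement. A secondary technicality is to control the remainder $\hat I_v^n\bar S_h^n - s_h \hat I_v^n = \hat I_v^n \hat S_h^n / \sqrt{n}$ and its analogues, which requires an a priori $O_p(1)$ bound on the diffusion-scaled processes; this bound follows from a Gronwall argument applied to the prelimit integral equation using the Lipschitz structure of the drift together with the assumed tightness of the initial condition.
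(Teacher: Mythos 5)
Your proposal is correct and follows essentially the same route as the paper's proof: decompose the diffusion-scaled prelimit into a linear drift, a parameter-perturbation forcing from \eqref{fclt-cond-1}, and centered Poisson martingales; identify the martingale limits via the Poisson FCLT with random time change and Theorem \ref{lln-1}; establish $C$-tightness by Gronwall; and conclude by uniqueness of the limiting linear SDE. The only cosmetic difference is that you invoke the martingale FCLT on the time-changed compensated Poissons directly, whereas the paper first takes the FCLT for the unit-rate processes and then applies the random time change theorem -- these are equivalent.
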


\begin{theorem}\label{fclt-2}
Consider Case II, and assume that $\hat I^n_h(0)$ converges in distribution to some random variable $\hat I_h(0)$, and that
\begin{align}\label{fclt-cond-2}
\sqrt{n}(\beta^n_h - \beta_h) \to \hat\beta_h, \ \sqrt{n}(\gamma^n_h - \gamma_h) \to \hat\gamma_h, \ \sqrt{n}\left(\frac{\beta^n_v}{\alpha(n)} - \beta_v\right) \to \hat\beta_v, \ \sqrt{n}\left(\frac{\gamma^n_v}{\alpha(n)} - \gamma_v\right) \to \hat\gamma_v.
\end{align}
Then $\hat I_h^n$ converges weakly to $\hat I_h$, where $\hat I_h$ is the unique solution of the following SDE: For $t\ge 0,$
\begin{align}
{\hati_h(t) = \hati_h(0) + \int_0^t D(i_h(u), \hati_h(u)) \D u + \int_0^t \sqrt{ \beta_hi^*_v(u)(1- i_h(u)) + \gamma_h i_h(u)} \ \D B(u), \label{host-limit-2-1} }
\end{align}
with $B$ being a standard Brownian motion independent of $\hati_h(0)$, $i^*_v(t) = \frac{C_0\beta_v i_h(t)}{\beta_v i_h(t)+\gamma_v}$, and
\begin{equation}\label{host-limit-2-2}
\begin{aligned}
& D(i_h(t), \hati_h(t)) \\
& =\hat\beta_h i_v^*(t)(1-i_h(t)) - \hat\gamma_h i_h(t) \\
& \quad + \frac{\beta_h(1- i_h(t))(C_0 \hat\beta_v i_h(t) - \hat\beta_vi_h(t)i^*_v(t) - \hat\gamma_v i^*_v(t))}{\beta_v i_h(t)+\gamma_v}  \\
& \quad +\frac{((C_0\beta_h\beta_v - \gamma_h\gamma_v)\gamma_v - (C_0\beta_h\beta_v + \gamma_h\beta_v)\beta_v i_h(t)^2-2(C_0\beta_h\beta_v + \gamma_h\beta_v)\gamma_v i_h(t))\hati_h(t)}{(\beta_v i_h(t)+\gamma_v)^2}.
\end{aligned}
\end{equation}

\end{theorem}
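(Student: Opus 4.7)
My plan is to apply the multiscale FCLT framework of \cite{kang2014} by combining an It\^o-type martingale decomposition of $\hat I_h^n$ with a Taylor expansion of the drift and an averaging estimate that exploits the fast equilibration of the vector component. I will start from \eqref{sys-2}, divide by $n$, subtract the integral form of \eqref{ode-2}, and multiply by $\sqrt n$, to write
\begin{equation*}
\hat I_h^n(t)=\hat I_h^n(0)+\int_0^t\sqrt n\,\Delta^n(u)\,\D u+\hat M^n(t),
\end{equation*}
where $\hat M^n$ is an $n^{-1/2}$-scaled compensated-Poisson martingale built from $N_1^n,N_2^n$, and $\Delta^n(u)=\beta_h^n\bar I_v^n(u)(1-\bar I_h^n(u))-\gamma_h^n\bar I_h^n(u)-[\beta_h i_v^*(u)(1-i_h(u))-\gamma_h i_h(u)]$.

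Next, introducing the $n$-level instantaneous vector equilibrium $i_v^{*,n}(x)=C_0(\beta_v^n/\alpha(n))x/[(\beta_v^n/\alpha(n))x+(\gamma_v^n/\alpha(n))]$, I split $\sqrt n\,\Delta^n$ into four contributions: (i) a parameter-fluctuation piece $\sqrt n(\beta_h^n-\beta_h)\bar I_v^n(1-\bar I_h^n)-\sqrt n(\gamma_h^n-\gamma_h)\bar I_h^n$, which by \eqref{fclt-cond-2} and Theorem \ref{lln-2} tends to $\hat\beta_h i_v^*(1-i_h)-\hat\gamma_h i_h$; (ii) a vector-parameter correction $\beta_h(1-\bar I_h^n)\sqrt n[i_v^{*,n}(i_h)-i_v^*(i_h)]$, which by first-order Taylor in $(\beta_v^n/\alpha(n),\gamma_v^n/\alpha(n))$ reproduces the second line of $D(i_h,\hat I_h)$; (iii) a linear-feedback piece $\{\beta_h[(1-i_h)(i_v^*)'(i_h)-i_v^*(i_h)]-\gamma_h\}\hat I_h^n$ obtained by Taylor expansion of $\bar I_h^n$ and $i_v^{*,n}(\bar I_h^n)$ about $i_h$; and (iv) an averaging residual
\begin{equation*}
\Lambda^n(u)=\beta_h(1-\bar I_h^n(u))\sqrt n\bigl[\bar I_v^n(u)-i_v^{*,n}(\bar I_h^n(u))\bigr].
\end{equation*}
A direct rearrangement using $(i_v^*)'(i_h)=C_0\beta_v\gamma_v/(\beta_v i_h+\gamma_v)^2$ shows that (i)--(iii) sum to the drift $D(i_h(u),\hat I_h^n(u))$ in \eqref{host-limit-2-2}, up to Taylor remainders that vanish in probability by Theorem \ref{lln-2}.

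The main obstacle will be controlling the averaging residual, i.e.\ showing $\int_0^t\Lambda^n(u)\,\D u\to 0$ in probability. I will use \eqref{sys-4} together with the linear birth-death equation for $S_v^n$ to view the fast deviation $Y^n(u)=\bar I_v^n(u)-i_v^{*,n}(\bar I_h^n(u))$ as the output of an Ornstein-Uhlenbeck-type SDE with mean-reversion rate $\beta_v^n\bar I_h^n+\gamma_v^n=\Theta(\alpha(n))$, driven by a vector martingale with quadratic-variation rate $\mathcal O(\alpha(n)/n)$ together with a slow drift of order $\mathcal O(1)$ coming from $(i_v^{*,n})'(\bar I_h^n)\,\D\bar I_h^n$. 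A variation-of-constants plus stochastic Fubini argument then gives
\begin{equation*}
\sqrt n\int_0^t(1-\bar I_h^n(u))Y^n(u)\,\D u=O_P\!\left(\tfrac{1}{\sqrt{\alpha(n)}}\right)+O_P\!\left(\tfrac{\sqrt n}{\alpha(n)}\right),
\end{equation*}
which vanishes in probability precisely because $\alpha(n)\to\infty$ and $\alpha(n)/\sqrt n\to\infty$. This is also the step where the sufficient conditions of \cite{kang2014} enter; verifying them reduces to checking uniform contractivity of the fast generator on $[0,1]\times[0,2C_0]$ (automatic since $\beta_v,\gamma_v>0$) and smoothness of $i_v^{*,n}$.

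Finally, the predictable quadratic variation $\langle\hat M^n\rangle(t)=\frac{1}{n}\beta_h^n\int_0^t\bar I_v^n(1-\bar I_h^n)\,\D u+\frac{1}{n}\gamma_h^n\int_0^t\bar I_h^n\,\D u$ converges by Theorem \ref{lln-2} to $\int_0^t[\beta_h i_v^*(1-i_h)+\gamma_h i_h]\,\D u$, and the martingale CLT (Theorem 7.1.4 of \cite{Ethier:Kurtz:1986}) identifies the limit of $\hat M^n$ as $\int_0^t\sqrt{\beta_h i_v^*(1-i_h)+\gamma_h i_h}\,\D B(u)$ for a single standard Brownian motion $B$. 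Tightness of $\{\hat I_h^n\}$ will follow from the above decomposition combined with Gronwall applied to the linear $\hat I_h$-dependence of the drift, and pathwise uniqueness for \eqref{host-limit-2-1} holds because $i_h$ is deterministic, smooth, and bounded so the coefficients of \eqref{host-limit-2-1} are globally Lipschitz in $\hat I_h$. Joint convergence of the initial condition, the drift term, and the martingale term then yields $\hat I_h^n\Rightarrow\hat I_h$.
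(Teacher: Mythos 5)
Your overall architecture --- a compensated-Poisson martingale decomposition of $\hat I_h^n$, a splitting of the drift into a host-parameter piece, a vector-parameter piece, a linearization piece, and an averaging residual, the martingale CLT for the noise term, and Gronwall plus Lipschitz coefficients for tightness and uniqueness --- is the same as the paper's. The one step where you genuinely depart is the averaging residual. The paper does not analyze the fast deviation $Y^n=\bar I_v^n-i_v^{*,n}(\bar I_h^n)$ directly; it uses a perturbed-test-function (corrector) argument, introducing $h(x_1,x_3)=-\beta_h x_3(1-x_1)/(\beta_v x_1+\gamma_v)$, which solves the Poisson equation $\clb h=\bar F$ for the limiting fast generator, and subtracting the martingale associated with $\tilde h^n=h/\alpha(n)$; the leftover $\sqrt n\int_0^t[\bar F-\cla_1^n\tilde h^n]\,\D u$ is then shown to converge to the $\hat\beta_v,\hat\gamma_v$ line of $D$. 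Your variation-of-constants route is a legitimate alternative and is more explicit about where each power of $\alpha(n)$ enters.

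There is, however, a concrete gap in your estimate of the residual, at precisely the delicate step. Your bound $O_P(\alpha(n)^{-1/2})+O_P(\sqrt n/\alpha(n))$ accounts for the direct martingale noise in $\bar I_v^n$ (quadratic-variation rate $O(\alpha(n)/n)$ filtered by mean reversion at rate $\Theta(\alpha(n))$) and for the $O(1)$ motion of $i_v^{*,n}(\bar I_h^n)$, but the point to which $\bar I_v^n$ actually reverts is $\frac{\beta_v^n\bar I_h^n}{\beta_v^n\bar I_h^n+\gamma_v^n}(\bar S_v^n+\bar I_v^n)$, not $\frac{\beta_v^n\bar I_h^n}{\beta_v^n\bar I_h^n+\gamma_v^n}C_0$. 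The total scaled vector population deviation $\bar\epsilon^n(u)=\bar S_v^n(u)+\bar I_v^n(u)-C_0$ is a critical (martingale) quantity with $\EE[\bar\epsilon^n(u)^2]=2\gamma_v^nC_0u/n=\Theta(\alpha(n)u/n)$; the fast relaxation does not damp it, it merely shifts the quasi-equilibrium by $\frac{\beta_v^n\bar I_h^n}{\beta_v^n\bar I_h^n+\gamma_v^n}\bar\epsilon^n$. Variation of constants therefore yields $Y^n\approx\frac{\beta_v^n\bar I_h^n}{\beta_v^n\bar I_h^n+\gamma_v^n}\bar\epsilon^n+(\mbox{the terms you list})$, so $\sqrt n\int_0^t(1-\bar I_h^n(u))Y^n(u)\,\D u$ contains $\sqrt n\int_0^tc(u)\bar\epsilon^n(u)\,\D u$ with $c$ bounded away from $0$ near the endemic state. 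Because $\bar\epsilon^n$ is a martingale, this integral has variance of order $n\cdot\frac{\gamma_v^n}{n}\,t^3=\Theta(\alpha(n)t^3)$, which diverges. So the claimed negligibility of $\Lambda^n$ does not follow from the estimates you invoke, and on this computation the term is not negligible. You should know that the paper's own proof has the same blind spot --- in computing $\sqrt n\int_0^t[\bar F-\cla_1^n\tilde h^n]\,\D u$ it silently substitutes $\bar S_v^n=C_0-\bar I_v^n$ inside $\clb^nh$ --- but as a standalone argument your step (iv) is where the proof breaks; repairing it requires either carrying $\bar S_v^n+\bar I_v^n$ as an additional, super-diffusively fluctuating component or working with the fixed-vector-population model of Appendix \ref{vec-size-fixed}.
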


\begin{remark}\label{rem-diff}\hfill
\begin{itemize}
\item[\rm (i)] The assumptions in \eqref{fclt-cond-1} and \eqref{fclt-cond-2} say that the parameters associated with the host population size, $n$, should converge to the steady values at the same rate as or faster than $O(1/\sqrt{n})$.
\item[\rm (ii)] {Comparing the diffusion limits $\hati_h$ under the two different scaling cases in Theorems \ref{fclt-1} and \ref{fclt-2}, we observe that the diffusion coefficients (i.e., the coefficients before the Brownian motions) in \eqref{host-limit-1-2} and \eqref{host-limit-2-1} are similar except that the equilibrium $i^*_v$ appears in \eqref{host-limit-2-1} and the regular fluid limit $i_v$ in \eqref{host-limit-1-2}. The diffusion coefficient is in this sense more complex in Case I than in Case II.  However, compared to \eqref{host-limit-1-1} (Case I), the drift of $\hati_h$ in \eqref{host-limit-2-2} (Case II) is more complex because it contains all the quasi-equilibrium information of $\hati_v$ and $\hats_v$. }

\item[\rm (iii)] In Theorem \ref{fclt-1}, we observe that
\[
\hats_v(t) + \hati_v(t) = \int_0^t \sqrt{\gamma_v(C_0 + s_v(u))} \D B_2(u) + \int_0^t  \sqrt{\gamma_v i_v(u)} \D B_4(u),
\]
which is normally distributed with mean $0$ and variance $\int_0^t[\gamma_v(C_0 + s_v(u)) +\gamma_v i_v(u)] \D u= 2\gamma_vC_0t.$ From \eqref{dist-approx}, below, we see that $S^n_v(t)+I^n_v(t)\approx n (s_v(t)+i_v(t)) + \sqrt{n}(\hats_v(t)+\hati_v(t)) = C_0 n + \sqrt{n}(\hats_v(t)+\hati_v(t)).$ Thus $S^n_v(t)+I^n_v(t)$ is approximately normally distributed with mean $C_0n$ and variance $2n\gamma_vC_0t.$
\end{itemize}
\end{remark}

\section{Quasi-stationary distributions}\label{sec:quasi}

There are many stochastic systems arising in epidemic modeling in which the disease eventually {\em dies out}, yet appears to be persistent over any reasonable time scale. We are often interested in such long time behavior of a stochastic epidemic process which has zero as an absorbing state for the infected population, almost surely. The hitting time of this state, namely the {\em extinction time}, can be large compared to the physical time and the population size can fluctuate for a large amount of time before extinction actually occurs. This phenomenon can be understood by the study of {\em quasi-stationary distributions}. The quasi-stationary distribution or conditional limiting distribution has proved to be a powerful tool for describing properties of {Markov population processes such as recurrent epidemics as modeled} in \cite{darroch1965quasi, kryscio1989extinction}. The computation of these distributions is critical, as the expected time to extinction starting from quasi-stationarity and the {\em critical community size} for epidemic to die out within a specified time for various ranges of $\clr_0$ can also be then computed (cf. \cite{VANDOORN20131, mubayi2013contextual}). In finite-state systems the existence of a quasi-stationary distributions is guaranteed. However, in infinite-state systems this may not always be so (cf. \cite{pollett1995determination, VANDOORN20131, Champagnat2016}).

In this section, novel results on long time properties of $(I^n_h, S^n_v, I^n_v)$, including quasi-stationarity, are obtained for large $n$ based on the fluid and diffusion approximations developed in Section \ref{approx}.
For simplicity, we assume that $\hat\beta_h=\hat\gamma_h=\hat\beta_v=\hat\gamma_v =0$ in \eqref{fclt-cond-1} and \eqref{fclt-cond-2}, which happens when the parameters associated with host population $n$ converges to their limits faster than $O(1/\sqrt{n}).$

\subsection{Case II}

Let's first study Case II, in which the system dimension is reduced to one, and it suffices to consider the process $I^n_h$. From the diffusion scaling in \eqref{diff-scaling}, we have
\begin{align*}
I^n_h(t) = ni_h(t) + \sqrt{n}\hati^n_h(t).
\end{align*}
Using the diffusion approximation developed in Theorem \ref{fclt-2}, we have  for large $n$,
\begin{align}
I^n_h(t) \approx  \underbrace{ni_h(t)}_{\mbox{Fluid approximation}} +  \underbrace{\sqrt{n}\hati_h(t),}_{\mbox{Diffusion approximation}} \  \mbox{in distribution.}
\end{align}
The above approximation implies that the long time behavior of $I^n_h$ can be studied through the analysis of $i_h$ and $\hati_h$. The stability of the equilibrium points of $i_h$ is provided in Theorem \ref{lln-stability-II}. In the following we study the distribution of $\hati_h(t)$ for large enough $t$ such that $i_h(t)$ is near its stable equilibrium point.

Define for $t\ge 0$,
\begin{align*}
C(t) & = \frac{(C_0\beta_h\beta_v - \gamma_h\gamma_v)\gamma_v - (C_0\beta_h\beta_v + \gamma_h\beta_v)\beta_v i_h(t)^2-2(C_0\beta_h\beta_v + \gamma_h\beta_v)\gamma_v i_h(t)}{(\beta_v i_h(t)+\gamma_v)^2}, \\
\sigma(t) & = \sqrt{ \beta_hi^*_v(u)(1- i_h(u)) + \gamma_h i_h(u)},
\end{align*}
where $i^*_v(t) = \frac{C_0\beta_v i_h(t)}{\beta_v i_h(t)+\gamma_v}$. {Then $\hat I_h$ satisfies the following $1$-dimensional linear SDE. For $t\ge 0,$}
\begin{align}\label{SDE-CASEII}
\D \hat I_h(t) =  C(t) \hat I_h(t) \D t + \sigma(t) \D B(t).
\end{align}

When $\mathcal{R}_0\le 1$, the disease free equilibrium point $ 0$ is globally asymptotically stable for the fluid limit $i_h(t)$. Thus there exists $t^*>0$ such that $i_h(t)\approx  0$ for $t\ge t^*>0.$  So when $t\ge t^*$, we have
\begin{align*}
C(t) \approx C_f = \frac{C_0\beta_h\beta_v - \gamma_h\gamma_v}{\gamma_v} \le 0, \ \ {\sigma(t) \approx \sigma_f = 0,}
\end{align*}
and the SDE in \eqref{SDE-CASEII} is reduced to be the following ODE
\[
\D \hati_h(t) \approx C_f \hati_h(t) \D t.
\]
Solving the above ODE, we have
\begin{align}\label{case-2-R<1}
\hat I_h(t) \approx e^{C_f (t-t^*)} \hat I_h(t^*).
\end{align}
Thus when $C_f<0$, i.e. $\clr_0<1$, $\hat I_h(t)$ approaches $0$ exponentially fast.

When $\mathcal{R}_0 > 1$, and $i_h(0) = E^e$, where $E^e$ is the endemic equilibrium point of the fluid limit $i_h(t)$. Noting that $E^e$ is globally asymptotically stable, it follows that $i_h(t) = E^e$ for $t \ge 0.$ Then for $t\ge 0,$
\begin{align}\label{eep}
C(t) \equiv C_e = \frac{-(C_0\beta_h\beta_v - \gamma_h\gamma_v)}{\beta_v i^*_h + \gamma_v} <0, \ \ \sigma(t) \equiv \sigma_e = \sqrt{2\gamma_h i^*_h},
\end{align}
and the SDE in \eqref{SDE-CASEII} is reduced to be the following homogeneous linear SDE:
\[
\D \hati_h(t) = C_e \hati_h(t) \D t + \sigma_e \D B(t),
\]
We also note that $\hati_h$ is a one-dimensional Ornstein-Uhlenbeck process. 
Solving the above SDE, we have
\begin{align*}
\hat I_h(t) = e^{C_et} \hati_h(0) +  \int_{0}^t e^{C_e(t-s)} \sigma_e\D B(s).
\end{align*}
It follows that 
\begin{align*}
\var(\hati_h(t)) = \var(e^{C_et} \hati_h(0)) + \var\left( \int_{0}^t e^{C_e(t-s)} \sigma_e\D B(s)\right) = e^{2C_e t} \var(\hati_h(0)) + \int_0^t e^{2C_e(t-s)} \D s,
\end{align*}
and 
{the limiting covariance of $\hati_h$ is then given by}
\begin{align*}
\Sigma^*& \equiv \lim_{t\to\infty} \var(\hati_h(t))   = \lim_{t\to\infty}\int_{0}^t e^{2C_e(t-s)}\sigma_e^2 \D s  = - \frac{\sigma_e^2}{2C_e}.
\end{align*}

\begin{conjecture}
For large host population size $n$,
\begin{itemize}
\item[\rm (i)] when $\clr_0 < 1$, it follows from \eqref{case-2-R<1} and Theorem \ref{lln-stability-II} (i) that $I^n_h(t)$ approaches $0$ exponentially fast as $t\to\infty$;
\item[\rm (ii)] when $\clr_0 > 1$, the quasi-stationary distribution of $I^n_h$ can be approximated by a normal distribution with mean $n E^e$ and covarance matrix $n\Sigma^*$, where $E^e=\frac{C_0\beta_h\beta_v -\gamma_h\gamma_v}{C_0\beta_h\beta_v + \beta_v\gamma_h}$ is the endemic equilibrium point, and $\Sigma^*=- \frac{\sigma_e^2}{2C_e}$ with $C_e$ and $\sigma_e$ defined in \eqref{eep}.
\end{itemize}
\end{conjecture}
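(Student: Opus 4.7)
My plan is to treat the two parts of the conjecture separately but to build both on the fluid and diffusion approximations of Theorems \ref{lln-2} and \ref{fclt-2}, supplemented by standard results on quasi-stationary distributions (QSDs) for absorbing Markov chains. The guiding heuristic is the decomposition $I^n_h(t)\approx n i_h(t) + \sqrt{n}\,\hati_h(t)$, which reduces the asymptotic analysis to (a) the long-time behavior of the ODE \eqref{ode-2} and (b) the long-time law of the one-dimensional SDE \eqref{SDE-CASEII}. The pure Case II framework is convenient because the fast vector dynamics have been averaged out, so the reduced host process is effectively a CTMC on $\{0,1,\ldots,n\}$ with $0$ absorbing, and QSD theory on a finite state space applies cleanly.

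For part (i), I would exploit the global exponential stability of $E^f=0$ from Theorem \ref{lln-stability-II}(i), which gives $i_h(t)\le i_h(0)e^{-\lambda t}$ for some $\lambda>0$. Picking a time horizon $T_n \asymp (\log n)/\lambda$, Theorem \ref{lln-2} yields $\bari^n_h(T_n)\to 0$ in probability, so $I^n_h(T_n)/n\to 0$. To conclude extinction, I would then couple $I^n_h$ restarted from a neighborhood of $0$ with a subcritical branching process whose mean offspring number equals $\beta_h i^*_v(0)/\gamma_h = C_0\beta_h\beta_v/(\gamma_h\gamma_v) = \clr_0^2 <1$ at the disease-free fixed point. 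Standard subcritical-branching extinction gives an exponentially small survival probability, which combined with the strong Markov property at $T_n$ gives the claim.

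For part (ii), the plan is to establish the QSD $\pi_n$ of $I^n_h$ on $\{1,\ldots,n\}$ by Perron-Frobenius (or by invoking \cite{VANDOORN20131, Champagnat2016}), then identify its limit via the diffusion approximation started from $\lfloor nE^e\rfloor$. Since $E^e$ is globally asymptotically stable (Theorem \ref{lln-stability-II}(ii)), initializing at $E^e$ gives $i_h(t)\equiv E^e$ and reduces \eqref{SDE-CASEII} to an Ornstein-Uhlenbeck process with constant coefficients $C_e<0$ and $\sigma_e$, whose unique stationary law is $\mathcal{N}(0,\Sigma^*)$ with $\Sigma^*=-\sigma_e^2/(2C_e)$, approached exponentially fast at rate $2|C_e|$. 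I would then argue: the extinction time $\tau_0^n$ starting from the basin of $E^e$ is exponentially large in $n$, say $\tau_0^n\gtrsim e^{cn}$ for some $c>0$ (a Freidlin-Wentzell type estimate for the quasi-potential of \eqref{ode-2} computed at $E^f$); pick $T_n\to\infty$ with $T_n\ll e^{cn}$ and $T_n=O(\log n)$ so the OU limit has mixed; Theorem \ref{fclt-2} extended to $[0,T_n]$ (which requires only polynomial-in-$n$ time horizons) yields $\hati^n_h(T_n)\Rightarrow \mathcal{N}(0,\Sigma^*)$. The metastability bound on $\tau_0^n$ ensures that conditioning on non-extinction is a negligible perturbation, and hence $\pi_n\approx \mathcal{N}(nE^e, n\Sigma^*)$.

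The hardest step is the interchange of the limits $n\to\infty$ and $t\to\infty$, since Theorem \ref{fclt-2} is stated on compact time intervals while the QSD is intrinsically a $t\to\infty$ object. Two technical inputs need to be supplied: a quantitative large-deviation lower bound on $\tau_0^n$, for which the infinite state space of vectors (before averaging) raises some subtleties about uniform moment bounds of $\bars^n_v, \bari^n_v$; and an extension of the FCLT to growing time horizons, which can be done by exploiting the exponential contraction of the OU flow around $E^e$ together with tightness estimates uniform in $t$ on $[0,T_n]$. Once both are in place, the Champagnat-Villemonais criteria for convergence to a QSD should close the argument; fully carrying this out rigorously is precisely why the statement is posed as a conjecture here rather than as a theorem.
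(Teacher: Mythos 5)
The statement you are proving is posed in the paper as a \emph{conjecture}, and the paper's own ``proof'' consists only of the heuristic computation that precedes it: the decomposition $I^n_h(t)\approx n i_h(t)+\sqrt{n}\,\hati_h(t)$, the observation that at $E^f$ the SDE \eqref{SDE-CASEII} degenerates to $\D\hati_h=C_f\hati_h\,\D t$ with $C_f\le 0$, and at $E^e$ it becomes a one-dimensional Ornstein--Uhlenbeck process whose stationary variance is $-\sigma_e^2/(2C_e)$. Your proposal reproduces exactly this backbone and then goes further, sketching the machinery one would need to upgrade the heuristic to a theorem: a subcritical branching coupling near extinction for (i), Perron--Frobenius / Champagnat--Villemonais criteria plus a Freidlin--Wentzell lower bound on the extinction time and an FCLT on growing time horizons for (ii). You also correctly isolate the real obstruction --- the interchange of $n\to\infty$ and $t\to\infty$ --- and honestly concede it is not closed, which is consistent with the paper's decision to state the result as a conjecture. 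Two small cautions. First, your claim that the reduced host process is ``effectively a CTMC on $\{0,1,\ldots,n\}$'' so that finite-state QSD theory ``applies cleanly'' overstates the situation: for every finite $n$ in Case II the vectors are still present, $I^n_h$ alone is not Markov, and the full chain $(I^n_h,S^n_v,I^n_v)$ lives on an infinite state space where existence of a QSD is not automatic (the paper itself flags this in Section \ref{sec:quasi}); the averaging only occurs in the limit, so the finite-$n$ QSD must be handled via the infinite-state criteria of \cite{Champagnat2016}, not Perron--Frobenius. Second, your expression $\beta_h i^*_v(0)/\gamma_h$ for the offspring mean is a notational slip, since $i^*_v(0)=0$; what you need is the derivative $C_0\beta_v/\gamma_v$ of $i^*_v$ at $i_h=0$, and your final value $\clr_0^2$ is indeed correct. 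Neither point undermines the plan; as a roadmap toward a rigorous proof your proposal is sound and strictly more detailed than what the paper offers.
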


\subsection{Case I}\label{quasi-case-1}

Case I is more complex than Case II since we need to study the $3$-dimensional diffusion limit $(\hati_h, \hats_v, \hati_v)$. In particular, some results on matrix exponentials, which are provided in Appendix \ref{appendix:me},  are required to solve linear ODEs and SDEs.

From the diffusion scaling in \eqref{diff-scaling}, we have
\begin{align*}
(I^n_h(t), S_v^n(t), I_v^n(t)) = n(i_h(t), s_v(t), i_v(t)) + \sqrt{n}(\hati^n_h(t), \hats^n_v(t), \hati_v^n(t)).
\end{align*}
Using the diffusion approximation developed in Theorem \ref{fclt-1}, we have  for large $n$,
\begin{align}\label{dist-approx}
(I^n_h(t), S^n_v(t), I_v^n(t)) \approx  \underbrace{n(i_h(t), s_v(t), i_v(t))}_{\mbox{Fluid approximation}} +  \underbrace{\sqrt{n}(\hati_h(t), \hats_v(t), \hati_v(t)),}_{\mbox{Diffusion approximation}} \  \mbox{in distribution.}
\end{align}
The stability of the equilibrium points of $(i_h, s_v, i_v)$ is provided in Theorem \ref{lln-stability-I}. In the following we focus on the limiting distribution of $(\hati_h(t), \hats_v(t), \hati_v(t))$.

For $t\ge 0,$ let
\begin{align*}
C(t) & = \begin{pmatrix} - (\gamma_h +\beta_h i_v(t)) & 0 & \beta_h s_h(t) \\ -\beta_vs_v(t) & -\beta_vi_h(t)& \gamma_v\\ \beta_vs_v(t) & \beta_v i_h(t) & - \gamma_v   \end{pmatrix}, \\
\sigma(t) & = \begin{pmatrix} \sqrt{\beta_h i_v(t) s_h(t)+\gamma_h i_h(t)}  & 0 & 0 & 0 \\ 0 & \sqrt{\gamma_v(C_0+s_v(t))} & -\sqrt{\beta_vi_h(t)s_v(t)} & 0 \\ 0 & 0 & \sqrt{\beta_v i_h(t)s_v(t)} & \sqrt{\gamma_v i_v(t)}\end{pmatrix},\\
\B(t) & = \begin{pmatrix} B_1(t), B_2(t), B_3(t), B_4(t)\end{pmatrix}',
\end{align*}
and
\[
\X(t) = (\hati_h(t), \hats_v(t), \hati_v(t)).
\]
Then the diffusion limit process $\X$ satisfies the following $3$-dimensional linear SDE.
\begin{align}\label{diff-sde}
\D \X(t) & = C(t)\X(t) \D t + \sigma(t) \D \B(t).
\end{align}

 If $\clr_0 \le 1$, then the disease free equilibrium point $ (0,C_0, 0)$ is globally asymptotically stable for the fluid limit $(i_h(t), s_v(t), i_v(t))$. Thus there exists $t^*>0$ such that $(i_u(t), s_v(t), i_v(t))\approx  (0, C_0, 0)$ for $t\ge t^*>0.$ Then for $t\ge t^*,$
\begin{align*}
C(t) \approx C_f\equiv \begin{pmatrix} -\gamma_h & 0 & \beta_h  \\ -C_0\beta_v & 0 & \gamma_v  \\C_0\beta_v & 0 & -\gamma_v   \end{pmatrix}, \ \ \sigma(t) \approx \sigma_f\equiv \begin{pmatrix} 0 & 0 & 0 & 0 \\ 0 & \sqrt{2C_0\gamma_v} & 0 & 0 \\ 0 & 0& 0 & 0 \end{pmatrix}.
\end{align*}
Thus, when $t\ge t^*$, $(\hati_h(t), \hati_v(t))$ can be approximated by the following $2$-dimensional homogeneous linear ODE:
\begin{align*}
\D \hati_h(t) & \approx [-\gamma_h \hati_h(t)+\beta_h\hati_v(t)] \D t, \\
\D \hati_v(t) & \approx [C_0\beta_v \hati_h(t) - \gamma_v \hati_v(t)] \D t.
\end{align*}
Solving the ODE, we have the following approximation:
\begin{align}\label{approx-I}
(\hati_h(t), \hati_v(t)) \approx e^{\tilde C_f(t- t^*) } (\hati_h(t^*), \hati_v(t^*)), \ \ t\ge t^*,
\end{align}
where $\tilde C_f = \begin{pmatrix} -\gamma_h &  \beta_h   \\C_0\beta_v  & -\gamma_v   \end{pmatrix}.$ It can easily be seen that when $\clr_0 < 1$, the matrix $\tilde C_f$ has two distinct negative eigenvalues $\lambda_{f,1}$, and $\lambda_{f,2}$. Thus from \eqref{me-2}, for $t\ge 0,$
\begin{align}\label{me-app-11}
e^{\tilde C_ft} =   [\mathbf{v}_{f,1}, \mathbf{v}_{f,2}] \begin{pmatrix} e^{\lambda_{f,1} t} & 0  \\ 0 & e^{\lambda_{f,2} t} \end{pmatrix}[\mathbf{v}_{f,1}, \mathbf{v}_{f,2}]^{-1},
\end{align}
where $\mathbf{v}_{f,1}, \mathbf{v}_{f,2}$ are the eigenvectors corresponding to $\lambda_{f,1}, \lambda_{f,2}$. From \eqref{approx-I} and \eqref{me-app-11}, it follows that when $\clr_0 < 1$, $(\hati_h(t), \hati_v(t))$ approaches $(0,0)$ exponentially fast as $t\to\infty$.

When $\clr_0 > 1$, and $(i_h(0), i_v(0))= E^e$, where $E^e$ is the endemic equilibrium point $E^e$ of the fluid limit $(i_h(t), i_v(t))$, which is globally asymptotically stable. Let $i_h^e, i_v^e$ denote the components of $E^e$, $s_h^e = 1 - i_h^e,$ and $s_v^e = C_0 - i_v^e$.
Then $(i_h(t), s_h(t), i_v(t), s_v(t)) =(i^e_h, s^e_h, i^e_v, s^e_v)$ for $t\ge 0.$ Now for $t\ge 0,$
\begin{align}
C(t)  \equiv C_e & =\begin{pmatrix} - (\gamma_h +\beta_h i_v^e) & 0 & \beta_h s_h^e \\ -\beta_vs_v^e&-\beta_v i_h^e &  \gamma_v \\ \beta_vs_v^e&\beta_v i_h^e &- \gamma_v \end{pmatrix} \label{rand-popl}\\ 
\sigma(t) \equiv \sigma_e &  = \begin{pmatrix} \sqrt{\beta_h i_v^e s_h^e+\gamma_h i_h^e}  & 0 & 0 & 0 \\ 0 & \sqrt{\gamma_v(C_0+s_v^e)} & - \sqrt{\beta_v i_h^es_v^e} & 0 \\ 0 & 0 & \sqrt{\beta_v i_h^e s_v^e} & \sqrt{\gamma_v i^e_v}\end{pmatrix}.
\end{align}
In this case, the solution of the SDE in \eqref{diff-sde} can be approximated by the following $3$-dimensional homogeneous linear SDE:
\begin{align} \label{diff-sde-r0g1}
\D \X(t) = C_e\X(t) \D t + \sigma_e \D \B(t).
\end{align}
Now note that $\X$ is a three-dimensional Ornstein-Uhlenbeck process. 
Solving the above SDE, we have the following approximation:
\begin{align}\label{sde_explicit_soln}
\X(t) = \exp\{C_et\} \X(0) +  \int_{0}^t \exp\{C_e(t-s)\} \sigma_e\D \B(s), \ t\ge 0.
\end{align}
{The covariance of $\X(t)$ then has the following approximation: }
\begin{equation}\label{cov-app-1}
\begin{aligned}
\var(\X(t))&  =  \var\left( \exp\{C_et\} \X(0)\right) + \EE\left[\left(\int_{0}^t \exp\{C_e(t-s)\} \sigma_e\D B(s)\right)^2\right]  \\
& = \exp\{C_et\} \var(\X(0)) \exp\{C_et\}' \\
& \quad + \int_{0}^t \exp\{C_e(t-s)\}\sigma_e{\sigma_e}'\exp\{C_e(t-s)\}' \D s. 
\end{aligned}
\end{equation}
We note that $C_e$ has three distinct eigenvalues, of which one is equal to $0$, and the other two are negative. Denote by $\lambda_{e,1},$ and $\lambda_{e,2}$ the two negative eigenvalues. From \eqref{me-2}, we have
\begin{align}\label{me-app-2}
e^{C_et}= [\mathbf{v}_{e,0}, \mathbf{v}_{e,1}, \mathbf{v}_{e,2}] \begin{pmatrix} 1 & 0 & 0 \\ 0 & e^{\lambda_{e,1} t}& 0\\ 0 & 0 & e^{\lambda_{e,2} t} \end{pmatrix}[\mathbf{v}_{e,0}, \mathbf{v}_{e,1}, \mathbf{v}_{e,2}]^{-1},
\end{align}
where $\mathbf{v}_{e,0}, \mathbf{v}_{e,1}, \mathbf{v}_{e,2}$ are the eigenvectors corresponding to $0, \lambda_1, \lambda_2$.
\begin{lemma}\label{infinite_var}
The variances of $\hati_h(t), \hats_v(t),$ and $\hati_v(t)$ approach $\infty$ as $t\to\infty$.
\end{lemma}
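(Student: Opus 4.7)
The plan is to read off the asymptotic covariance of $\X(t)=(\hati_h(t),\hats_v(t),\hati_v(t))$ from the explicit solution (\ref{sde_explicit_soln}) and to exploit the zero eigenvalue of $C_e$ to exhibit linear-in-$t$ growth of all three diagonal entries. In (\ref{cov-app-1}) the first summand $e^{C_et}\var(\X(0))e^{C_et\,'}$ is uniformly bounded in $t$, since by the spectral decomposition (\ref{me-app-2}) $e^{C_et}$ converges to the rank-one projector onto the zero eigenspace of $C_e$; hence any unbounded contribution must come from the stochastic-integral piece.

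The central observation is that the second and third rows of $C_e$ in (\ref{rand-popl}) sum to zero, so $(0,1,1)$ is a left null vector of $C_e$. This simply reflects the fact, already recorded in Remark \ref{rem-diff}(iii), that $\hats_v+\hati_v$ has zero drift and thus linearly growing variance. I would diagonalize and write
\[
e^{C_eu}=\mathbf{v}_{e,0}\mathbf{w}_{e,0}'+e^{\lambda_{e,1}u}\mathbf{v}_{e,1}\mathbf{w}_{e,1}'+e^{\lambda_{e,2}u}\mathbf{v}_{e,2}\mathbf{w}_{e,2}',
\]
where the $\mathbf{w}_{e,i}$ are the left eigenvectors normalized so that $\mathbf{w}_{e,i}'\mathbf{v}_{e,j}=\delta_{ij}$. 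Solving $C_e\mathbf{v}=0$ and using the endemic-equilibrium identities $\beta_hs_h^ei_v^e=\gamma_hi_h^e$ and $\beta_vs_v^ei_h^e=\gamma_vi_v^e$, the right null vector takes the explicit form
\[
\mathbf{v}_{e,0}=(\beta_hs_h^e,\ \beta_hs_v^e,\ \gamma_h+\beta_hi_v^e)',
\]
all three of whose components are strictly positive.

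Plugging the spectral decomposition of $e^{C_e(t-s)}$ into (\ref{cov-app-1}) and changing variables $u=t-s$, any summand that carries a factor $e^{\lambda_{e,i}u}$ with $i\in\{1,2\}$ integrates to a bounded quantity because $\lambda_{e,1},\lambda_{e,2}<0$. The only surviving, unbounded contribution is the pure zero-eigenvalue piece,
\[
\int_0^t \mathbf{v}_{e,0}\bigl(\mathbf{w}_{e,0}'\sigma_e\sigma_e'\mathbf{w}_{e,0}\bigr)\mathbf{v}_{e,0}'\,\D u=t\,c\,\mathbf{v}_{e,0}\mathbf{v}_{e,0}',
\]
with $c=\|\mathbf{w}_{e,0}'\sigma_e\|^{2}$. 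Since $\mathbf{w}_{e,0}$ is proportional to $(0,1,1)'$, a direct computation gives $\mathbf{w}_{e,0}'\sigma_e$ proportional to $(0,\sqrt{\gamma_v(C_0+s_v^e)},0,\sqrt{\gamma_vi_v^e})$ (the two $\sqrt{\beta_vi_h^es_v^e}$ contributions cancel), so $c$ is a strictly positive multiple of $2\gamma_vC_0$. Reading off the diagonal, for each $j\in\{1,2,3\}$,
\[
[\var(\X(t))]_{jj}=t\,c\,[\mathbf{v}_{e,0}]_j^{2}+O(1),
\]
and because each $[\mathbf{v}_{e,0}]_j>0$ the variances of $\hati_h(t),\hats_v(t),\hati_v(t)$ all diverge as $t\to\infty$.

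The main obstacle is ruling out degeneracy: a vanishing component of $\mathbf{v}_{e,0}$ would kill the leading-order growth in that coordinate and force a finer analysis of the bounded remainder. The endemic-equilibrium identities are exactly what is needed, since without invoking them it is not obvious that $v_2=\gamma_v\beta_hi_v^e/(\beta_vi_h^e)$ simplifies to the manifestly positive $\beta_hs_v^e$; beyond this check, the rest of the argument is a routine spectral analysis of a linear SDE with a semistable drift matrix.
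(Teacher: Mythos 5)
Your proposal is correct and follows essentially the same route as the paper: both diagonalize $C_e$, identify the rank-one projector onto its zero eigenspace, and extract the linear-in-$t$ growth of the stochastic-integral term in \eqref{cov-app-1}. In fact you make explicit the two non-degeneracy checks the paper only asserts or leaves implicit --- that the right null vector $\mathbf{v}_{e,0}=(\beta_h s_h^e,\ \beta_h s_v^e,\ \gamma_h+\beta_h i_v^e)'$ has strictly positive entries (via the equilibrium identities) and that $\mathbf{w}_{e,0}'\sigma_e\sigma_e'\mathbf{w}_{e,0}\propto 2\gamma_vC_0>0$ --- which are exactly the points needed to conclude that \emph{all three} diagonal entries diverge.
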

The above lemma implies that 
 $(\hati_h, \hats_v, \hati_v)$ has no quasi-stationary distribution, due to the fact that the vector population size is not fixed; rather at time $t$, $S^n_v(t)+I^n_v(t)$ is approximately normally distributed with mean $C_0n$ and variance $2n\gamma_vC_0t$ (see Remark \ref{rem-diff} (i)), which results in high variability of $(S^n_v(t), I^n_v(t))$ for large $t.$ However, when the vector population size is fixed, it can be shown that when $\clr_0 > 1$, $(I^n_h, I^n_v)$ admits a quasi-stationary distribution, which is approximately normal with mean $nE^e$ and variance $n\Sigma^*$ given in \eqref{stat-cov-fix} in Appendix \ref{vec-size-fixed}.

\begin{conjecture}\label{sum_case_I}
For large population size $n$,
\begin{itemize}
\item[\rm (i)] when $\clr_0 < 1$, we get from \eqref{approx-I} and Theorem \ref{lln-stability-I} (i) that $(I^n_h(t), I^n_v(t))$ approaches $0$ exponentially fast as $t\to\infty$;
\item[\rm (ii)] when $\clr_0 > 1$, for the epidemic model with fixed host population size and random vector population size defined in \eqref{sys-1} -- \eqref{sys-4}, $(I^n_h, I^n_v)$ has no quasi-stationary distribution.
\item[\rm (iii)] when $\clr_0 >1$, for the epidemic model with fixed host and vector population sizes defined in \eqref{sys-11} -- \eqref{sys-14}, $(I^n_h, I^n_v)$ admits a quasi-stationary distribution, which can be approximated b a normal distribution with mean $nE^e$ and variance $n\Sigma^*$ given in \eqref{stat-cov-fix} in Appendix \ref{vec-size-fixed}.
\end{itemize}
\end{conjecture}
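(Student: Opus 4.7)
The three parts of the conjecture are heuristic consequences of the diffusion approximation in Theorem \ref{fclt-1} combined with the stability results in Theorem \ref{lln-stability-I} and Lemma \ref{infinite_var}; the task is to turn those heuristics into rigorous statements. I would organize the argument around the identity \eqref{dist-approx}, treating the fluid term $n(i_h,s_v,i_v)$ and the diffusion correction $\sqrt{n}(\hati_h,\hats_v,\hati_v)$ separately, and then pay careful attention to the order in which the limits $n\to\infty$ and $t\to\infty$ are taken.

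For part (i), the plan is as follows. When $\clr_0<1$, Theorem \ref{lln-stability-I}(i) yields that $(i_h(t),i_v(t))\to(0,0)$ at exponential rate, so the fluid term in \eqref{dist-approx} decays to $0$ exponentially. For the diffusion part, once $t\ge t^*$ and $(i_h,s_v,i_v)$ lies in a neighborhood of the disease-free equilibrium, the coefficient matrix $C(t)$ and dispersion $\sigma(t)$ are approximated by $C_f$ and $\sigma_f$ as in \eqref{approx-I}, and $\tilde C_f$ has two negative eigenvalues whose spectral decomposition \eqref{me-app-11} shows $(\hati_h(t),\hati_v(t))\to 0$ exponentially fast. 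Combining these two exponential decays via \eqref{dist-approx} and a Grönwall-type estimate for the perturbation $C(t)-C_f$ gives the desired conclusion.

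For part (ii), I would argue by contradiction using Lemma \ref{infinite_var}. If $(I^n_h,I^n_v)$ admitted a quasi-stationary distribution $\pi^n$ with finite second moments that were $O(n)$, then through \eqref{dist-approx} one would obtain a finite-variance stationary law for $(\hati_h,\hati_v)$ as $t\to\infty$, contradicting Lemma \ref{infinite_var}. The subtle point, which I flag below, is justifying this transfer at the level of conditional laws given non-extinction; the direct route is to note that the vector population $S^n_v+I^n_v$ is a critical birth–death process with variance $2n\gamma_v C_0 t$ (Remark \ref{rem-diff}(iii)), so even conditional on the infectives being strictly positive, the vector total has no stationary regime and hence no joint quasi-stationary law can exist.

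For part (iii), the vector population size is fixed and so the state space of $(I^n_h,I^n_v)$ is finite, which guarantees existence and uniqueness of a quasi-stationary distribution by the standard Perron–Frobenius argument for absorbing Markov chains. To identify the Gaussian approximation, I would redo the diffusion analysis of Section \ref{quasi-case-1} in the reduced setting of Appendix \ref{vec-size-fixed}, where the fixed total $S^n_v+I^n_v$ eliminates one of the noise terms and reduces the dimension of the linear SDE at the endemic equilibrium to two. The resulting homogeneous linear SDE is a two-dimensional Ornstein–Uhlenbeck process with stable drift (both eigenvalues have negative real part when $\clr_0>1$), so it admits a unique Gaussian invariant law with covariance $\Sigma^*$ given in \eqref{stat-cov-fix}, obtained by solving the Lyapunov equation $C_e\Sigma^*+\Sigma^* C_e' + \sigma_e\sigma_e'=0$. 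Plugging into \eqref{dist-approx} then yields the claimed $\mathcal{N}(nE^e, n\Sigma^*)$ approximation.

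\textbf{Main obstacle.} The hard part is the exchange of limits: the diffusion approximation in Theorems \ref{fclt-1} and \ref{fclt-2} is a weak convergence on compact time intervals as $n\to\infty$, whereas a quasi-stationary distribution is an invariant object at $t=\infty$ for fixed $n$. Establishing rigorously that $\lim_{t\to\infty}\lim_{n\to\infty}$ agrees with $\lim_{n\to\infty}\lim_{t\to\infty}$ (conditional on non-extinction) requires uniform-in-time bounds on the coupling between $(I^n_h,I^n_v)$ and its diffusion limit, together with control of the extinction probability over long time windows — which is precisely why these statements are phrased as conjectures rather than theorems. A complete proof would likely follow the framework of Champagnat–Villemonais on quasi-stationarity for absorbed processes combined with a large deviation estimate for the exit time from a neighborhood of $E^e$.
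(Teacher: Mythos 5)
Your plan follows essentially the same route as the paper's own (heuristic) justification: part (i) from the exponential stability in Theorem \ref{lln-stability-I}(i) together with the linearized decay \eqref{approx-I}, part (ii) from the diverging variances in Lemma \ref{infinite_var}, and part (iii) from the two-dimensional Ornstein--Uhlenbeck analysis of Appendix \ref{vec-size-fixed} with limiting covariance \eqref{stat-cov-fix}. Your added observations (the Perron--Frobenius existence argument in the finite-state case, the Lyapunov-equation characterization of $\Sigma^*$, and especially the unresolved interchange of the limits $n\to\infty$ and $t\to\infty$ conditional on non-extinction) correctly identify why the statement is left as a conjecture rather than a theorem.
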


\section{Numerical experiments}\label{sec:numerical}
In this section, we conduct numerical experiments to validate the approximations established for quasi-stationary distributions. 

\begin{example} Let $\clr_0 > 1$ and consider the scaling Case I. In this example, we compare the simulated second moments of $\hati^n_h$, $\hats^n_v$, and $\hati^n_v$ with the second moments of $\hati_h, \hats_v,$ and $\hati_v$, respectively, and observe the behaviors of the moments as $t\to\infty$. We set the initial value of $\{(\bar I^n_h(t), \bar I^n_v(t)); t\ge 0\}$ to be endemic equilibrium point $$E^e = \left(\frac{C_0\beta_h\beta_v - \gamma_h\gamma_v}{C_0\beta_h\beta_v + \beta_v\gamma_h}, \frac{C_0\beta_h\beta_v - \gamma_h\gamma_v}{\beta_h\beta_v + \beta_h\gamma_v}\right).$$ 
Noting that $E^e$ is globally asymptotically stable, the fluid limit $(i_h(t), i_v(t)) = E^e$ for all $t\ge 0.$ 

We let $n=10000, \beta_h^n=0.2, \beta_v^n=0.1, \gamma_h^n=0.3, \gamma_v^n=0.1$, and  $C_0 = 5$. 
\begin{itemize}
\item[\rm (i)] We consider the vector-borne SIS model defined in \eqref{sys-1} -- \eqref{sys-4}. We generate $50$ sample paths, and calculate the simulated second moments of $\hati^n_h$, $\hats^n_v$, and $\hati^n_v$. 

From the analysis in Section \ref{quasi-case-1}, $(\hati_h(t), \hats_v(t), \hati_v(t))$ satisfies the SDE \eqref{diff-sde-r0g1}, and its covariance matrix is given in \eqref{cov-app-1}. We also note that $(\hati_h(0), \hats_v(0), \hati_v(0)) = 0$, and from \eqref{sde_explicit_soln}, $\EE[(\hati_h(t), \hats_v(t), \hati_v(t))] = (0,0,0).$ So the second moments of $\hati_h$, $\hats_v$, and $\hati_v$ equal to their variances. 

From Figure \ref{f:Ex5_1}, we see that the variances in \eqref{cov-app-1} provide nice approximations for the second moments of the pre-limits, and all variances increase as the time $t$ increases, which valides the conclusion that the system has no quasi-stationary (see Summary \ref{sum_case_I} (ii)). 
\item[\rm (ii)] We consider the vector-borne SIS model defined in \eqref{sys-11} -- \eqref{sys-14}. We generate $50$ sample paths, and calculate the simulated second moments of $\hati^n_h$, and $\hati^n_v$. 

From Appendix \ref{vec-size-fixed}, $(\hati_h, \hati_v)$ satisfies the SDE \eqref{sde_fix_popu}, and its covariance matrix is given in \eqref{cov-app-2}. From \eqref{sde_explicit_soln_fix_popu}, it follows that $\EE[(\hati_h(t), \hati_v(t))] = (0,0)$, and the second moments of $\hati_h$ and $\hati_v$ equal to their variances.   

From Figure \ref{f:Ex5_2}, we see that the variances in \eqref{cov-app-2} provide nice approximations for the second moments of the pre-limits, and all variances become stable as the time $t$ increases, which valides the conclusion that the system has a quasi-stationary (see Summary \ref{sum_case_I} (iii)). 
\end{itemize}
\begin{figure}[!htb]
 \centerline{\includegraphics[width=150mm]{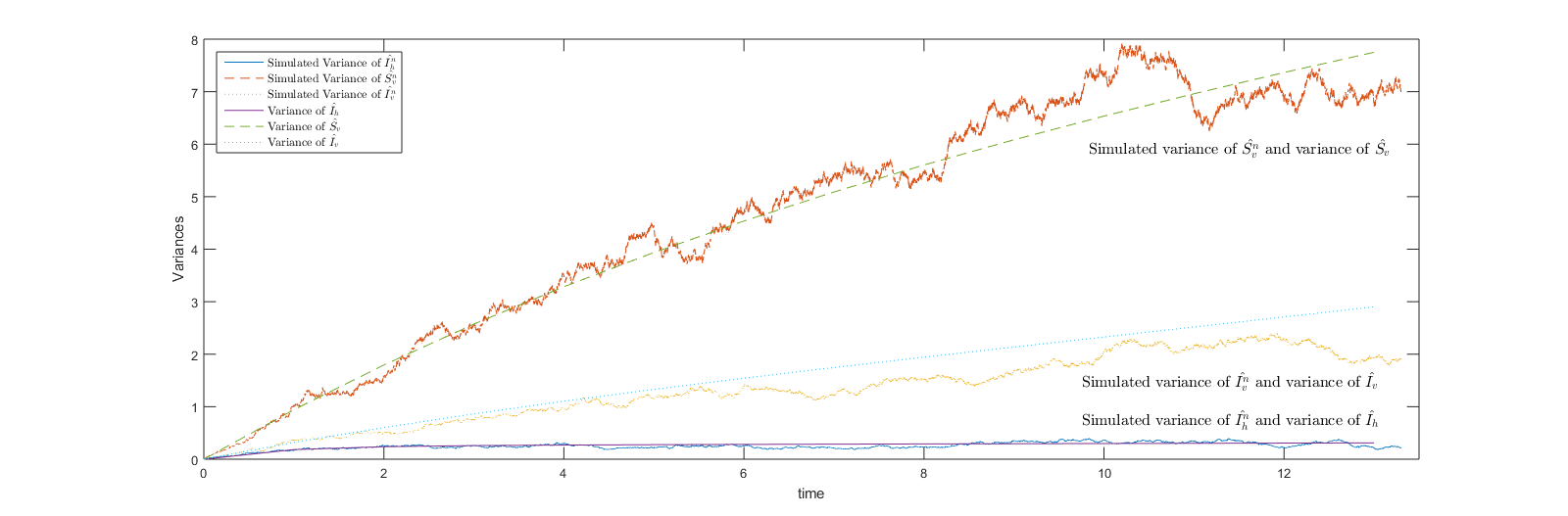}}
\caption{Simulated second moments of $\hati^n_h$, $\hats^n_v$, and $\hati^n_v$, and the variance of $\hat I_h, \hat S_v$, and $\hat I_v$ for the model \eqref{sys-1} -- \eqref{sys-4}. }
\label{f:Ex5_1}
\end{figure}
\begin{figure}[!htb]
 \centerline{\includegraphics[width=150mm]{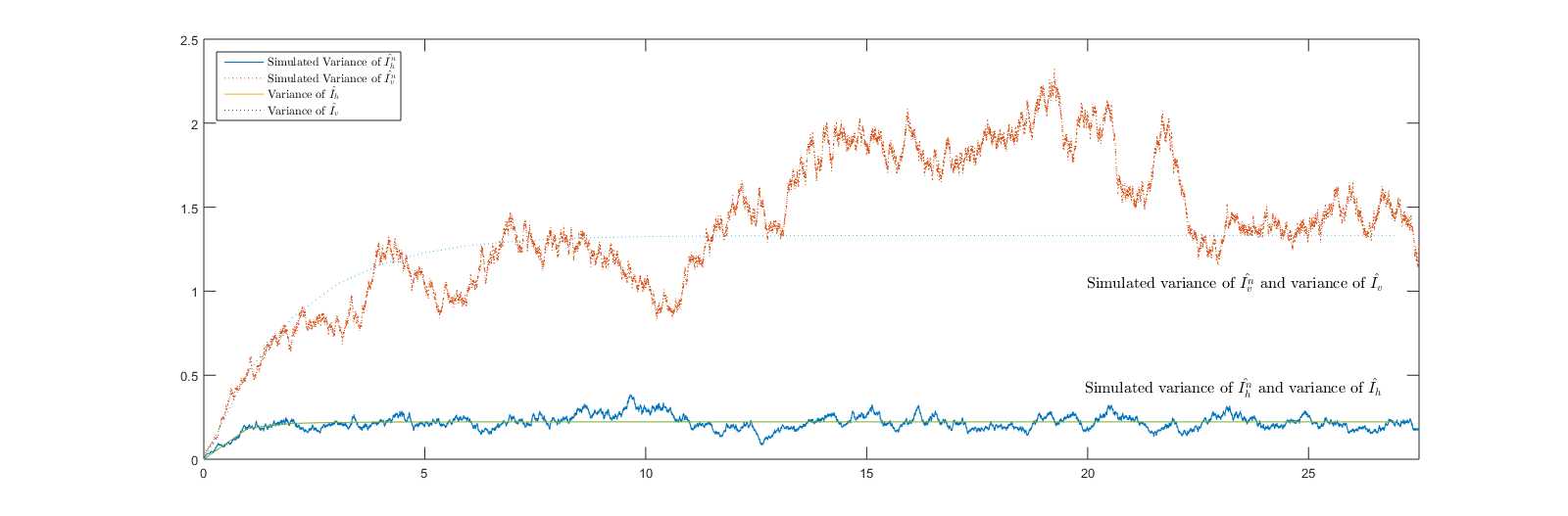}}
\caption{Simulated second moments of $\hati^n_h$, and $\hati^n_v$, and the variance of $\hat I_h$, and $\hat I_v$ for the model \eqref{sys-1} -- \eqref{sys-4}. }
\label{f:Ex5_2}
\end{figure}
\end{example}

\begin{example} Let $\clr_0 > 1$, and consider the scaling Case II. We study the behavior of the fluid scaled processes in the system defined in \eqref{sys-1} -- \eqref{sys-4}. Let $n=100$, $\beta_h^n=0.2$, $\gamma_h^n=0.3$, $\beta_v^n=0.1n^{\frac{2}{3}}$, $\gamma_v^n=0.2n^{\frac{2}{3}}$ and $C_0=5$. Set $(\bar I^n_h(0), \bar I^n_v(0))=\left(\frac{C_0\beta_h\beta_v - \gamma_h\gamma_v+1}{C_0\beta_h\beta_v + \beta_v\gamma_h}, \frac{C_0\beta_h\beta_v - \gamma_h\gamma_v+1}{\beta_h\beta_v + \beta_h\gamma_v}\right)$, which is not equal to either equilibrium point. We simulate $10$ sample paths of $\bar I^n_h$, and plot them together with the fluid limit in Figure \ref{f:Ex5_3}.
\begin{figure}[!htb]
 \centerline{\includegraphics[width=150mm]{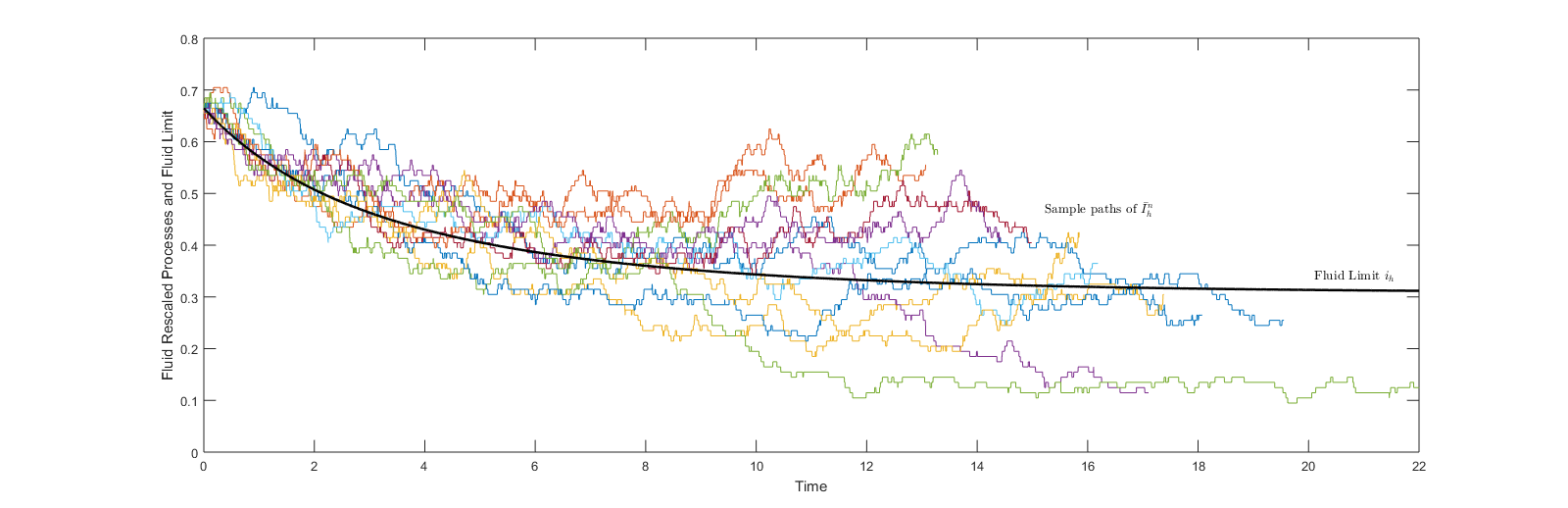}}
\caption{Fluid scaled Processes $\bar I^n_h$ and its fluid Limit $i_h$ under scaling Case II.}
\label{f:Ex5_3}
\end{figure}

\end{example}

\section{Discussion}\label{sec:disc}

Epidemic processes are essentially stochastic, but analyses of these stochastic models are often far from straightforward. This paper provides novel analysis approaches of simple stochastic epidemic models for vector-borne infectious diseases. Our approach uses several probabilistic and statistical techniques to reduce the dimension of the system and develop important mathematical quantities for understanding disease outbreaks and persistence. Rather than focusing on any specific disease, we instead rigorously analyzed a simple model and introduced several techniques that can be potentially used to study more general and complex disease models. Techniques that are explained here include martingales, scaling limit theorems, and quasi-stationary distributions. Specifically, the vector-borne epidemic model is formulated as a multi-dimensional CTMC. Using the fluid and diffusion approximations for CTMCs, efficient approximations of quasi-stationary distributions are provided.

\bibliographystyle{apa}
\bibliography{stoch_epidemic_ref}{}

\newpage

\appendix

\section{Appendix: Equilibrium points of Differential Equations}\label{app:DE}
Consider a differential equation:
\begin{align}\label{DE}
y'(t) = f(y),
\end{align}
where $f: \RR^K \to \RR^K$, and $y\in \RR^K.$ A point $y_0\in\RR^K$ is called an {\em equilibrium point} of \eqref{DE} if $f(y_0) =0.$ An equilibrium point $y_0$ is said to be {\em stable} if given $\epsilon > 0$ there is a $\delta > 0$ such that for every solution $y$ of \eqref{DE}, when $|y(0)-y_0| < \delta$, we have $\|y(t) - y_0\| < \epsilon$ for all $t >0$; it is said to be {\em locally asymptotically stable (LAS)} if there exists $a>0$ such that if $\|y(0)-y_0\|\le a$, then $y(t)\to y_0$ as $t\to\infty$; it is said to be {\em globally asymptotically stable (GAS)} if $y(t)\to y_0$ as $t\to\infty$; it is {\em globally exponentially stable (GES)} if $y_0$ is GAS and there exists $M, \kappa >0$ such that $\|y(t) - y_0\| \le M e^{-\kappa t}$; it is said to be {\em unstable} if it is not stable. (See Chapter 1 in \cite{perko2013differential} for more detail.) 

\section{Appendix: Matrix exponentials}\label{appendix:me}

The matrix exponential is a matrix function on square matrices. For a $n\times n$ square matrix $A$, the exponential of $A$, denoted by $e^A$ or $\exp\{A\}$, is defined as
\[
e^A = \sum_{k=0}^\infty \frac{A^k}{k!}.
\]
It is easily seen that, for the special case when $n=1$, the matrix exponential is reduced to be the regular exponential.

For $k\times k$ square matrices with $k$ distinct real eigenvalues $\lambda_1, \ldots, \lambda_k$, we have
\begin{align}
e^{At} &  = [\mathbf{v}_1, \ldots, \mathbf{v}_k] \begin{pmatrix} e^{\lambda_1 t} & \cdots & 0  \\ \vdots & \vdots & \vdots \\ 0& \cdots & e^{\lambda_k t} \end{pmatrix}[\mathbf{v}_1, \ldots, \mathbf{v}_k]^{-1},\label{me-2}
\end{align}
where $\mathbf{v}_1, \ldots, \mathbf{v}_k$ are the eigenvectors corresponding to $\lambda_1, \ldots, \lambda_k$. (See Chapter 1 in \cite{perko2013differential} for more detail.)


\section{Appendix: Vector-borne SIS with fixed vector population size}\label{vec-size-fixed}
We consider a stochastic vector-borne SIS model, in which the population sizes of hosts and vectors are assumed to be $n$ and $C_0 n$, respectively, where $C_0$ is some positive constant. The system equations can by formulated by using independent unit rate Poisson processes $N^n_i, i=1, 2, 3, 4$. For $t\ge 0,$ we have
\begin{align}
S_h^n(t) & = S_h^n(0)  - N_1^n\left(\beta_h^n\int_0^t \frac{I_v^n(u) S_h^n(u)}{n} \D u \right)  + N_2^n\left(\gamma_h^n \int_0^t  I_h^n(u) \D u \right), \label{sys-11}\\
I_h^n(t) & = I_h^n(0) + N_1^n\left(\beta_h^n \int_0^t \frac{I_v^n(u) S_h^n(u)}{n} \D u \right)  - N_2^n\left(\gamma_h^n \int_0^t  I_h^n(u) \D u \right), \label{sys-12}
\end{align}
and
\begin{align}
S_v^n(t) & = S_v^n(0) - N_3^n\left(\beta_v^n  \int_0^{ t} \frac{I_h^n(u) S_v^n(u)}{n} \D u \right)+ N_4^n \left(\gamma_v^n \int_0^t  I_v^n(u) \D u \right), \label{sys-13}\\
I_v^n(t) & = I_v^n(0) + N_3^n\left(\beta_v^n  \int_0^{ t} \frac{I_h^n(u) S_v^n(u)}{n} \D u \right)- N_4^n \left(\gamma_v^n \int_0^t  I_v^n(u) \D u \right). \label{sys-14}
\end{align}
We note that the epidemic system can be described by the $2$-dimensional CTMC $(I^n_h, I^n_v)$ with a finite state space $\{0, 1, \ldots, n\}\times \{0, 1, \ldots, C_0n\}$. Furthermore,  $(I^n_h, I^n_v)$ has an absorbing state $(0,0)$, and a unique stationary distribution $\delta_{(0,0)}.$

We still consider the two scaling cases in Section \ref{sec:scaling}, and the reproducation number is the same as in Section \ref{sec:reprod}. The fluid and diffusion approximations can be established similar to Theorems \ref{lln-1} -- \ref{lln-stability-II}, \ref{fclt-1}, and \ref{fclt-2}. In particular, for both cases, the fluid limits of $(I^n_h, I^n_v)$ are the same as Theorems \ref{lln-1} -- \ref{lln-stability-II}. For Case II, the diffusion limit of $\hati^n_h$ is the same as Theorem \ref{fclt-2}. In the following, we give the diffusion approximation under Case I.

\begin{theorem}\label{fclt-11} Consider Case I, and assume that $(\hat I^n_h(0), \hat I^n_v(0))$ converges in distribution to some random variable $(\hat I_h(0), \hat I_v(0))$, and that
\begin{align*}
\sqrt{n}(\beta^n_h - \beta_h) \to \hat\beta_h, \ \ \sqrt{n}(\gamma^n_h - \gamma_h) \to \hat\gamma_h, \ \ \sqrt{n}(\beta^n_v - \beta_v) \to \hat\beta_v, \ \ \sqrt{n}(\gamma^n_v - \gamma_v) \to \hat\gamma_v.
\end{align*}
 Then we have that $(\hat I_h^n, \hat I_v^n)$ converges in distribution to $(\hat I_h, \hat I_v)$, where $(\hati_h, \hati_v)$ is the unique solution to the following stochastic integral equations: For $t\ge 0,$
\begin{align*}
\hat I_h(t) & = \hat I_h(0) + \int_0^t \beta_h s_h(u)\hat I_v(u) - (\gamma_h +\beta_h i_v(u))\hat I_h(u) + \hat\beta_h i_v(u) s_h(u) - \hat\gamma_h i_h(u) \ \D u \\
& \quad + \int_0^t \sqrt{\beta_h i_v(u) s_h(u)+\gamma_h i_h(u)} \  \D B_1(u), \\
\hat I_v(t) & = \hat I_v(0) + \int_0^t \beta_vs_v(u)\hat I_u(u) - \left(\gamma_v +\beta_v i_h(u) \right)\hat I_v(u) + \hat\beta_v i_h(u)s_v(u)- \hat\gamma_v i_v(u) \  \D u \\
& \quad + \int_0^t \sqrt{\beta_v i_h(u)s_v(u)+\gamma_v i_v(u)} \ \D  B_2(u),
 \end{align*}
 with $B_1$ and $B_2$ being two independent standard Brownian motions.
\end{theorem}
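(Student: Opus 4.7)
The plan is to follow the standard density-dependent martingale approach, closely paralleling the proof of Theorem \ref{fclt-1} but with a simplified (2-dimensional rather than 3-dimensional) state space, since here the vector population size is fixed at $C_0 n$ so that $\hat S_v^n = -\hat I_v^n$ and only four independent Poisson processes are needed. First I would write each $N_i^n(\lambda)$ as $\lambda + M_i^n(\lambda)$, where $M_i^n$ is the compensated martingale, and substitute into \eqref{sys-12} and \eqref{sys-14}. Subtracting $n i_h(t)$ and $n i_v(t)$ using the ODEs \eqref{ode1-1}--\eqref{ode2-1} (which $(i_h,i_v)$ solves by Theorem \ref{lln-1}) and dividing by $\sqrt n$ yields
\begin{equation*}
\hat I_h^n(t) = \hat I_h^n(0) + \sqrt n\!\int_0^t\!\!\bigl[\beta_h^n \bar I_v^n(u)\bar S_h^n(u) - \beta_h i_v(u) s_h(u)\bigr]\D u - \sqrt n\!\int_0^t\!\!\bigl[\gamma_h^n \bar I_h^n(u) - \gamma_h i_h(u)\bigr]\D u + \hat{\mathcal M}_h^n(t),
\end{equation*}
and an analogous equation for $\hat I_v^n$, where $\hat{\mathcal M}_h^n$ and $\hat{\mathcal M}_v^n$ are suitably rescaled differences of compensated Poisson processes.

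Next I would linearize the drift. Writing $\bar I_v^n \bar S_h^n - i_v s_h = s_h(\bar I_v^n - i_v) + \bar I_v^n(\bar S_h^n - s_h)$ and using $\hat S_h^n = -\hat I_h^n$, the bracketed term in the first drift integral becomes
\begin{equation*}
\sqrt n(\beta_h^n - \beta_h)\bar I_v^n(u)\bar S_h^n(u) + \beta_h\bigl[s_h(u)\hat I_v^n(u) - \bar I_v^n(u)\hat I_h^n(u)\bigr],
\end{equation*}
which by the hypothesis $\sqrt n(\beta_h^n - \beta_h) \to \hat\beta_h$ together with the FLLN of Theorem \ref{lln-1} converges (in the appropriate sense) to $\hat\beta_h i_v(u)s_h(u) + \beta_h s_h(u)\hat I_v(u) - \beta_h i_v(u)\hat I_h(u)$. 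The $\gamma_h^n$ and vector terms are handled identically, giving exactly the linear drift coefficients appearing in the limiting equations.

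For the martingale part, I would apply the functional martingale CLT (Theorem 1.4, Chapter 7 of \cite{Ethier:Kurtz:1986}) to $(\hat{\mathcal M}_h^n, \hat{\mathcal M}_v^n)$. Since the four Poisson processes $N_i^n$ are independent with intensities that, after dividing by $n$, converge uniformly on compacts to $\beta_h i_v s_h, \gamma_h i_h, \beta_v i_h s_v, \gamma_v i_v$ by Theorem \ref{lln-1}, the quadratic variation processes of $n^{-1/2} M_i^n(\,\cdot\,)$ converge to $\int_0^t \beta_h i_v(u)s_h(u)\D u$, etc. The independence of the limiting Brownian motions $B_1,B_2$ follows from the vanishing of cross quadratic variations. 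Combining $n^{-1/2}(M_1^n - M_2^n)$ into a single Brownian integral with variance $\beta_h i_v s_h + \gamma_h i_h$ (and similarly for the vector equation) yields the stated diffusion coefficients.

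The main obstacle is establishing tightness of $(\hat I_h^n, \hat I_v^n)$ in $D([0,T];\RR^2)$ and the joint identification of the limit. Tightness I would obtain by bounding the drift integrand pathwise in terms of $\sup_{s\le t}|(\hat I_h^n(s), \hat I_v^n(s))|$ plus an $O(1)$ remainder, then applying a stochastic Gronwall argument together with Doob's inequality to control the martingale part, giving $\EE\sup_{s\le T}|(\hat I_h^n(s), \hat I_v^n(s))|^2 < \infty$ uniformly in $n$; Aldous' criterion then applies since increments are controlled by the (uniformly bounded) quadratic variation. Uniqueness of the limiting linear SDE with continuous, locally bounded coefficients is standard, and the continuous mapping theorem together with the identification of any subsequential limit as a solution of the SDE completes the proof.
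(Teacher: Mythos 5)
Your proposal is correct and follows essentially the same route as the paper, which does not give a standalone proof of Theorem \ref{fclt-11} but notes it is established exactly as Theorem \ref{fclt-1}: decompose via compensated Poisson martingales, linearize the drift using $\hat S_h^n=-\hat I_h^n$ (and here $\hat S_v^n=-\hat I_v^n$), apply the martingale FCLT with a random time change for the diffusion coefficients, prove $C$-tightness by a Gronwall argument, and identify the limit with the unique solution of the linear SDE. Your version is in fact slightly cleaner than the paper's Theorem \ref{fclt-1} argument because the fixed vector population makes all fluid-scaled quantities uniformly bounded, so no separate control of $\sup_t \bar S_v^n(t)$ is needed.
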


We next study the quasi-stationary distribution of $(I^n_h, I^n_v)$ when $\clr_0 > 1$. Using the fluid and diffusion approximations, we have  for large $n$,
\begin{align*}
(I^n_h(t), I_v^n(t)) \approx  \underbrace{n(i_h(t), i_v(t))}_{\mbox{Fluid approximation}} +  \underbrace{\sqrt{n}(\hati_h(t), \hati_v(t)),}_{\mbox{Diffusion approximation}} \  \mbox{in distribution.}
\end{align*}

For $t\ge 0,$ let
\begin{align} \label{appdxC}
C(t) & = \begin{pmatrix} - (\gamma_h +\beta_h i_v(t)) & \beta_h s_h(t) \\ \beta_vs_v(t) &- \left(\gamma_v +\beta_v i_h(t) \right)  \end{pmatrix}, \\
\label{appdxsigma} \sigma(t) & = \begin{pmatrix} \sqrt{\beta_h i_v(t) s_h(t)+\gamma_h i_h(t)}  & 0 \\ 0 & \sqrt{\beta_v i_h(t)s_v(t)+\gamma_v i_v(t)}\end{pmatrix},\\
\B(t) & = \begin{pmatrix} B_1(t) \\ B_2(t) \end{pmatrix},
\end{align}
and
\[
\hati(t) = (\hati_h(t), \hati_v(t)).
\]
Then the diffusion limit process $\hati$ satisfies the following $2$-dimensional linear SDE:
\begin{align}\label{diff-sde-1}
\D \hati(t) & = C(t)\hati(t) \D t + \sigma(t) \D \B(t).
\end{align}

When $\clr_0 > 1$ and $(i_h(0), i_v(0)) = E^e$, where $E^e$ is the globally asymptotically stable endemic equilibrium point of $(i_h(t), i_v(t))$. Then for $t\ge 0$, $(i_u(t), i_v(t))= E^e$, and it follows that
\begin{align}
C(t) & \equiv C_e =\begin{pmatrix} - (\gamma_h +\beta_h i_v^e) & \beta_h s_h^e \\ \beta_vs_v^e&- \left(\gamma_v +\beta_v i_h^e \right)   \end{pmatrix}, \label{fixed-popl}\\ 
\sigma(t) & \equiv \sigma_e = \begin{pmatrix} \sqrt{\beta_h i_v^e s_h^e+\gamma_h i_h^e}  & 0 \\ 0 & \sqrt{\beta_v i_h^es_v^e+\gamma_v i_v^e}\end{pmatrix}. 
\end{align}
The SDE in \eqref{diff-sde-1} can be simplified as the following $2$-dimensional homogeneous linear SDE
\begin{align}\label{sde_fix_popu}
\D \hat I(t) = C_e\hat I(t) \D t + \sigma_e \D \B(t).
\end{align}
Solving the above SDE, we have 
\begin{align}\label{sde_explicit_soln_fix_popu}
\hati(t) = \exp\{C_et\} \hati(0) +  \int_{0}^t \exp\{C_e(t-s)\} \sigma_e\D B(s), \ t\ge 0,
\end{align}
whose covariance matrix is given as 
\begin{align}\label{cov-app-2}
\var(\hati(t)) = \exp\{C_et\} \var(\hati(0)) \exp\{C_et\}' + \int_0^t  \exp\{C_e(t-s)\}\sigma_e{\sigma_e}'\exp\{C_e(t-s)\}' \D s. 
\end{align}
It can be seen that when $\clr_0 > 1$, the matrix $C_e$ has two distinct negative eigenvalues $\lambda_1$ and $\lambda_2$. Thus, from \eqref{me-2}, we have
\begin{align}
\Sigma^* \equiv \lim_{t\to\infty} \var(\hati(t)) 
 = \int_0^\infty \exp\{C_es\}\sigma_e{\sigma_e}'\exp\{C_es\}' \D s < \infty. \label{stat-cov-fix}
\end{align}

\section{Appendix: Proofs}\label{proof}

We provide all the proofs in this section. The Poisson processes in our model \eqref{sys-1} -- \eqref{sys-4} depend on the parameter $n$, and the LLN and CLT results from \cite{kurtz1978strong} or Chapter 11 of \cite{Ethier:Kurtz:1986} cannot be applied directly. To prove convergence in distribution for the sequences of stochastic processes in Theorems \ref{lln-1}, \ref{lln-2}, \ref{fclt-1} and \ref{fclt-2}, we first show the $C$-tightness of the sequences, and then characterize the uniqueness of the weak limits. 

Let $(\Omega, \mathcal{F},\mathbb{P})$ be a complete probability space; all random variables and stochastic processes described in this work are, without loss of generality, defined on this common probability space. The following notation will be used. Let $\RR^K_+ = \{x\in \RR^K: x_i \ge 0, i=1, \ldots, K\}$. Denote by $D([0,\infty), \RR^K_+)$ the space of right continuous functions with left limits (RCLL) from $[0,\infty)$ to $\RR^K_+$ equipped with the usual Skorohod topology, and  $\mathcal{C}^2_0(\RR^K_+)$ the space of twice differentiable bounded functions from $\RR^K_+ \to \RR$. A stochastic process $X$ with values in $\RR^K$ will be regarded as a random variable with values in $D([0,\infty), \RR^K)$. A sequence of RCLL stochastic processes $\{X^n\}_{n\ge 1}$ is said to be $C$-tight if $\{X^n\}_{n\ge 1}$ is tight and any weak limit has continuous sample paths. Convergence in distribution of random variables/stochastic processes $X^n$ to $X$ will be denoted as $X^n \Rightarrow X$.

The following theorem from \cite{Billingsley99} will be used in the proofs.
\begin{theorem}[\cite{Billingsley99}]\label{tightness}
The sequence of stochastic processes $\{X^n(t); t\ge 0\}$ in $D([0,\infty), \RR^K)$ is $C$-tight if and only if the following two conditions hold:
\begin{itemize}
\item[\rm (i)] For any $T\ge 0$, \[\lim_{a\uparrow\infty}\PP\left(\sup_{0\le t \le T} |X^n(t)|> a\right) =0, \ \ n\ge 1. \]
\item[\rm (ii)] For any $\epsilon>0$ and $0\le t_1 \le t_2 < \infty$,
\[
\lim_{\delta\downarrow 0}\limsup_{n\to\infty} \PP\left( \sup_{0\le t_1 \le t_2 \le t_1+\delta} |X^n(t_2) - X^n(t_1)|>\epsilon\right) =0.
\]
\end{itemize}
\end{theorem}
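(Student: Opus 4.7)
The plan is to prove this standard $C$-tightness characterization by combining Prokhorov's theorem with the Arzel\`a--Ascoli compactness criterion. First I would reduce to a single compact interval: a sequence is $C$-tight in $D([0,\infty),\RR^K)$ if and only if the restrictions are tight in $C([0,T],\RR^K)$ for every $T>0$, by a diagonal argument together with the fact that the Skorohod and uniform topologies agree on $C([0,T],\RR^K)$. So it suffices to characterize tightness in $C([0,T],\RR^K)$ for each fixed $T$.

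For the forward implication (necessity of (i) and (ii)), Prokhorov's theorem supplies, for each $\eta>0$, a compact $K_\eta\subset C([0,T],\RR^K)$ with $\PP(X^n|_{[0,T]}\in K_\eta)\ge 1-\eta$ uniformly in $n$. Arzel\`a--Ascoli then furnishes a uniform sup-norm bound and a common modulus of continuity on $K_\eta$, which translate directly into (i) (with its natural uniform-in-$n$ reading) and (ii).

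For the reverse implication, which is the main content, assume (i) and (ii) and fix $T>0$, $\eta>0$. Use (i) to pick $a$ with $\sup_n \PP(\sup_{[0,T]}|X^n|>a)<\eta/2$. Applying (ii) inductively, choose $\delta_k\downarrow 0$ and integers $n_k$ such that $\PP(w(X^n,\delta_k)>1/k)<\eta/2^{k+1}$ for all $n\ge n_k$, where $w(\cdot,\delta)$ is the uniform modulus of continuity on $[0,T]$. Set
\[
K_\eta=\bigl\{f\in C([0,T],\RR^K):\|f\|_\infty\le a,\ w(f,\delta_k)\le 1/k\ \text{for all}\ k\ge 1\bigr\},
\]
which is compact in $C([0,T],\RR^K)$ by Arzel\`a--Ascoli. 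A union bound yields $\PP(X^n|_{[0,T]}\in K_\eta)\ge 1-\eta$ for all $n$ sufficiently large; for each of the finitely many remaining indices I would enlarge $K_\eta$ by adjoining an additional compact set absorbing the individual path with probability at least $1-\eta$, using that any single RCLL process which is bounded in probability and has a vanishing modulus of continuity is itself tight. Prokhorov then delivers tightness on $[0,T]$, and a diagonal argument across $T=1,2,\ldots$ upgrades this to $C$-tightness on $[0,\infty)$.

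The main obstacle is twofold. Technically, I must control the modulus of continuity uniformly in $n$ through a countable family of constraints $(\delta_k,1/k)$, whereas (ii) gives control of only one such pair at a time; this is bridged by the inductive construction above together with summability of the chosen tail probabilities. The more delicate conceptual point is ensuring that subsequential weak limits actually lie in $C$ rather than merely in $D$: this requires the Portmanteau lemma applied to the lower-semicontinuous functional $w(\cdot,\delta)$ on $D([0,T],\RR^K)$, since (ii) then forces any weak limit $X$ to satisfy $w(X,\delta)\to 0$ as $\delta\to 0$ almost surely, hence to have continuous paths.
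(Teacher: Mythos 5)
The paper does not prove this statement at all---it is quoted from \cite{Billingsley99} as a known tool---so your proposal is being measured against the standard textbook argument rather than anything in the paper. Your outline contains a genuine gap: you build the compact sets $K_\eta$ inside $C([0,T],\RR^K)$ and claim $\PP(X^n|_{[0,T]}\in K_\eta)\ge 1-\eta$, but the theorem concerns sequences in $D([0,\infty),\RR^K)$, and the processes to which it is applied in this paper are pure-jump (Poisson-driven) processes. A path with jumps does not lie in \emph{any} subset of $C([0,T],\RR^K)$, so $\PP(X^n|_{[0,T]}\in K_\eta)=0$ for every $n$, and the same objection defeats your necessity argument (``Prokhorov supplies a compact $K_\eta\subset C([0,T],\RR^K)$''): $C$-tightness means tight in $D$ with continuous \emph{limit} points, not that the prelimit laws concentrate near a compact subset of $C$. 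Your opening reduction---that $C$-tightness is equivalent to tightness of the restrictions in $C([0,T],\RR^K)$---is therefore false for discontinuous processes, and the whole middle of the argument rests on it. The repair is to keep the two halves of the conclusion separate: (a) conditions (i) and (ii) give tightness in $D([0,T],\RR^K)$ via the Skorokhod-space compactness criterion, using that the c\`adl\`ag modulus $w'(f,\delta)$ is dominated by the uniform modulus $w(f,2\delta)$, so the compact sets must be cut out of $D$, not $C$; and (b) continuity of the limit points then follows either from your own closing observation (lower semicontinuity of $w(\cdot,\delta)$ on $D$ plus Portmanteau) or from the bound $\sup_t|\Delta f(t)|\le w(f,\delta)$, which forces the maximal jump of any weak limit to vanish. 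Your final paragraph thus contains the correct idea for half of the proof, but it cannot rescue the construction of $K_\eta$ as written.

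A secondary, more repairable slip is the union bound: for a fixed $n$, your inductive choice controls $\PP(w(X^n,\delta_k)>1/k)$ only for those $k$ with $n\ge n_k$, which is finitely many $k$, so summing $\eta/2^{k+1}$ over all $k$ is not justified. The standard device (Billingsley's reduction of tightness to an eventual-in-$n$ criterion) is to use tightness of each \emph{individual} law $\mathcal{L}(X^n)$, $n<n_k$, in the Polish space $D$ to shrink $\delta_k$ further so that the $k$-th bound holds for \emph{all} $n$; ``enlarging $K_\eta$ to absorb the remaining paths'' cannot work as long as $K_\eta$ is required to sit inside $C$.
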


\subsection{Proofs for Case I}

In this section, we prove Theorems \ref{lln-1}, \ref{lln-stability-I}, and \ref{fclt-1} for scaling Case I. 
\begin{proof}[Proof of Theorem \ref{lln-1}]
For the unit-rate Poisson processes $N_i^n$, define
\begin{align*}
\bar N_i^n(t)= \frac{1}{n}N^n_i(nt), \ \mbox{and} \ \bar N_i^{n,c}(t) = \frac{1}{n} [N_i^n(nt) - nt].
\end{align*}
We first show the $C$-tightness of $(\bar I^n_h, \bar S^n_v, \bar I^n_v)$. We observe that for $t\ge 0$,
\begin{align*}
\bar I^n_h(t) & = \bar I^n_h(0) + \bar N^n_1\left( \beta^n_h\int_0^t \bar I^n_v(u)\bar S^n_h(u)\D u\right) - \bar N^n_2\left(\gamma^n_h\int_0^t \bar I^n_h(u) \D u \right), \\
\bar S^n_v(t) & = \bar S^n_v(0) + \bar N^n_3\left(\gamma^n_v\int_0^t \bar S^n_v(u) + \bar I^n_v(u) \D u\right) - \bar N^n_4\left(\beta^n_v\int_0^t \bar I^n_h(u) \bar S^n_v(u) \D u\right)\\
& \quad  - \bar N^n_5\left(\gamma^n_v\int_0^t \bar S^n_v(u) \D u\right), \\
\bar I^n_v(t) & = \bar I^n_v(0) + \bar N^n_4\left(\beta^n_v\int_0^t \bar I^n_h(u) \bar S^n_v(u) \D u\right) - \bar N^n_6\left(\gamma^n_v\int_0^t \bar I^n_v(u) \D u\right).
\end{align*}
It suffices to show that the two conditions in Theorem \ref{tightness} hold for $(\bar I^n_h, \bar S^n_v, \bar I^n_v)$. Noting that $\bar I^n_h(t)\in [0, 1]$ for all $t\ge 0$, we consider $(\bar S^n_v, \bar I^n_v)$, and observe that for $t\ge 0$,
\begin{align}
\bar S^n_v(t) + \bar I^n_v(t) & = C_0 + \bar N^n_3\left(\gamma^n_v\int_0^t \bar S^n_v(u) + \bar I^n_v(u) \D u\right) -\bar N^n_5\left(\gamma^n_v\int_0^t \bar S^n_v(u) \D u\right) \nonumber\\
& \quad - \bar N^n_6\left(\gamma^n_v\int_0^t \bar I^n_v(u) \D u\right) \label{eqn:vectorpopu}\\
& \le C_0 + \bar N^n_3\left(\gamma^n_v\int_0^t \bar S^n_v(u) + \bar I^n_v(u) \D u\right) +\bar N^n_5\left(\gamma^n_v\int_0^t \bar S^n_v(u) \D u\right) \nonumber\\
& \quad + \bar N^n_6\left(\gamma^n_v\int_0^t \bar I^n_v(u) \D u\right).\label{ieqn:vectorpopu}
\end{align}
It follows, from \eqref{eqn:vectorpopu}, that
\begin{align}\label{meanvecsize}
\EE[\bar S^n_v(t) + \bar I^n_v(t)]  = C_0,
\end{align}
and from \eqref{ieqn:vectorpopu}, for $T\ge 0,$
\begin{align*}
\EE\left[\sup_{0\le t\le T}(\bar S^n_v(t) + \bar I^n_v(t))\right]& \le C_0 + 2\gamma^n_v \int_0^T \EE[\bar S^n_v(u) + \bar I^n_v(u)] \D u = C_0 + 2\gamma^n_v C_0 T,
\end{align*}
which implies that the condition in Theorem \ref{tightness} (i) holds.
Next we show that the condition in Theorem \ref{tightness} (ii) holds.
We note that for $\delta >0,$
\begin{align*}
\sup_{|t-s|<\delta} |(\bar I^n_h(t), \bar S^n_v(t), \bar I^n_v(t)) - (\bar I^n_h(s), \bar S^n_v(s), \bar I^n_v(s))| \le {\rm (I) + (II) + (III) +2 (IV) + (V) + (VI)},
\end{align*}
where
\begin{align*}
{\rm (I)} & = \sup_{|t-s|<\delta} \left| \bar N^n_1\left( \beta^n_h\int_0^t \bar I^n_v(u)\bar S^n_h(u)\D u\right) - \bar N^n_1\left( \beta^n_h\int_0^s \bar I^n_v(u)\bar S^n_h(u)\D u\right) \right|, \\
{\rm (II)} & = \sup_{|t-s|<\delta} \left| \bar N^n_2\left(\gamma^n_h\int_0^t \bar I^n_h(u) \D u \right)- \bar N^n_2\left(\gamma^n_h\int_0^s \bar I^n_h(u) \D u \right) \right|, \\
{\rm (III)} & = \sup_{|t-s|<\delta} \left| \bar N^n_3\left(\gamma^n_v\int_0^t \bar S^n_v(u) + \bar I^n_v(u) \D u\right)-  \bar N^n_3\left(\gamma^n_v\int_0^s \bar S^n_v(u) + \bar I^n_v(u) \D u\right) \right|, \\
{\rm (IV)} & = \sup_{|t-s|<\delta} \left| \bar N^n_4\left(\beta^n_v\int_0^t \bar I^n_h(u) \bar S^n_v(u) \D u\right)- \bar N^n_4\left(\beta^n_v\int_0^s \bar I^n_h(u) \bar S^n_v(u) \D u\right) \right|, \\
{\rm (V)} & = \sup_{|t-s|<\delta} \left| \bar N^n_5\left(\gamma^n_v\int_0^t \bar S^n_v(u) \D u\right)- \bar N^n_5\left(\gamma^n_v\int_0^s \bar S^n_v(u) \D u\right) \right|, \\
{\rm (VI)} & = \sup_{|t-s|<\delta} \left| \bar N^n_6\left(\gamma^n_v\int_0^t \bar I^n_v(u) \D u\right)- \bar N^n_6\left(\gamma^n_v\int_0^s \bar I^n_v(u) \D u\right)\right|.
\end{align*}
For (II), noting that $0\le \sup_{t\ge 0} \bar I^n_h(t) \le 1$, it follows that
\begin{align}
\EE[{\rm (II)}] & \le \EE\left[ \sup_{|t-s|<\delta} \sup_{0\le \lambda \le \gamma^n_h} \left| \bar N^n_2\left(\lambda t \right)- \bar N^n_2\left(\lambda s \right) \right|\right] = \EE\left[ \sup_{|t-s|<\delta} \sup_{0\le \lambda \le \gamma^n}  \bar N^n_2\left(\lambda |t-s| \right)\right] \nonumber\\
& \le \EE[\bar N^n_2\left(\gamma^n_h \delta \right)] = \gamma^n_h \delta. \label{est-II}
\end{align}
To continue, we note that from functional law of large numbers for Poisson processes, $\bar N^{n}_i$ converges weakly to the identity map from $[0,\infty)$ to $[0,\infty)$, and thus it is certainly tight.
Now for (III), we have
\begin{align}
&\lim_{\delta\downarrow 0} \limsup_{n\to\infty}\PP({\rm (III)}>\epsilon) \nonumber\\
& = \lim_{K\uparrow\infty} \lim_{\delta\downarrow 0} \limsup_{n\to\infty}\PP\left({\rm (III)}>\epsilon, \sup_{0\le u \le \max\{t, s\}+\delta}[\bar S^n_v(t) + \bar I^n_v(t)] \le K\right)\nonumber \\
& \quad + \lim_{K\uparrow\infty} \lim_{\delta\downarrow 0} \limsup_{n\to\infty}\PP\left({\rm (III)}>\epsilon, \sup_{0\le u \le \max\{t, s\}+\delta}[\bar S^n_v(t) + \bar I^n_v(t)] > K\right) \nonumber \\
& \le \lim_{K\uparrow\infty} \lim_{\delta\downarrow 0} \limsup_{n\to\infty}\PP\left(\sup_{|t-s|<\delta} \sup_{0\le \lambda \le \gamma^n_vK}  \bar N^n_3\left(\lambda |t-s| \right)>\epsilon\right)\nonumber \\
& \quad + \lim_{K\uparrow\infty}\lim_{\delta\downarrow 0}  \limsup_{n\to\infty}\PP\left(\sup_{0\le u \le \max\{t, s\}+\delta}[\bar S^n_v(t) + \bar I^n_v(t)] > K\right) \nonumber \\
& \le \lim_{K\uparrow\infty} \lim_{\delta\downarrow 0} \limsup_{n\to\infty}\frac{\EE[\bar N^n_3\left(\gamma^n_v K\delta \right)]}{\epsilon}  + \lim_{K\uparrow\infty}\lim_{\delta\downarrow 0}  \limsup_{n\to\infty}\frac{\EE[\sup_{0\le u \le \max\{t, s\}+\delta}(\bar S^n_v(t) + \bar I^n_v(t))]}{K} \nonumber  \\
& = 0. \label{est-III}
\end{align}
Using similar arguments to those for (III), we can show that
\begin{align}
\lim_{\delta\downarrow 0} \limsup_{n\to\infty}\PP({\rm (I)}>\epsilon) & =0, \label{est-I} \\
\lim_{\delta\downarrow 0} \limsup_{n\to\infty}\PP({\rm (IV)}>\epsilon) & =0, \label{est-IV} \\
\lim_{\delta\downarrow 0} \limsup_{n\to\infty}\PP({\rm (V)}>\epsilon) & =0, \label{est-V} \\
\lim_{\delta\downarrow 0} \limsup_{n\to\infty}\PP({\rm (VI)}>\epsilon) & =0. \label{est-VI}
\end{align}
The desired condition in Theorem \ref{tightness} (ii) follows from \eqref{est-II} -- \eqref{est-VI}.

 Let $(i_h, s_v, i_v)$ be a weak limit. Using the fact that $\bar N^n_i$ converges to the identity map from $[0,\infty)$ to $[0,\infty)$, we see that $(i_h, i_v)$ satisfies the ODE system defined by \eqref{ode1-1} and \eqref{ode2-1}, and $s_v(t) = C_0 -i_v(t)$ for $t\ge 0$. Finally, since the ODEs \eqref{ode1-1} and \eqref{ode2-1} have a unique solution, we conclude that $(\bar I^n_h, \bar S^n_v, \bar I^n_v)$ converges to $(i_h, s_v, i_v)$ weakly. Lastly, we note that from \eqref{eqn:vectorpopu}, for $t\ge 0$,
\begin{align*}
\EE\left[\sup_{0\le t\le T}(\bar S^n_v(t) + \bar I^n_v(t))^2\right]& \le 4 C_0^2 + 4 \EE\left[\bar N^n_3\left(\gamma^n_v\int_0^T \bar S^n_v(u) + \bar I^n_v(u) \D u\right)\right]^2 \\
& \quad + 4 \EE\left[\bar N^n_5\left(\gamma^n_v\int_0^T \bar S^n_v(u)\D u\right)\right]^2 + 4 \EE\left[\bar N^n_6\left(\gamma^n_v\int_0^T \bar I^n_v(u) \D u\right)\right]^2 \\
& = 4 C_0^2 + 8 \gamma^n_v\int_0^T \EE[\bar S^n_v(u) + \bar I^n_v(u)] \D u  \\
& = 4 C_0^2 + 8 \gamma^n_v C_0 T,
\end{align*}
which implies the uniform integrability of $(\bar S^n_v, \bar I^n_v)$. It now follows that $\EE[|(\bar I^n_h, \bar S^n_v, \bar I^n_v)-(i_h, s_v, i_v)|] \to 0.$

\end{proof}

\begin{proof}[Proof of Theorem \ref{lln-stability-I}]
We investigate the properties of the ODE system \eqref{ode1-1} and \eqref{ode2-1}. Let $\Omega = [0,1]\times [0, C_0]$. It can be easily verified that the set $\Omega$ is positively invariant for the system \eqref{ode1-1} and \eqref{ode2-1}, which says when $(i_h(0), i_v(0))\in \Omega$ then $(i_h(t), i_v(t))\in \Omega$ for any $t\ge 0.$ For $(x,y)\in \Omega$, let
\begin{align}
f(x,y) & = \beta_h y (1-x) -\gamma_h x, \\
g(x,y)& = \beta_v x (C_0 - y) - \gamma_v y.
\end{align}
Then
\[
i_h'(t) = f(i_h, i_v), \ \ i_v'(t) = g(i_h, i_v).
\]
Noting that $f$ and $g$ are Lipschitz continous, the ODE system \eqref{ode1-1} and \eqref{ode2-1} has a unique solution. The equilibrium points $E^f$ and $E^e$ can be obtained by letting $f(i_h, i_v)$ and $g(i_h, i_v)$ equal to $0$, and solving the equations for $(i_u, i_v)$. We next show the stability properties of $E^f$ and $E^e$. First, the locally asymptotic stability can be shown by linearizing $f$ and $g$ at the equalibrium point, and observing the eigenvalues of the corresponding Jacobian matrix. For the globally asymptotic stability of $E^f$. We consider the Lyapunov function $L(i_h,i_v) = ai_h + bi_v$, where $a, b$ are positive constants such that $C_0\beta_v/\gamma_h \le a/b \le \gamma_v/\beta_h$. Then $L$ is positive definite, and $\dot{L}(i_h, i_v) =  L_x(i_h(t), i_v(t))i_h'(t) + L_y(i_h(t), i_v(t)) i_v'(t)$ is nonpositive, and is equal to $0$ only when $(i_h, i_v) = E^f$. From Lasalle's theorem, $E^f$ is globally asymptotically stable when $\mathcal{R}_0 \le 1.$ When $\clr_0 <1$, $a$ and $b$ can be chosen such that $C_0\beta_v/\gamma_h < a/b < \gamma_v/\beta_h$, and then
\begin{align*}
\dot{L}(i_h, i_v) & =  L_x(i_h(t), i_v(t))i_h'(t) + L_y(i_h(t), i_v(t)) i_v'(t)\\
& \le (bC_0 \beta_v  - a \gamma_h)i_h(t) +  (a\beta_h - b\gamma_v) i_v(t) \\
& \le -\alpha L(i_h(t), i_v(t)),
\end{align*}
where $\alpha>0$ satisfies $(C_0\beta_v + \alpha)(\beta_h +\alpha) \le \gamma_h\gamma_v.$ Hence $E^f$ is exponentially asymptotically stable when $\clr_0 <1$.  The global asymptotical stability of $E^e$ follows from Theorem 2.1 in \cite{BERETTA1986677}. Adapting the notation from \cite{BERETTA1986677}, for our system \eqref{ode1-1} and \eqref{ode2-1}, let \[A = \begin{pmatrix} 0 & -\beta_h \\ -C_0\beta_v& 0\end{pmatrix}, e = \begin{pmatrix} -\gamma_h\\-\gamma_v \end{pmatrix}, B = \begin{pmatrix} 0&\beta_h \\ C_0\beta_v& 0\end{pmatrix}, c = \begin{pmatrix} 0 \\ 0 \end{pmatrix}.\]
It can be verified that when $W_1, W_2 >0$ satisfy
\[
W_1 \beta_h \left(\frac{1}{i_h^*} -1 \right) = W_2C_0 \beta_v \left(\frac{1}{i_v^*} -1 \right),
\]
then the matrix $A + \diag(\frac{1}{i_h^*}, \frac{1}{i_v^*}) B + \diag(-\frac{\beta_h i_v}{i_hi_h^*}, -\frac{C_0\beta_v i_h}{i_vi_v^*})$ is symmetric, and thus Theorem 2.1 in \cite{BERETTA1986677} holds.
\end{proof}

\begin{proof}[Proof of Theorem \ref{fclt-1}] For $t\ge 0$, define
\[
\hatn^n_i(t) = \frac{1}{\sqrt{n}}(N^n_i(nt) -n t).
\]
For $t\ge 0$, we have that
\begin{align*}
\hati^n_h(t) & = \sqrt{n}(\bari^n_h(t)-i_h(t)) \\
& = \hati^n_h(0) + \sqrt{n}\left(\barn^n_1\left(\beta_h^n\int_0^t \bari^n_v(u)\bars_h^n(u) \D u\right) - \beta_h\int_0^t i_v(u)s_h(u)\D u\right)\\
& \quad - \sqrt{n}\left(\barn^n_2\left(\gamma_h^n\int_0^t \bari^n_h(u)\D u\right) - \gamma_h\int_0^t i_h(u)\D u\right) \\
& = \hati^n_h(0) + \hatn^n_1\left(\beta_h^n\int_0^t \bari^n_v(u)\bars_h^n(u) \D u\right) + \sqrt{n} \left(\beta_h^n\int_0^t \bari^n_v(u)\bars_h^n(u) \D u - \beta_h\int_0^t i_v(u)s_h(u)\D u\right) \\
& \quad - \hatn^n_2\left(\gamma_h^n\int_0^t \bari^n_h(u)\D u\right)- \sqrt{n}\left(\gamma_h^n\int_0^t \bari^n_h(u)\D u - \gamma_h\int_0^t i_h(u)\D u\right) \\
& = \hati^n_h(0) + M^n_1(t) + \beta_h\int_0^t \bars^n_h(u) \hati^n_v(u) \D u - \int_0^t (\beta_h i_v(u)+ \gamma_h ) \hati^n_h(u) \D u,
\end{align*}
where
\begin{align*}
M^n_1(t) & =  \hatn^n_1\left(\beta_h^n\int_0^t \bari^n_v(u)\bars_h^n(u) \D u\right) -\hatn^n_2\left(\gamma_h^n\int_0^t \bari^n_h(u)\D u\right) \\
& \quad + \sqrt{n}(\beta^n_h-\beta_h) \int_0^t \bari^n_v(u)\bars^n_h(u)\D u - \sqrt{n}(\gamma^n_h -\gamma_h)\int_0^t \bari^n_h(u) \D u.
\end{align*}
Similarly, we have for $t\ge 0$,
\begin{align*}
\hats^n_v(t)& = \hats^n_v(0) + M^n_2(t)  - \beta_v\int_0^t \bars^n_v(u) \hati^n_h(u) \D u -\beta_v\int_0^t  i_h(u)\hat S^n_v(u)\D u + \gamma_v \int_0^t \hati^n_v(u) \D u, \\
\hati^n_v(t)& = \hati^n_v(0) + M^n_3(t) + \beta_v\int_0^t \bars^n_v(u) \hati^n_h(u) \D u +\beta_v\int_0^t  i_h(u)\hat S^n_v(u)\D u - \gamma_v \int_0^t \hati^n_v(u) \D u,
\end{align*}
where
\begin{align*}
M^n_2(t) & = \hatn^n_3\left(\gamma_v^n\int_0^t \bars^n_v(u) + \bari^n_v(u)\D u\right) -  \hatn^n_4\left(\beta_v^n\int_0^t \bari^n_h(u)\bars_v^n(u) \D u\right) -\hatn^n_5\left(\gamma_v^n\int_0^t \bars^n_v(u)\D u\right) \\
& \quad - \sqrt{n}(\beta^n_v-\beta_v) \int_0^t \bari^n_h(u)\bars^n_v(u)\D u + \sqrt{n}(\gamma^n_v -\gamma_v)\int_0^t \bari^n_v(u) \D u, \\
M^n_3(t) & = \hatn^n_4\left(\beta_v^n\int_0^t \bari^n_h(u)\bars_v^n(u) \D u\right)  - \hatn^n_6\left(\gamma_v^n\int_0^t \bari^n_v(u)\D u\right) \\
& \quad + \sqrt{n}(\beta^n_v-\beta_v) \int_0^t \bari^n_h(u)\bars^n_v(u)\D u - \sqrt{n}(\gamma^n_v -\gamma_v)\int_0^t \bari^n_v(u) \D u.
\end{align*}
We will verify the two conditions in Theorem \ref{tightness} to establish the tightness of $(\hati^n_h, \hats^n_v, \hati^n_v).$ We first note that from the functional central limit theorem for unit-rate Poisson processes, $(\hatn^n_1, \ldots, \hatn^n_6)$ converges to a $6$-dimensional standard Brownian motion. From Theorem \ref{lln-1}, and the random time change theorem, we have that
\begin{align}
(M^n_1, M^n_2, M^n_3)' \Go   (W_1, W_2, W_3)',
\end{align}
where
\begin{align}
W_1(t) & = \int_0^t [\hat\beta_h i_v(u)s_h(u)-\hat\gamma_hi_h(u)]\D u + \int_0^t [\beta_h i_v(u)s_h(u) + \gamma_h i_h(u)] \D B_1(u), \label{BM-111} \\
W_2(t) & = \int_0^t [-\hat\beta_v i_h(u)s_v(u)+\hat\gamma_vi_v(u)]\D u + \gamma_v\int_0^t [C_0 + s_v(u)] \D B_2(u)\nonumber \\
& \quad - \beta_v\int_0^t  i_h(u)s_v(u) \D B_3(u), \label{BM-112}\\
W_3(t) & = \int_0^t [\hat\beta_v i_h(u)s_v(u)-\hat\gamma_vi_v(u)]\D u +  \beta_v \int_0^t  i_h(u)s_v(u) \D B_3(u) +\gamma_v \int_0^t  i_v(u) \D B_4(u),\label{BM-113}
\end{align}
with $B_i, i=1, 2, 3, 4$, independent standard Brownian motions, $i_v$ and $i_u$ defined by \eqref{ode1-1} and \eqref{ode2-1}, $s_u = 1- i_u$, and $s_v = 1- i_v.$
%
In particular $(M_1^n, M^n_2, M^n_3)$ satisfies the conditions inTheorem \ref{tightness} (i) and (ii).
We next observe that for $0\le t_1 \le t_2$,
\begin{align*}
& |(\hati^n_h(t_2), \hats^n_v(t_2), \hati^n_v(t_2)) - (\hati^n_h(t_1), \hats^n_v(t_1), \hati^n_v(t_1))| \\
& \le \sum_{i=1}^3 |M^n_i(t_2) - M^n_i(t_1)|  + \int_{t_1}^{t_2} (\beta_h i_h(u)+\gamma_h + 2\beta_v \bar S^n_v(u))|\hati^n_h(u)| \D u \\
& \quad + \int_{t_1}^{t_2} 2 \beta_v i_h(u) |\hats^n_v(u)| \D u + \int_{t_1}^{t_2} (\beta_h \bar S^n_h(u) + 2\gamma_v) |\hati^n_v(u)| \D u \\
& \le  \sum_{i=1}^3 |M^n_i(t_2) - M^n_i(t_1)|  + \int_{t_1}^{t_2} (2\beta_h+2\beta_v+\gamma_v+2\gamma_v + 2\beta_v \bar S^n_v(u))|(\hati^n_h(u), \hats^n_v(u), \hati^n_v(u)| \D u.
\end{align*}
Using Gronwall's inequality, we have for $T\ge 0$,
\begin{align*}
 \sup_{0\le t \le T} |(\hati^n_h(t), \hats^n_v(t), \hati^n_v(t))| &  \le \left(|(\hati^n_h(0), \hats^n_v(0), \hati^n_v(0))| + \sum_{i=1}^3 \sup_{0\le t \le T}|M^n_i(t)| \right) \\
& \quad \times \exp\left\{\int_0^T(2\beta_h+2\beta_v+\gamma_v+2\gamma_v + 2\beta_v \bar S^n_v(u)) \D u \right\}.
\end{align*}
Noting that $\EE[\sup_{0\le t\le T}|\bar S^n_v(t)-s_v(t)|] \to 0$, and $\sup_{t\ge 0}s_v(t) \le C_0$, we have that for any $M>0,$
\begin{align}\label{relative-compact}
\lim_{M\uparrow\infty}\PP\left(  \sup_{0\le t \le T} |(\hati^n_h(t), \hats^n_v(t), \hati^n_v(t))|> M \right) =0,
\end{align}
which is essentially the condition in Theorem \ref{tightness} (i).
Next for $\delta > 0$,
\begin{align*}
&\sup_{t_1 \le t_2 \le t_1+ \delta}  |(\hati^n_h(t_2), \hats^n_v(t_2), \hati^n_v(t_2)) - (\hati^n_h(t_1), \hats^n_v(t_1), \hati^n_v(t_1))|   \\
& \le \sup_{t_1 \le t_2 \le t_1+ \delta}  \sum_{i=1}^3 |M^n_i(t_2) - M^n_i(t_1)|  \\
& \quad + \sup_{t_1 \le u \le t_1+ \delta} |(\hati^n_h(u), \hats^n_v(u), \hati^n_v(u)|  \int_{t_1}^{t_1+\delta} (2\beta_h+2\beta_v+\gamma_v+2\gamma_v + 2\beta_v \bar S^n_v(u))\D u.
\end{align*}
From \eqref{relative-compact}, we have for any $\epsilon>0,$
\begin{align}\label{jump}
\lim_{\delta\downarrow0}\limsup_{n\to\infty}\PP\left(\sup_{t_1 \le t_2 \le t_1+ \delta}  |(\hati^n_h(t_2), \hats^n_v(t_2), \hati^n_v(t_2)) - (\hati^n_h(t_1), \hats^n_v(t_1), \hati^n_v(t_1))| > \epsilon \right) =0.
\end{align}
The $C$-tightness of $(\hati^n_h, \hats^n_v, \hati^n_v)$ now follows from \eqref{relative-compact} and \eqref{jump}.
Let $(\hati_h, \hats_v, \hati_v)$ be any weak limit, then it satisfies the following integral equations:
\begin{align*}
\hat I_h(t) & = \hat I_h(0) + \int_0^t [\beta_hs_h(u)\hat I_v(u) - (\beta_h i_v(u)+\gamma_h)\hat I_h(u)]\D u + W_1(t), \\
\hat S_v(t) & = \hat S_v(0) + \int_0^t [-\beta_v s_v(u)\hat I_h(u) - \beta_v i_h(u)\hat S_v(u) + \gamma_v \hat I_v(u)]\D u + W_2(t), \\
\hat I_v(t) & = \hat I_v(0) + \int_0^t [\beta_v s_v(u)\hat I_h(u) + \beta_v i_h(u)\hat S_v(u) - \gamma_v \hat I_v(u)]\D u + W_3(t),
 \end{align*}
 where $W_1, W_2,$ and $W_3$ are defined in \eqref{BM-111}--\eqref{BM-113}.
Noting that $s_h, i_h, s_v, i_v$ are all uniformly bounded, from \cite{Oksendal03}, there exists a unique solution to the above integral equations. The theorem follows.
\end{proof}

\subsection{Proofs for Case II}
To prove the theorems (Theorems \ref{lln-2}, \ref{lln-stability-II}, and \ref{fclt-2}) for Case II, we study the generators of the Markov processes $(\bar I^n_h, \bar S^n_v, \bar I^n_v).$
Recall that
\begin{align*}
\bar I^n_h(t) & = \bar I^n_h(0) + \bar N^n_1\left( \beta^n_h\int_0^t \bar I^n_v(u)\bar S^n_h(u)\D u\right) - \bar N^n_2\left(\gamma^n_h\int_0^t \bar I^n_h(u) \D u \right), \\
\bar S^n_v(t) & = \bar S^n_v(0) + \bar N^n_3\left(\gamma^n_v\int_0^t \bar S^n_v(u) + \bar I^n_v(u) \D u\right) - \bar N^n_4\left(\beta^n_v\int_0^t \bar I^n_h(u) \bar S^n_v(u) \D u\right)\\
& \quad  - \bar N^n_5\left(\gamma^n_v\int_0^t \bar S^n_v(u) \D u\right), \\
\bar I^n_v(t) & = \bar I^n_v(0) + \bar N^n_4\left(\beta^n_v\int_0^t \bar I^n_h(u) \bar S^n_v(u) \D u\right) - \bar N^n_6\left(\gamma^n_v\int_0^t \bar I^n_v(u) \D u\right).
\end{align*}
Then for $f\in \mathcal{C}^2_0(\RR^3_+)$, we have
\begin{align*}
& f(\bar I^n_h(t), \bar S^n_v(t), \bar I^n_v(t))  = f(\bar I^n_h(0), \bar S^n_v(0), \bar I^n_v(0)) \\
&  + \int_0^t \left[f(\bar I^n_h(u-)+\frac{1}{n}, \bar S^n_v(u-), \bar I^n_v(u-)) - f(\bar I^n_h(u-), \bar S^n_v(u-), \bar I^n_v(u-)) \right] \D N^n_1\left(n\beta_h^n \int_0^u \bar I_v^n(\tau) \bar S_h^n(\tau) \D \tau\right) \\
& + \int_0^t \left[f(\bar I^n_h(u-)-\frac{1}{n}, \bar S^n_v(u-), \bar I^n_v(u-)) - f(\bar I^n_h(u-), \bar S^n_v(u-), \bar I^n_v(u-)) \right] \D N_2^n\left(n \gamma_h^n \int_0^u \bar I_h^n(\tau) \D \tau \right)\\
&  + \int_0^t \left[f(\bar I^n_h(u-), \bar S^n_v(u-)+\frac{1}{n}, \bar I^n_v(u-)) - f(\bar I^n_h(u-), \bar S^n_v(u-), \bar I^n_v(u-)) \right] \D N^n_3\left(n \gamma^n_v\int_0^u [\bar S^n_v(\tau) + \bar I^n_v(\tau)] \D \tau\right) \\
& + \int_0^t \left[f(\bar I^n_h(u-), \bar S^n_v(u-)-\frac{1}{n}, \bar I^n_v(u-)) - f(\bar I^n_h(u-), \bar S^n_v(u-), \bar I^n_v(u-)) \right] \D N^n_4\left(n \beta^n_v\int_0^u \bar I^n_h(\tau) \bar S^n_v(\tau) \D \tau\right)\\
& + \int_0^t \left[f(\bar I^n_h(u-), \bar S^n_v(u-)-\frac{1}{n}, \bar I^n_v(u-)) - f(\bar I^n_h(u-), \bar S^n_v(u-), \bar I^n_v(u-)) \right] \D N^n_5\left(n\gamma^n_v\int_0^u \bar S^n_v(\tau) \D \tau\right)\\
&  +\int_0^t \left[f(\bar I^n_h(u-), \bar I^n_v(u-)+\frac{1}{n}) - f(\bar I^n_h(u-), \bar I^n_v(u-)) \right] \D N_4^n\left(n \beta_v^n  \int_0^{ u} \bar I_h^n(u) \bar S_v^n(\tau) \D \tau \right)\\
& + \int_0^t \left[f(\bar I^n_h(u-), \bar I^n_v(u-)-\frac{1}{n}) - f(\bar I^n_h(u-), \bar I^n_v(u-)) \right] \D N_6^n \left(n \gamma_v^n \int_0^u \bar I_v^n(\tau) \D \tau \right).
\end{align*}
Define the following generators $\cla^n_1, \cla^n_2, \cla^n_3$ and $\cla^n, \clb^n$ for $f\in \mathcal{C}^2_0(\RR^3_+)$, and $(x_1, x_2, x_3)\in \RR^3_+$.
\begin{align*}
\cla^n_1f(x_1, x_2, x_3) & = n\beta^n_h x_3(1-x_1) \left[f(x_1+\frac{1}{n}, x_2, x_3) - f(x_1,x_2, x_3) \right] \\
& \quad + n\gamma^n_h x_1  \left[f(x_1-\frac{1}{n}, x_2, x_3) - f(x_1,x_2, x_3) \right] \\
& \quad + n\gamma^n_v (x_2+x_3)  \left[f(x_1, x_2+\frac{1}{n}, x_3) - f(x_1,x_2, x_3) \right] \\
& \quad + n\beta_v^n x_1x_2  \left[f(x_1, x_2-\frac{1}{n}, x_3) - f(x_1,x_2, x_3) \right] \\
& \quad + n\gamma_v^n x_2  \left[f(x_1, x_2-\frac{1}{n}, x_3) - f(x_1,x_2, x_3) \right] \\
& \quad + n\beta^n_v x_1x_2  \left[f(x_1, x_2 x_3+\frac{1}{n}) - f(x_1,x_2,x_3) \right]\\
& \quad + n\gamma^n_v x_3 \left[f(x_1, x_2, x_3-\frac{1}{n}) - f(x_1,x_2,x_3) \right],\\
\cla^n_2 f(x_1, x_2, x_3) & = n\beta^n_h x_3(1-x_1) \left[f(x_1+\frac{1}{n}, x_2, x_3) - f(x_1,x_2, x_3) \right] \\
& \quad + n\gamma^n_h x_1  \left[f(x_1-\frac{1}{n}, x_2, x_3) - f(x_1,x_2, x_3) \right], \\
\mathcal{A}^n_3 f(x_1,x_2, x_3) & =  n\gamma^n_v (x_2+x_3)  \left[f(x_1, x_2+\frac{1}{n}, x_3) - f(x_1,x_2, x_3) \right] \\
& \quad + n\beta_v^n x_1x_2  \left[f(x_1, x_2-\frac{1}{n}, x_3) - f(x_1,x_2, x_3) \right] \\
& \quad + n\gamma_v^n x_2  \left[f(x_1, x_2-\frac{1}{n}, x_3) - f(x_1,x_2, x_3) \right] \\
& \quad + n\beta^n_v x_1x_2  \left[f(x_1, x_2, x_3+\frac{1}{n}) - f(x_1,x_2,x_3) \right]\\
& \quad + n\gamma^n_v x_3 \left[f(x_1, x_2, x_3-\frac{1}{n}) - f(x_1,x_2,x_3) \right],
\end{align*}
and
\begin{align}
\cla^n f(x_1, x_2,x_3) & =[\beta_h^n x_3(1-x_1) - \gamma_h^n x_1]\frac{\partial f}{\partial x_1}(x_1, x_2, x_3), \label{generator-1}\\
\clb^n f(x_1, x_2,x_3) & =\frac{1}{\alpha(n)}[\gamma_v^n x_3-\beta_v^n x_1x_2]  \frac{\partial f}{\partial x_2}(x_1, x_2,x_3) \\
& \quad +\frac{1}{\alpha(n)}[\beta_v^n x_1x_2-\gamma_v^n x_3]  \frac{\partial f}{\partial x_3}(x_1, x_2,x_3). \label{generator-2}
\end{align}
Then from Taylor expansion, we have for $f\in C^2_0(\mathbb{R}^3_+)$,
\begin{align}
\cla^n_1 f(x_1, x_2, x_3) &= \cla^n f(x_1, x_2, x_3) + \alpha(n) \clb^n f(x_1, x_2, x_3) + O(n^{-1}) + O(\alpha(n)/n). \label{generator-4} \\
\cla^n_2 f(x_1, x_2, x_3) &= \cla^n f(x_1, x_2, x_3) + O(n^{-1}), \label{generator-3} \\
\cla^n_3 f(x_1, x_2, x_3) & =\alpha(n) \clb^n f(x_1, x_2, x_3) + O(\alpha(n)/n). \label{generator-55}
\end{align}

\begin{proof}[Proof of Theorems \ref{lln-2} and \ref{lln-stability-II}] The main idea is adapted from Theorem 2.1 in \cite{10.1007/BFb0007058}. We first define the following $\sigma$-field: For $t\ge 0,$
\[
\mathcal{F}_t^n = \sigma(N^n_i(s), i=1, \ldots, 6, I^n_h(s), S^n_v(s), I^n_v(s); s\le t).
\]
From \eqref{meanvecsize}, we have that for $t\ge 0,$
\begin{align*}
\EE[\bars^n_v(t)+\bari^n_v(t)] = C_0.
\end{align*}
We next observe that $\bars^n_v + \bari^n_v$ is a $\{\mathcal{F}_t\}$ martingale, and
\[
\EE[\langle \bars^n_v + \bari^n_v \rangle_t] =  \frac{2\gamma^n_v}{n}\int_0^t \EE[\bars^n_v(u) + \bari^n_v(u)] \D u \to 0.
\]
From the martingale central limit theorem, we have that
\begin{align}\label{vector-size-conv}
\bars^n_v + \bari^n_v \Go C_0.
\end{align}
We next establish the relative compactness of $\Gamma^n$. From \eqref{meanvecsize} and the Markov inequality, we have for $t\ge 0$ and $L>0$,
\begin{align*}
\EE[\Gamma^n([0, t]\times [0, L]\times [0, L]) ] & = \int_0^t \PP(\bars^n_v(s) \le L, \bari^n_v(s) \le L) \D s \\
& \ge \int_0^t \PP(\bars^n_v(s) + \bari^n_v(s) \le L) \D s \\
& = t - \int_0^t \PP(\bars^n_v(s) + \bari^n_v(s) > L) \D s \\
& \ge t \left(1- \frac{C_0}{L}\right),
\end{align*}
which implies that for each $t\ge 0$ and $\epsilon>0$, there exists a compact set $K\in\mathcal{B}(\RR^2_+)$ such that
\[
\inf_n\EE[\Gamma^n([0,t]\times K)] \ge (1-\epsilon) t.
\]
Then, from Lemma 1.3 in \cite{10.1007/BFb0007058}, $\{\Gamma^n\}_{n\ge 1}$ is relatively compact. We next show the tightness of $\{\bar I^n_h\}$. For $t\ge 0$, define
\begin{align}\label{martingale-111}
M^n(t) = \bari^n_h(t) -\bari^n_h(0)- \beta^n_h\int_0^t \bari^n_v(u)\bars^n_h(u) \D u+ \gamma^n_h\int_0^t\bari^n_h(u) \D u.
\end{align}
Then $M^n$ is a martingale, and
\[
\langle M^n \rangle_t  = \frac{\beta^n_h}{n} \int_0^t \bari^n_v(u)\bars^n_h(u) \D u+\frac{\gamma^n_h}{n} \int_0^t\bari^n_h(u) \D u \to 0,
\]
which yields that $M^n\Go 0.$ From \eqref{martingale-111}, we have
\[
\bari^n_h(t) =  \bari^n_h(0)+\int_0^t [\beta^n_h \bari^n_v(u)\bars^n_h(u) -  \gamma^n_h\bari^n_h(u)] \D u - M^n(t).
\]
We note that there exists a positive constant $c_1$ such that
\begin{align*}
\int_0^t\EE|\beta^n_h \bari^n_v(u)\bars^n_h(u) -  \gamma^n_h\bari^n_h(u)| \D u \le c_1 t.
\end{align*}
Then both conditions in Theorem \ref{tightness} can be verified easily, and it follows that $\bari^n_h$ is $C$-tight. Thus $\{(\bar I^n_h, \Gamma^n)\}$ is relatively compact. Now, let $(\bar I_h, \bar\Gamma)$ be a limit point along a subsequence $\{n'\}$. Note that for $f\in C^2_0(\RR_+^2)$,
\begin{align*}
&f(\bars^n_v(t), \bar I^n_v(t)) - f(\bars^n_v(0), \bar I^n_v(0)) - \int_0^t \cla_3^n f(\bar I^n_h(u), \bars^n_v(u), \bar I_v^n(u)) \D u \\
& = f(\bars^n_v(t), \bar I^n_v(t)) - f(\bars^n_v(0), \bar I^n_v(0)) - \int_0^t \int_{\RR^2_+} \cla_3^n f(\bar I^n_h(u), y) \Gamma^n(\D u\times \D y ),
\end{align*}
which is a martingale. Dividing it by $\alpha(n')$ and letting $n'\to\infty$, we have
\begin{align}\label{martingale-3}
\int_{[0,t]}\int_{\RR^2_+} \clb f(\bar I_h(u), y) \bar\Gamma(\D u \times \D y)
\end{align}
is a local martingale, where for $f\in \mathcal{C}_0^2(\RR_+^2)$,
\begin{align}\label{generator-5}
\clb f(x_1, x_2, x_3) = [\gamma_v x_3 - \beta_v x_1x_2]\frac{\partial f}{\partial x_2}(x_2, x_3) +[\beta_v x_1x_2-\gamma_v x_3] \frac{\partial f}{\partial x_3}(x_2, x_3).
\end{align}
  From Lemma 1.4 in \cite{10.1007/BFb0007058}, there exists a $\pp(\RR_+^2)$-valued process $\gamma$ such that for any measurable function $l: [0,\infty)\times \RR_+^2 \to \RR$,
\[
\int_{[0,t]}\int_{\RR^2_+} l(s,y) \bar\Gamma(\D s\times \D y) = \int_0^t \int_{\RR^2_+} l(s,y) \gamma_s(\D y) \D s, \ t\ge 0.
\]
Thus from \eqref{martingale-3} for all $t\ge 0$, and $f\in C^2_0(\RR^2_+)$, we have
\begin{align}\label{measure}
\int_{[0,t]}\int_{\RR^2_+} \clb f(\bar I_h(u), y) \bar\Gamma(\D u \times \D y)=\int_0^t \int_{\RR^2_+} \clb f(\bar I_h(u), y) \gamma_u(\D y) \D u = 0,
\end{align}
where the last equality follows from the fact that \eqref{martingale-3} is a the local martingale that is continuous and of bounded variation.
Consequently, we have for a.e. $t$,
\[
\int_{\RR^2_+} \clb f(\bar I_h(t), y) \gamma_t(\D y) =0.
\]
Then, from \eqref{generator-5}, we have for a.e. $t$,
\begin{align}
0 & = \int_{\RR^2_+} \clb f(\bar I_h(t), x) \gamma_t(\D x) \nonumber\\
  &  = \int_{\RR^2_+} [\gamma_v x_3 - \beta_v \bari^n_h(t)x_2]\frac{\partial f}{\partial x_2}(x_2, x_3) +[\beta_v \bari^n_h(t)x_2-\gamma_v x_3] \frac{\partial f}{\partial x_3}(x_2, x_3) \gamma_t(\D x_2\times \D x_3).\label{eqn:4.2}
\end{align}
For a given $x_1\in\RR_+$, let
\[
f(x_2, x_3) =  \gamma_v \frac{x^2_3}{2} -\beta_v x_1\frac{x_2^2}{2}.
\]
Putting the above $f$ into \eqref{eqn:4.2}, we have for any $f\in \mathcal{C}_0^2(\RR_+)$ and a.e. $t$,
\begin{align}\label{vector-conv}
\int_{\RR^2_+} [\beta_v \bari_h(t)x_2-\gamma_v x_3]^2 \gamma_t(\D x_2\times \D x_3) = 0.
\end{align}
Combining \eqref{vector-size-conv} and \eqref{vector-conv}, we have for a.e. $t$,
\begin{align}\label{average}
\gamma_t (\D x_2\times \D x_3) = \delta_{\left(\frac{C_0\gamma_v}{\beta_v\bari_h(t)+\gamma_v}, \ \frac{C_0\beta_v\bari^n_h(t)}{\beta_v\bari_h(t)+\gamma_v}\right)}(\D x_2\times \D x_3).
\end{align}
We next note that for $g \in C^2_0(\RR_+)$,
\begin{align}\label{martingale-2}
g(\bar I^n_h(t)) - g(\bar I^n_h(0)) - \int_{[0,t]}\int_{\RR^2_+} \cla^n g(\bar I^n_h(u), y) \Gamma^n(\D u \times \D x)
\end{align}
is a martingale. Letting $n'\to\infty$, we have that
\begin{align}\label{martingale-2}
g(\bar I_h(t)) - g(\bar I_h(0)) - \int_{[0,t]}\int_{\RR^2_+} \cla g(\bar I_h(u), y) \bar\Gamma(\D u \times \D x)
\end{align}
is a local martingale, where for $(x_1,x_2, x_3)\in \RR^2_+$,
\begin{align*}
\cla g(x_1,x_2, x_3) & =\lim_{n\to\infty}\cla^n g(x_1, x_2, x_3) = [\beta_hx_3 (1-x_1)-\gamma_hx_1]g'(x_1).
\end{align*}
From \eqref{average}, we have that for $g\in C^2_c(\RR_+),$
\begin{align*}
g(\bar I_h(t)) - g(\bar I_h(0)) - \int_0^t \cla g\left(\bar I_u(u),  \frac{C_0\gamma_v }{\beta_v \bar I_h(u) +\gamma_v}, \frac{C_0\beta_v \bar I_h(u)}{\beta_v \bar I_h(u) +\gamma_v}\right) \D u
\end{align*}
is a local martingale. Let $g$ be the identity function, then
\begin{align*}
\bar I_h(t) - \bar I_h(0) - \int_0^t \left( \frac{C_0\beta_h\beta_v \bar I_h(u)(1-\bar I_h(u))}{\beta_v \bar I_h(u) +\gamma_v} - \gamma_h \bar I_h(u) \right)\D u
\end{align*}
is a martingale. Finally, noting that the above maringale is continuous and of bounded variation, it must be $0$. Thus we have shown that
\begin{align*}
\bar I_h(t) = \bar I_h(0) + \int_0^t \left( \frac{C_0\beta_h\beta_v \bar I_h(u)(1-\bar I_h(u))}{\beta_v \bar I_h(u) +\gamma_v} - \gamma_h \bar I_h(u) \right)\D u.
\end{align*}
It can be verified, in the same way as in the proof of Theorem \ref{lln-1}, that the above system has a unique solution. Define
\[
h(x) = \frac{C_0 \beta_h \beta_v x (1-x)}{\beta_v x +\gamma_v} - \gamma_h x, \ x\in [0,1].
 \]
The locally asymptotic stability can be shown using linearization. For the global and exponential asymtotic stability, we can use the Lyapunov function $L(i_h) = i_h$ for $E^e$ and $L(i_h) = i_h-i_h^* - i_h^* \log(i_h/i_h^*)$ for $E^f$.
\end{proof}

\begin{proof}[Proof of Theorem \ref{fclt-2}]
Let $f(x_1, x_2) = x_1$ and $F^n(x_1, x_3) = \beta_h^n x_3(1-x_1) - \gamma_h^n x_1$. We observe that
\begin{align*}
M^n_1(t) := \bar I^n_h(t) - \bar I^n_h(0)-  \int_0^t F^n(\bar I^n_h(u), \bar I^n_v(u))\D u
\end{align*}
is a martingale. Therefore,
\begin{align*}
\hat I^n_h(t) := \sqrt{n} (\bar I^n_h(t) - i_h(t)) = \sqrt{n} \bar I^n_h(0) +\sqrt{n}  M^n_1(t)  + \sqrt{n}  \int_0^t F^n(\bar I^n_h(u), \bar I^n_v(u))\D u - \sqrt{n}  i_h(t).
\end{align*}
 Let
\begin{align*}
F(x_1, x_3) & = \lim_{n\to\infty} F^n(x_1, x_3) = \beta_h x_3(1-x_1) - \gamma_h x_1, \\
F^{*}(x_1)& = F(x_1, C_0\beta_v x_1/(\beta_v x_1 + \gamma_v)), \\
\bar F(x_1, x_3) & = F(x_1, x_3) - F^{*}(x_1)  = \beta_h(1- x_1) \left[x_3- \frac{C_0\beta_v x_1}{\beta_v x_1 + \gamma_v}\right].
\end{align*}
Define
\[
h(x_1, x_3) = -\frac{\beta_h x_3(1-x_1)}{\beta_v x_1 +\gamma_v},
\]
and let $\tilde h^n = h/\alpha(n).$ Then
\begin{align*}
M^n_2(t)&: =  \tilde h^n(\bar I^n_h(t), \bar I^n_v(t)) - \tilde h^n(\bar I^n_h(0), \bar I^n_v(0)) -\int_0^t \cla^n_1 \tilde h^n(\bar I^n_h(u), \bar I^n_v(u)) \D u
\end{align*}
is a martingale. Thus
\begin{align*}
& \int_0^t F(\bar I^n_h(u), \bar I^n_v(u))\D u \\
& = \int_0^t F^{*}(\bar I^n_h(u))\D u+ \int_0^t \bar F(\bar I^n_h(u), \bar I^n_v(u)) \D u \\
& = \int_0^t F^{*}(\bar I^n_h(u))\D u+ \int_0^t \cla_1^n  \tilde h^n(\bar I^n_h(u), \bar I^n_v(u))\D u + \int_0^t [\bar F(\bar I^n_h(u), \bar I^n_v(u)) - \cla_1^n  \tilde h^n(\bar I^n_h(u), \bar I^n_v(u))]\D u\\
& = \int_0^t F^{*}(\bar I^n_h(u))\D u - M^n_2(t) + [\tilde h^n(\bar I^n_h(t), \bar I^n_v(t)) - \tilde h^n(\bar I^n_h(0), \bar I^n_v(0))] \\
& \quad + \int_0^t [\bar F(\bar I^n_h(u), \bar I^n_v(u)) - \cla_1^n \tilde h^n(\bar I^n_h(u), \bar I^n_v(u))]\D u.
\end{align*}
From \eqref{ode-2}, $i_h(t) = \int_0^t F^*(i_h(u)) \D u.$
Hence we have
\begin{align*}
\hat I^n_h(t) 
& = \hat I^n_h(0) +\sqrt{n} M^n_1(t)-  \sqrt{n} M^n_2(t) + \int_0^t \sqrt{n}[F^{n}(\bar I^n_h(u), \bari^n_v(u))-F(\bari_h^n(u), \bari_v^n(u))] \D u \\
& \quad + \int_0^t \sqrt{n}[F^{*}(\bar I^n_h(u))-F^*(i_h(u))] \D u   + \sqrt{n} [\tilde h^n(\bar I^n_h(t), \bar I^n_v(t)) - \tilde h^n(\bar I^n_h(0), \bar I^n_v(0))] \\
& \quad  + \sqrt{n}\int_0^t  [\bar F(\bar I^n_h(u), \bar I^n_v(u)) -\cla_1^n \tilde  h^n(\bar I^n_h(u), \bar I^n_v(u))]\D u.
\end{align*}
Let
\[
\hat M^n_i(t) = \sqrt{n}M^n_i(t), i=1,2.
\]
We next observe that for $t\ge 0$,
\begin{align*}
[\hat M^n_1, \hat M^n_1]_t = \frac{1}{n}\left[ N^n_1\left(n\beta_h^n \int_0^t \bar I_v^n(u) \bar S_h^n(u) \D u\right) + N_2^n\left(n \gamma_h^n \int_0^t  \bar I_h^n(u) \D u \right)\right].
\end{align*}
Therefore, we have from similar arguments made in the proof of Theorem \ref{lln-2},
\[
[\hat M^n_1, \hat M^n_1]_t \to \beta_h\int_0^t i^*_v(u)(1- i_h(u)) \D u + \gamma_h \int_0^t i_h(u)\D u,
\]
where $i_v^*(u) = \frac{C_0\beta_v i_h(u)}{\beta_v i_h(u) +\gamma_v}$ for $u\ge 0,$
and so
\begin{align}
\hat M^n_1 \Go\int_0^t  \sqrt{\beta_hi^*_v(u)(1- i_h(u)) + \gamma_h i_h(u) }\ \D B(u),
\end{align}
where $B$ is a standard Brownian motion.
Now consider $\hat M^n_2.$ We have
\begin{align*}
& [\hat M^n_2, \hat M^n_2]_t  \\
& =n \int_0^t \left[\tilde h^n(\bar I^n_h(u-)+\frac{1}{n}, \bar I^n_v(u-)) - \tilde h^n(\bar I^n_h(u-), \bar I^n_v(u-)) \right]^2 \D N^n_1\left(n\beta_h^n \int_0^u \bar I_v^n(v) \bar S_h^n(v) \D v\right) \\
&\quad + n\int_0^t \left[\tilde h^n(\bar I^n_h(u-)-\frac{1}{n}, \bar I^n_v(u-)) - \tilde h^n(\bar I^n_h(u-), \bar I^n_v(u-)) \right]^2 \D N_2^n\left(n \gamma_h^n \int_0^u  \bar I_h^n(v) \D v \right)\\
& \quad +\frac{1}{n \alpha(n)^2}\int_0^t \left[\frac{\beta_h (1-\bar I^n_h(u))}{\beta_v x_1 +\gamma_v}\right]^2 \D N_4^n\left(n \beta_v^n  \int_0^{ u} \bar I_h^n(v) \bar S_v^n(v) \D v \right)\\
&\quad + \frac{1}{n\alpha(n)^2}\int_0^t \left[\frac{\beta_h (1-\bar I^n_h(u))}{\beta_v x_1 +\gamma_v}\right]^2 \D N_6^n \left(n \gamma_v^n \int_0^u \bar I_v^n(v) \D v \right) \\
& \to 0,
\end{align*}
which implies that
\begin{align}
\hat M^n_2 \Go 0.
\end{align}
Next, using Theorem \ref{lln-2},  we observe that
\begin{align}
\int_0^t \sqrt{n}[F^{n}(\bar I^n_h(u), \bari^n_v(u))-F(\bari_h^n(u), \bari_v^n(u))] \D u \to \int_0^t \hat\beta_h i_v^*(u)(1-i_h(u)) - \hat\gamma_h i_h(u) \D u,
\end{align}
and
\begin{align}
\sqrt{n} [\tilde h^n(\bar I^n_h(t), \bar I^n_v(t)) - \tilde h^n(\bar I^n_h(0), \bar I^n_v(0))]  \to 0, 
\end{align}
and
\begin{align}
& \sqrt{n}\int_0^t  [\bar F(\bar I^n_h(u), \bar I^n_v(u)) -\cla_1^n \tilde  h^n(\bar I^n_h(u), \bar I^n_v(u))]\D u \nonumber\\
& = \sqrt{n}\int_0^t  [\bar F(\bar I^n_h(u), \bar I^n_v(u)) -\cla^n \tilde h^n(\bar I^n_h(u), \bar I^n_v(u)) - \clb^n h(\bar I^n_h(u), \bar I^n_v(u)) - O(n^{-1})]\D u \nonumber\\
& = O(1/\sqrt{n}))  - \frac{\sqrt{n}}{\alpha(n)}\int_0^t \cla^n h(\bar I^n_h(u), \bar I^n_v(u)) \D u \nonumber\\
& \quad + \sqrt{n} \left[\int_0^t \beta_h(1- \bari^n_h(u)) \left[\bari^n_v(u)- \frac{C_0\beta_v \bari^n_h(u)}{\beta_v \bari^n_h(u) + \gamma_v}\right] \right. \nonumber\\
&\quad   +\left.\beta_h(1- \bari^n_h(u))  \frac{C_0\beta_v^n \bari^n_h(u)- (\beta_v^n \bari^n_h(u) +\gamma_v^n)\bari^n_v(u)}{{\alpha(n)}(\beta_v \bari^n_h(u) +\gamma_v)} \D u \right] \nonumber\\
& \to 
 \int_0^t \frac{\beta_h(1- \bari^n_h(u))(C_0 \hat\beta_v i_h(u) - \hat\beta_vi_h(u)i^*_v(u) - \hat\gamma_v i^*_v(u))}{\beta_v i_h(u)+\gamma_v} \D u.
\end{align}
At last, we note that
\begin{align*}
& \int_0^t \sqrt{n}[F^{*}(\bar I^n_h(u))-F^*(i_h(u))] \D u  \\
 =& \int_0^t \frac{G\gamma_v\hati^n_h(u) - H\beta_v\bari^n_h(u)i_h(u)\hati^n_h(u)-H\gamma_v (\bari^n_h(u)+i_h(u))\hati^n_h(u)}{(\beta_v \hati^n_h(u)+\gamma_v)(\beta_v i_h(u)+\gamma_v)} \D u,
\end{align*}
where $G = C_0\beta_h\beta_v - \gamma_h\gamma_v$ and $H = C_0\beta_h\beta_v + \gamma_h\beta_v.$
We can show the $C$-tightness of $\hati^n_h$ following similar steps to those in the proof of Theorem \ref{fclt-1}. Let $\hati_h$ be a weak limit of $\hati^n_h$. Then
\begin{align*}
\hati_h(t) = \hati_h(0) + \int_0^t  \sqrt{\beta_hi^*_v(u)(1- i_h(u)) + \gamma_h i_h(u)} \ \D B(u) + \int_0^t D(i_h(u), \hati_h(u)) \D u,
\end{align*}
where
\begin{align*}
D(i_h(u), \hati_h(u)) & =\hat\beta_h i_v^*(u)(1-i_h(u)) - \hat\gamma_h i_h(u)\\
& \quad + \frac{\beta_h(1- i_h(u))(C_0 \hat\beta_v i_h(u) - \hat\beta_vi_h(u)i^*_v(u) - \hat\gamma_v i^*_v(u))}{\beta_v i_h(u)+\gamma_v}  \\
& \quad +\frac{(G\gamma_v - H\beta_v i_h(u)^2-2H\gamma_v i_h(u))\hati_h(u)}{(\beta_v i_h(u)+\gamma_v)^2}.
\end{align*}

\end{proof}

\subsection{Proof of Lemma \ref{infinite_var}}

We study the eigenvalues and eigenvectors of $C_e$. The right eigenvector corresponding to the eigenvalue $0$ can be taken to be 
\[
\mathbf{v}_{e,0} = (u_1, u_2, u_3)',
\]
where $u_1, u_2, u_3$ are positive and satisfy the linear equations $-(\gamma_h + \beta_h i^e_v) u_1 + \beta_h s^e_h u_3 =0$ and $-\beta_v s^e_v u_1 -\beta_v i^e_h u_2 + \gamma_v u_3 =0$. We next note that the two negative eigenvalues are the same as the eigenvalues of \eqref{fixed-popl}. Denote by $(v_{11}, v_{21})'$ and $(v_{12}, v_{22})'$ the right eigenvectors corresponding to the eigenvalues $\lambda_{e,1}$ and $\lambda_{e,2}$ for \eqref{fixed-popl}. It then follows that $(v_{11}, -v_{21}, v_{21})'$ and $(v_{12}, -v_{22}, v_{22})'$ are the right eigenvectors corresponding to the eigenvalues $\lambda_{e,1}$ and $\lambda_{e,2}$ for $C_e$ in \eqref{rand-popl}. From \eqref{me-app-2}, we have 
\begin{align*}
e^{C_et}= \begin{pmatrix} u_1 & v_{11} & v_{12} \\ u_2 & -v_{21}& -v_{22}\\ u_3 & v_{21} & v_{22}\end{pmatrix} \begin{pmatrix} 1 & 0 & 0 \\ 0 & e^{\lambda_{e,1} t}& 0\\ 0 & 0 & e^{\lambda_{e,2} t} \end{pmatrix}\begin{pmatrix} u_1 & v_{11} & v_{12} \\ u_2 & -v_{21}& -v_{22}\\ u_3 & v_{21} & v_{22}\end{pmatrix}^{-1}. 
\end{align*}
We next write 
\[
\begin{pmatrix} v_{11} & v_{12} \\  v_{21} & v_{22}\end{pmatrix}^{-1} = \begin{pmatrix} a_{11} & a_{12} \\  a_{21} & a_{22}\end{pmatrix}.
\]
It follows that 
\[
\begin{pmatrix} u_1 & v_{11} & v_{12} \\ u_2 & -v_{21}& -v_{22}\\ u_3 & v_{21} & v_{22}\end{pmatrix}^{-1} = \begin{pmatrix} 0 & \frac{1}{u_2+u_3} &  \frac{1}{u_2+u_3} \\ a_{11}&  -\frac{a_{11}u_1+a_{12}u_3}{u_2+u_3} & -\frac{a_{11}u_1-a_{12}u_2}{u_2+u_3}\\ a_{21} & -\frac{a_{21}u_1+a_{22}u_3}{u_2+u_3}& -\frac{a_{21}u_1-a_{22}u_2}{u_2+u_3}\end{pmatrix},
\]
and $e^{C_e t} =$
\begin{align*}
& \begin{pmatrix} 0 & u_1/(u_2+u_3) & u_1/(u_2+u_3) \\ 0  & u_2/(u_2+u_3)& u_2/(u_2+u_3)\\ 0 &  u_3/(u_2+u_3) & u_3/(u_2+u_3)\end{pmatrix} \\
& + \begin{pmatrix} v_{11}a_{11}e^{\lambda_{e,1}t} + v_{12}a_{21}e^{\lambda_{e,2}t} &  v_{11}b_1 e^{\lambda_{e, 1}t} + v_{12} b_2 e^{\lambda_{e,2} t}& v_{11}b_3 e^{\lambda_{e, 1}t} + v_{12} b_4 e^{\lambda_{e,2} t}\\ -v_{21}a_{11}e^{\lambda_{e,1}t} - v_{22}a_{21}e^{\lambda_{e,2}t}  & -v_{21}b_1 e^{\lambda_{e, 1}t} - v_{22} b_2 e^{\lambda_{e,2} t}& -v_{21}b_3 e^{\lambda_{e, 1}t} - v_{22} b_4 e^{\lambda_{e,2} t}\\ v_{21}a_{11}e^{\lambda_{e,1}t} + v_{22}a_{21}e^{\lambda_{e,2}t} &  v_{21}b_1 e^{\lambda_{e, 1}t} + v_{22} b_2 e^{\lambda_{e,2} t} & v_{21}b_3 e^{\lambda_{e, 1}t} + v_{22} b_4 e^{\lambda_{e,2} t}\end{pmatrix},
\end{align*}
where $b_1 = - (u_1a_{11}+a_{12}u_3)/(u_2+u_3), b_2 = -(u_1a_{21}+u_3a_{22})/(u_2+u_3), b_3 = -(u_1a_{11}-u_2a_{12})/(u_2+u_3)$, and $b_4 = -(u_1a_{21}-u_2a_{22})/(u_2+u_3).$
It follows from \eqref{cov-app-1} that the variances of $\hati_h(t), \hats_v(t),$ and $\hati_v(t)$ approach $\infty$ as $t\to\infty.$

\section{Appendix: Notation}\label{table}
\begin{table}[H]
   \centering
   \begin{tabular}{cp{4.3in}}
 Parameter		&	Definition \\
 \hline
  $n$		&	Size of host population.  \vspace{0.05in}\\
 $C_0$	& Multiple by which vector population scale to host population.\\
  \hline\\
  $\beta_h^n$ & Disease transmission rate to hosts from a typical infecious vector.\vspace{0.05in} \\
  $\gamma_h^n$ & Recovery rate of a typical infected host. \vspace{0.05in}\\
   $\beta_h$ &  $\lim_{n\to\infty}\beta^n_h$. \vspace{0.05in}\\
 $\gamma_h$ & $\lim_{n\to\infty}\gamma^n_h$. \vspace{0.05in}\\
 $\hat\beta_h$ & $\lim_{n\to\infty}\sqrt{n}(\beta^n_h-\beta_h)$. \vspace{0.05in}\\
$\hat\gamma_h$ & $\lim_{n\to\infty}\sqrt{n}(\gamma^n_h-\gamma_h)$.\vspace{0.05in}\\
\hline \\
$\beta_v^n$ & Disease transmission rate to vectors from a typical infecious host.\vspace{0.05in}\\
$\gamma_v^n$ & Equal birth and death rate of a vector.\vspace{0.05in}\\
Parameter scale of Case I: & Both hosts and vectors evolve at rate $\mathcal{O}(1)$ as $n\to\infty$.   \vspace{0.05in}\\
    $\beta_v $ & $\lim_{n\to\infty}\beta^n_v$. \vspace{0.05in}\\
$\gamma_v $ & $\lim_{n\to\infty}\gamma^n_v$. \vspace{0.05in}\\
 $\hat\beta_v$ & $ \lim_{n\to\infty}\sqrt{n}(\beta^n_v-\beta_v)$. \vspace{0.05in}\\
    $\hat\gamma_v$ & $\lim_{n\to\infty}\sqrt{n}(\gamma^n_v-\gamma_v)$. \vspace{0.1in}\\
Parameter scale of Case II:& Vectors evolve much faster at rate $\mathcal{O}(\alpha(n))$ and hosts evolve at rate $\mathcal{O}(1)$ as $n\to\infty$, where $\alpha(n)/n\to 0$ and $\alpha(n)/\sqrt{n}\to\infty$ as $n\to\infty.$ \vspace{0.05in}\\
    $\beta_v $ & $\lim_{n\to\infty}\frac{\beta^n_v}{\alpha(n)}$. \vspace{0.05in}\\
$\gamma_v$ & $\lim_{n\to\infty}\frac{\gamma^n_v}{\alpha(n)}$.\vspace{0.05in}\\
 $\hat\beta_v $ & $\lim_{n\to\infty}\sqrt{n}(\frac{\beta^n_v}{\alpha(n)}-\beta_v)$.\vspace{0.05in}\\
    $\hat\gamma_v$ & $\lim_{n\to\infty}\sqrt{n}(\frac{\gamma^n_v}{\alpha(n)}-\gamma_v)$.\\
\hline
   \end{tabular}
   \caption{Parameters}
   \label{tab:booktabs}
\end{table}

\begin{table}[H]
   \centering
   \begin{tabular}{cp{4.3in}}
  Variable/ Process	&	Definition \\
\hline\\
  $S_h^n(t)$ & Number of susceptible hosts at time $t$ when the total host population size is $n$.\vspace{0.05in}\\
  $I_h^n(t)$ & Number of infectious hosts at time $t$ when the total host population size is $n$.\vspace{0.05in}\\
  $S_v^n(t)$ & Number of susceptible vectors at time $t$ when the total host population size is $n$.\vspace{0.05in}\\
  $I_v^n(t)$ & Number of infectious vectors at time $t$ when the total host population size is $n$.\vspace{0.05in}\\
  \hline \\
$\bars_h^n(t) = \frac{S_h^n(t)}{n}$ & Ratio of susceptible hosts at time $t$ to the host population size.\vspace{0.05in}\\
  $\bari_h^n(t)= \frac{I_h^n(t)}{n}$ & Ratio of infectious hosts at time $t$ to the host population size.\vspace{0.05in}\\
  $\bars_v^n(t)= \frac{S_v^n(t)}{n}$ & Ratio of susceptible vectors at time $t$ to the host population size.\vspace{0.05in}\\
  $\bari_v^n(t)= \frac{I_v^n(t)}{n}$ & Ratio of infectious vectors at time $t$ to the host population size.\vspace{0.05in}\\
  $s_h(t)$ & Weak limit of $\bars_h^n(t)$ as the {host population size $n\to\infty$}.\vspace{0.05in}\\
  $i_h^n(t)$ & Weak limit of $\bari_h^n(t)$ as the host population size $n\to\infty$.\vspace{0.05in}\\
  $s_v^n(t)$ & Weak limit of $\bars_v^n(t)$ as the host population size $n\to\infty$.\vspace{0.05in}\\
  $i_v^n(t)$ & Weak limit of $\bari_v^n(t)$ as the host population size$n\to\infty$.\vspace{0.05in}\\
  \hline\\
  $\hats_h^n(t) = \sqrt{n}(\bars_h^n(t) - s_h(t))$ & Scaled deviation of $\bars^n_h(t)$ from its limit $s_h(t)$.\vspace{0.05in}\\
  $\hati_h^n(t)= \sqrt{n}(\bari_h^n(t) - i_h(t))$ & Scaled deviation of $\bari^n_h(t)$ from its limit $i_h(t)$.\vspace{0.05in}\\
  $\hats_v^n(t)= \sqrt{n}(\bars_v^n(t) - s_v(t))$ & Scaled deviation of $\bars^n_v(t)$ from its limit $s_v(t)$.\vspace{0.05in}\\
  $\hati_v^n(t)= \sqrt{n}(\bari_v^n(t) - i_v(t))$ & Scaled deviation of $\bari^n_v(t)$ from its limit $i_v(t)$.\vspace{0.05in}\\
  $\hats_h(t)$ & Weak limit of $\hats_h^n(t)$ as the host population {size} $n\to\infty$.\vspace{0.05in}\\
  $\hati_h(t)$ & Weak limit of $\hati_h^n(t)$ as the host population {size} $n\to\infty$.\vspace{0.05in}\\
  $\hats_v(t)$ & Weak limit of $\hats_v^n(t)$ as the host population {size} $n\to\infty$.\vspace{0.05in}\\
  $\hati_v(t)$ &Weak limit of $\hati_v^n(t)$ as the host population {size} $n\to\infty$.\vspace{0.05in}\\
  \hline\\
  $N_i^n$; $i=1,2,3,4$ & Independent rate-$1$ Poisson processes when the total host population {size} is $n$.\vspace{0.05in} \\
  \hline
   \end{tabular}
   \caption{Processes}
   \label{tab:booktabs}
\end{table}

\end{document}